\documentclass[11pt,reqno]{article}
\usepackage[letterpaper,top=1.5cm, bottom=1.8cm, left=1.8cm, right=1.8cm]{geometry}
\usepackage{amssymb}
\usepackage{amsmath}
\usepackage{amsfonts}
\usepackage{amsthm}
\usepackage{latexsym}
\usepackage[usenames]{color} 
\usepackage{bm}
\usepackage{bbm}
\usepackage[normalem]{ulem}
\usepackage{overpic}
\usepackage{graphicx}
\usepackage{mathrsfs}
 \usepackage{enumerate}
 \usepackage{fmtcount}

\usepackage{exscale}

\usepackage{mathtools}
\usepackage{commath}
\usepackage{hyperref}
\hypersetup{
    colorlinks,
    citecolor=black,
    filecolor=black,
    linkcolor=blue,
    urlcolor=black
}

\newtheorem{theorem}{Theorem}[section]
\newtheorem{proposition}[theorem]{Proposition}
\newtheorem{lemma}[theorem]{Lemma}
\newtheorem{corollary}[theorem]{Corollary}

\theoremstyle{definition}
\newtheorem{definition}[theorem]{Definition}

\newtheorem{remark}[theorem]{Remark}

\DeclareMathOperator{\Id}{Id}
\newcommand{\sums}{\sum_{\sigma \in \left\{+,-\right\}}}
\newcommand{\sumt}{\sum_{\tau \in \left\{+,-\right\}}}

\def\Ind#1{{\mathbbmss 1}_{_{\scriptstyle #1}}}
\def\lra{\longrightarrow}
\def\eps{\varepsilon}
\def\ignore#1{}

\renewcommand{\baselinestretch}{1}\normalsize

\title{High-frequency limit of Nash equilibria in a market impact game \\ with transient price impact}
\author{\normalsize Alexander Schied\footnote{Department of Statistics and Actuarial Science, University of Waterloo, and Department of Mathematics, University of Mannheim, Email: {\tt alex.schied@gmail.com}}\qquad Elias Strehle\setcounter{footnote}{6}\footnote{Department of Mathematics, University of Mannheim, Email: {\tt estrehle@mail.uni-mannheim.de}} \qquad Tao Zhang\setcounter{footnote}{3}\footnote{Department of Mathematics, University of Mannheim, Email: {\tt taozhang.de@gmail.com}\hfill\break
The authors gratefully acknowledge financial support by Deutsche Forschungsgemeinschaft (DFG) through Research Grants SCHI/3-1 and SCHI/3-2. In preliminary form, parts of this paper were published in the third author's doctoral dissertation~\cite{ZhangThesis}.}}
\date{\normalsize First version: September 28, 2015\\
\normalsize This version: May 9, 2017}                                   
\begin{document}
\maketitle
\begin{abstract}
We study the high-frequency limits of strategies and costs in a Nash equilibrium for two agents that are competing to minimize liquidation costs in a discrete-time market impact model with exponentially decaying price impact and quadratic transaction costs of size $\theta\ge0$. We show that, for $\theta=0$, equilibrium strategies and costs will oscillate indefinitely between two accumulation points. For $\theta>0$, however, strategies,  costs, and total transaction costs will converge towards limits that are independent of $\theta$. We then show that the limiting strategies form a Nash equilibrium for a continuous-time version of the model with $\theta$ equal to a certain critical value $\theta^*>0$, and that the corresponding expected costs coincide with the high-frequency limits of the discrete-time equilibrium costs. For $\theta\neq\theta^*$, however, continuous-time Nash equilibria will typically not exist.  Our results permit us to give mathematically rigorous proofs of numerical observations made in~\cite{SchiedZhangHotPotato}. In particular, we provide a range of model parameters for which the limiting expected costs of both agents are decreasing functions of $\theta$. That is, for sufficiently high trading speed, raising additional transaction costs can reduce the expected costs of all agents.
\end{abstract}

\noindent MSC classification: 91A05, 91A10, 91A25, 49N70, 91A60, 91A80, 91G10, 91G80\\
Keywords: Market impact game, transient price impact, Nash equilibrium, high-frequency limit, transaction costs

\section{Introduction}

In this paper, we continue the analysis of a market impact game  started in~\cite{Schoeneborn} and~\cite{SchiedZhangHotPotato}. In this game, two agents compete to minimize liquidation costs in a discrete-time market impact model with transient price impact. Such market impact models are typically used to describe situations in which orders arrive  fast enough to be affected by the reversion of the price impact created by previous trades~\cite{Bouchaudetal,ow,AFS2,ASS}. The observations in~\cite{Schoeneborn} and~\cite{SchiedZhangHotPotato} show that 
the qualitative behavior of Nash equilibria in models with transient impact is drastically different from the one of Nash equilibria in Almgren--Chriss-type models, in which orders are supposed to arrive slow enough to not be affected by any transient price impact component; we refer to~\cite{Carlinetal,SchoenebornSchied,CarmonaYang,SchiedZhangCARA} for analyses of market impact games in the context of the Almgren--Chriss model. A discrete-time market impact game with asymmetric information was analyzed in~\cite{Moallemietal}. Other applications of game theory to issues of market microstructures include~\cite{BressanFacchi1,BressanFacchi2,CarmonaWebster2,Lachapelleetal,GaydukNadtochiy}. General background on market impact models and the corresponding optimization problems can be found in the book~\cite{CarteaJaimungalPenalva}, the two surveys~\cite{GatheralSchiedSurvey,Lehalle}, and the references therein.

Specifically, it was shown in~\cite{Schoeneborn} and~\cite{SchiedZhangHotPotato} that the equilibrium strategies of both agents typically oscillate  between buy and sell trades, a behavior that is reminiscent of the \lq\lq hot-potato game" during the 2010 flash crash~\cite{SEC}. This behavior    can be interpreted as a protection against predatory trading by the competitor. Oscillations of trading strategies can of course be  dampened by adding additional transaction costs for each trade.  To study the effects of transaction costs on equilibrium strategies, \cite{SchiedZhangHotPotato} introduced quadratic transactions costs whose size is controlled by a parameter $\theta\ge0$. Varying  $\theta$ leads to a number of surprising effects. For instance, it was shown in    \cite[Theorem 2.7]{SchiedZhangHotPotato} that there exists an explicitly given critical value $\theta^*>0$ such that the equilibrium strategies show at least some oscillations for $\theta<\theta^*$, whereas all oscillations disappear for $\theta\ge\theta^*$.

The most surprising observations, however, concern the behavior of the expected liquidation costs. First, by means of numerical computations, it was illustrated  in~\cite{SchiedZhangHotPotato} that the expected costs of both agents can in some regimes be \emph{decreasing} functions of the size of transaction costs. That is, all market participants can be better off on average if additional transaction costs are imposed. Second, numerical simulations show that, for small $\theta$, 
expected costs can be (essentially) \emph{increasing} functions of the trading frequency, although a higher frequency means that agents can draw from a larger class of strategies and  thus should in principle be able to apply more cost-efficient strategies. This picture changes, however,  if transaction costs are increased. If $\theta$ is sufficiently large (e.g., if $\theta\ge\theta^*$) the expected costs  become decreasing functions of the trading frequency. 

The two phenomena described in the preceding paragraph can again be explained by the need of protecting against predatory trading. First, an increase of transaction costs discourages predatory trading so that  both traders need to take fewer precautions  and thus can use more cost-efficient strategies. The corresponding savings can  exceed the extra expenses in  transaction costs and thus lead to an overall reduction of costs for all market participants. Second, an increase of the trading frequency also increases the opportunities for predatory trading so that additional protective measures need to be taken if transaction costs are small. The cost of these measures can outweigh the benefit of higher trading frequency, thus leading to an increase of the expected costs. The situation changes if predatory trading has been sufficiently discouraged through higher transaction costs. In this case we observe the usual decrease of the expected costs as a function of the trading frequency.

One of the main goals of the present paper is to provide mathematical proof and justification for the numerical observations made in~\cite{SchiedZhangHotPotato}. Our main corresponding result will be Corollary~\ref{cost comparison cor}, which provides a range of model parameters that are sufficient for the expected costs of both agents to be decreased by raising additional  transaction costs.

The proof of Corollary~\ref{cost comparison cor} is based on  a thorough analysis of the behavior of equilibrium strategies and expected costs in the high-frequency limit, which is also interesting in its own right and exhibits a number of surprising features of our market impact game. In Theorem~\ref{limit strategies kappa thm} we will study the asymptotics of the accumulate equilibrium strategies. We show that, for $\theta=0$, these strategies oscillate indefinitely between two limiting curves, which we identify explicitly. For all $\theta>0$, however, the strategies converge to continuous-time limits that are given explicitly and are independent of $\theta$. This result will also be needed as input for the proof of 
Theorem~\ref{costs asymptotics thm}, which deals with the asymptotics of the expected costs in equilibrium. Theorem~\ref{costs asymptotics thm} states that, for $\theta=0$, the expected costs oscillate indefinitely and in the limit have exactly two distinct accumulation points, which are given in closed form. Again, the picture is different for $\theta>0$. In this case, the expected costs converge to an explicit limit that is independent of $\theta$. A comparison between this limit and the two  accumulation points for $\theta=0$ then yields the above-mentioned Corollary~\ref{cost comparison cor}. 

The convergence of the equilibrium strategies and costs raises the question whether the limiting strategies and costs can be associated with a continuous-time version of our market impact game. A corresponding continuous-time setup is provided in Section~\ref{high-frequency limit section}
 by drawing on the existing  literature for continuous-time market impact models with transient price impact~\cite{Gatheral,GSS,PredoiuShaikhetShreve,LorenzSchied, AS12, AlfonsiBlanc}. Theorem~\ref{main th continuous}, our corresponding main result, states that, for $\theta$ equal to the above-mentioned critical value $\theta^*$, there exists a unique Nash equilibrium that consists exactly of the limiting strategies found in the high-frequency limit of our discrete-time market impact game. Moreover, the expected costs of that equilibrium are equal to the high-frequency  limits of the discrete-time costs (Corollary~\ref{cont costs cor}). However, Theorem~\ref{main th continuous} states also that for $\theta\neq\theta^*$ no Nash equilibrium exists if at least one agent holds nontrivial inventory.  Preliminary versions of the results from Section~\ref{high-frequency limit section} and their proofs were stated earlier in the third author's doctoral dissertation~\cite{ZhangThesis}.
 
 To sum up, we show that, for $\theta=0$, equilibrium strategies and costs will oscillate indefinitely and not converge to any limit. For $\theta>0$, however, both strategies and costs will converge towards limits that are independent of $\theta$. These limits also appear in the unique continuous-time Nash equilibrium for $\theta=\theta^*$, which is essentially the only Nash equilibrium that exists in continuous time. 

This paper is organized as follows. In Section~\ref{prelim section} we recall the setup and the main results from~\cite{SchiedZhangHotPotato} in the form in which they are needed here. In Section~\ref{high-frequency limit section}
we state our asymptotic results on the high-frequency limits of equilibrium strategies and costs. Continuous-time Nash equilibria are discussed in Section~\ref{Continuous section}. Section~\ref{closed form section} prepares for the proofs of our main results by algebraically establishing explicit formulas for the discrete-time equilibrium strategies; the corresponding main result is  Theorem~\ref{omega and nu closed form thm}. 
The  sections in the appendix contain the proofs of our main results.

\section{Preliminaries}\label{prelim section}

Let us briefly recall the setup and the main results of~\cite{SchiedZhangHotPotato} in the special form in which they will be need here (the setup of~\cite{SchiedZhangHotPotato}  is more general). 
We consider two financial agents  who are active in a  market impact model for one risky asset. Price impact will be transient and modeled  by means of the exponential decay kernel,
$$G(t)=\lambda e^{-\rho t},\qquad t\ge0,
$$
where $\rho>0$. There is no loss of generality in taking $\lambda=1$ for the remainder of this paper, as all other quantities can be scaled accordingly. Transient price impact was proposed in~\cite{Bouchaudetal}, and the first analysis of a model with exponential decay of price impact was given in 
\cite{ow}. See, e.g.,~\cite{AFS2,Gatheral,ASS,GSS,PredoiuShaikhetShreve,LorenzSchied, AS12, AlfonsiBlanc, Donier} for further analyses and extensions  of this model, which is sometimes called a propagator model.

If none of the two agents is active, asset prices are described by a right-continuous martingale $S^0=(S^0_t)_{t\ge0}$ on a filtered probability space $(\Omega,(\mathscr{F}_t)_{t\ge0},\mathscr{F},\mathbb{P})$, for which $(\mathscr{F}_t)_{t\ge0}$ is right-continuous and  $\mathscr{F}_0$ is $\mathbb{P}$-trivial. The process $S^0$ is often called the unaffected price process. 
Trading takes place at the discrete trading times of a \emph{time grid} $\mathbb{T} =\{t_0,t_1,\dots, t_N\}$, where $0=t_0<t_1<\cdots<t_N= T$. Both agents are assumed to use trading strategies that are admissible in the following sense:

\begin{definition}An \emph{admissible trading strategy} for $\mathbb{T}$ and $z\in\mathbb{R}$ is a vector $\bm  \zeta=(\zeta_0,\dots,\zeta_N)$ of random variables such that 
 each $\zeta_i$ is $\mathscr{F}_{t_i}$-measurable and bounded, and 
 $\displaystyle z=\zeta_0+\cdots+\zeta_N$ $\mathbb{P}$-a.s.
The set of all admissible strategies for given $\mathbb{T}$ and $z$ is denoted by $\mathscr{X}(z,\mathbb{T})$.
\end{definition}

For $\bm \zeta\in\mathscr{X}(z,\mathbb{T})$, the value of $\zeta_i$ is taken as the number of shares traded  at time $t_i$, with a positive sign indicating a sell order and a negative sign indicating a purchase. 
Thus, $z$ is the inventory of the agent at time $0=t_0$,  and by time $t_N=T$  the agent must have a zero inventory. The assumption that each $\zeta_i$ is bounded can  be made without loss of generality from an economic point of view. 

If the two agents  apply the respective strategies $\bm \xi\in\mathscr{X}(x,\mathbb{T})$ and $\bm \eta\in\mathscr{X}(y,\mathbb{T})$,  the asset price is given by 
$$
S^{\bm \xi,\bm \eta}_t=S^0_t-\sum_{t_k<t}e^{-\rho(t-t_k)}(\xi_k+\eta_k).
$$
Based on this definition of an impacted price process, one can derive the following definition of transaction costs for each of the two agents.

\begin{definition}\label{discrete strategies cost def}Suppose that $\mathbb{T} =\{t_0,t_1,\dots, t_N\}$,  $x$ and $y$ are given.  
Let furthermore $\theta\ge0$ and $(\eps_i)_{i=0,1,\dots}$ be an i.i.d.~sequence of Bernoulli\,$(\frac12)$-distributed random variables that are independent of $\sigma(\bigcup_{t\ge0}\mathscr{F}_t)$. Then the \emph{costs of $\bm \xi\in\mathscr{X}(x,\mathbb{T})$ given $\bm \eta\in\mathscr{X}(y,\mathbb{T})$} are defined as
\begin{equation}\label{costs xi def eq}
\mathscr{C}_\mathbb{T}(\bm \xi|\bm \eta)=xS_0^0+\sum_{k=0}^N\Big(\frac{G(0)}2\xi_k^2-S_{t_k}^{\bm \xi,\bm \eta}\xi_k+\eps_k{G(0)}\xi_k\eta_k+\theta\xi_k^2\Big)
\end{equation}
and the \emph{costs of $\bm \eta$ given $\bm \xi$} are 
\begin{equation}\label{costs eta def eq}
\mathscr{C}_\mathbb{T}(\bm \eta|\bm \xi)=yS_0^0+\sum_{k=0}^N\Big(\frac{G(0)}2\eta_k^2-S_{t_k}^{\bm \xi,\bm \eta}\eta_k+(1-\eps_k){G(0)}\xi_k\eta_k+\theta\eta_k^2\Big).
\end{equation}
\end{definition}

In the preceding definition, $\varepsilon_k$ is a random variable that determines whether the first or the second agent's order is executed first, when both agents place orders at the same instance. In addition to the endogenous liquidation costs, each trade $\zeta_k$ also incurs quadratic transaction costs  of the form $\theta\zeta^2_k$, whose size is determined by the parameter $\theta\ge0$. Such quadratic transaction costs are often used to model   \lq\lq slippage" arising from temporary price impact; see  \cite{BertsimasLo,AlmgrenChriss2} and \cite[Section 2.2]{Gatheral}. Nevertheless, proportional transaction costs might be more realistic in many situations, and so the question arises to what extend  results will change when the quadratic transaction costs $\theta\xi_k^2$ are replaced by (piecewise) linear transaction costs. This question will be discussed in Remark~\ref{transaction costs remark} below.

As usual, for a given time grid $\mathbb{T}$ and initial values $x$, $y\in\mathbb{R}$, a \emph{Nash equilibrium} is  a pair $(\bm \xi^*,\bm \eta^*)$ of strategies in $\mathscr{X}(x,\mathbb{T})\times\mathscr{X}(y,\mathbb{T})$ such that 
$$\mathbb{E}[\,\mathscr{C}_\mathbb{T}(\bm \xi^*|\bm \eta^*)\,]=\min_{\bm \xi\in\mathscr{X}(X_0,\mathbb{T})}\mathbb{E}[\,\mathscr{C}_\mathbb{T}(\bm \xi|\bm \eta^*)\,]\qquad\text{and}\qquad \mathbb{E}[\,\mathscr{C}_\mathbb{T}(\bm \eta^*|\bm \xi^*)\,]=\min_{\bm \eta\in\mathscr{X}(Y_0,\mathbb{T})}\mathbb{E}[\,\mathscr{C}_\mathbb{T}(\bm \eta|\bm \xi^*)\,].
$$
This definition assumes implicitly that each agent has full knowledge of the other agent's initial inventory ($x$ or $y$, respectively) and strategy ($\bm\xi^*$ or $\bm\eta^*$, respectively).

The existence of  a unique Nash equilibrium in the class of deterministic strategies and with $\theta=0$ was established in    \cite[Theorem 9.1]{Schoeneborn}. This result was then extended in~\cite{SchiedZhangHotPotato} to general decay kernels,  adapted strategies,  and arbitrary $\theta\ge0$. Even more importantly,  an explicit formula for the equilibrium strategies was given that allows for the qualitative analysis of equilibrium strategies and costs, which will be continued in the present paper. We now recall this  existence result from~\cite{SchiedZhangHotPotato} in the specific setting of exponential decay of price impact and for the equidistant time grids,
\begin{equation*}
\mathbb{T}_N=\Big\{\frac{kT}N\,\Big|\,k=0,1,\dots,N\Big\},\qquad N=2,3,\dots
\end{equation*}
To this end, we fix $N$ and we define the lower triangular $\left(N+1\right)\times\left(N+1\right)$-matrix $\tilde{\Gamma}$ by
\begin{align*}
	(\left.\tilde{\Gamma}\right.)_{ij} &= \begin{cases} 0 &\text{ if } i<j,\\ 1/2 &\text{ if } i=j,\\ e^{-\rho({i-j})T/N} &\text{ if } i>j.\end{cases}
\end{align*}
We then denote the transposition of matrices and vectors with the symbol $\top$, and let
$$\Gamma := \tilde{\Gamma} + \tilde{\Gamma}^\top $$
be the so-called Kac--Murdock--Szeg\H o matrix. 
Let $\Id$ denote the $\left(N+1\right)\times\left(N+1\right)$-identity matrix and $\mathbf{1}$ the $\left(N+1\right)$-column vector that contains only ones.
Define the $\left(N+1\right)$-column vectors 
$${\bm\nu} := (\left.{\Gamma}+\tilde{\Gamma}+2\theta\Id\right.)^{-1}\mathbf{1}\qquad\text{and}\qquad {\bm\omega} := (\left.{\Gamma}-\tilde{\Gamma}+2\theta\Id\right.)^{-1}\mathbf{1};$$
 both vectors are well-defined by \cite[Lemma 3.2]{SchiedZhangHotPotato}. In Theorem~\ref{omega and nu closed form thm} we will derive closed-form representations for both $\bm\nu$ and $\bm\omega$. Let furthermore 
\begin{equation}\label{v and w}
\bm v:=\frac1{\bm1^\top\bm\nu}\bm\nu\qquad\text{and}\qquad 
\bm w:=\frac1{\bm1^\top\bm\omega}\bm\omega
\end{equation}
denote the corresponding normalized vectors; again, the denominators are nonzero by \cite[Lemma 3.2]{SchiedZhangHotPotato}.

\begin{theorem}[Theorem 2.5 in~\cite{SchiedZhangHotPotato}]\label{th1}For any  $\rho>0$, $\theta\ge0$, time grid $\mathbb{T}$, and initial values $x$, $y\in\mathbb{R}$, there exists a unique  Nash equilibrium $(\bm \xi^*,\bm \eta^*)\in\mathscr{X}(x,\mathbb{T})\times\mathscr{X}(y,\mathbb{T})$. The optimal strategies $\bm \xi^*$ and $\bm \eta^*$ are deterministic and given by
$$
\bm \xi^*=\frac12(x+y)\bm v+\frac12(x-y)\bm w\qquad\text{and}\qquad
\bm \eta^*=\frac12(x+y)\bm v-\frac12(x-y)\bm w.
$$
\end{theorem}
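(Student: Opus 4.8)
The plan is to reduce each agent's minimization to a finite-dimensional convex quadratic program. First I would substitute the impacted price $S^{\bm\xi,\bm\eta}_{t_k}$ into~\eqref{costs xi def eq} and pass to expectations. Writing $X_k:=\xi_k+\cdots+\xi_N$ for the inventory carried into $t_k$, which is $\mathscr{F}_{t_{k-1}}$-measurable, Abel summation rewrites $\sum_{k}S^0_{t_k}\xi_k=xS^0_0+\sum_{k\ge1}(S^0_{t_k}-S^0_{t_{k-1}})X_k$, so the martingale property of $S^0$ makes its expectation equal to $xS^0_0$, which cancels the leading term of~\eqref{costs xi def eq}. Using $G(0)=1$, $\mathbb{E}[\eps_k]=\tfrac12$, and the independence of $(\eps_k)$ from $\sigma(\bigcup_{t\ge0}\mathscr{F}_t)$, and recognising that $\tilde\Gamma$ encodes precisely the remaining discrete convolution against the decay kernel, I would arrive at
$$\mathbb{E}\big[\mathscr{C}_\mathbb{T}(\bm\xi\,|\,\bm\eta)\big]=\tfrac12\,\mathbb{E}\big[\bm\xi^\top(\Gamma+2\theta\Id)\bm\xi\big]+\mathbb{E}\big[\bm\xi^\top\tilde\Gamma\bm\eta\big],$$
and, since the price impact depends only on $\bm\xi+\bm\eta$, the companion identity for $\mathbb{E}[\mathscr{C}_\mathbb{T}(\bm\eta\,|\,\bm\xi)]$ obtained by interchanging $\bm\xi$ and $\bm\eta$.

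For existence I would construct a deterministic equilibrium. Write $A:=\Gamma+2\theta\Id$, which is positive definite by~\cite[Lemma 3.2]{SchiedZhangHotPotato}. Against a deterministic opponent $\bm\eta$, agent~$1$'s cost is $\tfrac12\mathbb{E}[\bm\xi^\top A\bm\xi]+\mathbb{E}[\bm\xi]^\top\tilde\Gamma\bm\eta$; replacing $\bm\xi$ by its mean $\mathbb{E}[\bm\xi]\in\mathscr{X}(x,\mathbb{T})$ does not increase this cost, so the minimum is attained at a deterministic strategy, and strict convexity then characterises it as the unique solution of $A\bm\xi+\tilde\Gamma\bm\eta=\mu\mathbf{1}$ under $\mathbf{1}^\top\bm\xi=x$. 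Imposing the analogous condition for agent~$2$, a deterministic Nash equilibrium solves $A\bm\xi^*+\tilde\Gamma\bm\eta^*=\mu\mathbf{1}$, $A\bm\eta^*+\tilde\Gamma\bm\xi^*=\mu'\mathbf{1}$, $\mathbf{1}^\top\bm\xi^*=x$, $\mathbf{1}^\top\bm\eta^*=y$. Adding and subtracting the two matrix equations, $\bm\xi^*+\bm\eta^*$ solves $(\Gamma+\tilde\Gamma+2\theta\Id)(\bm\xi^*+\bm\eta^*)=(\mu+\mu')\mathbf{1}$ and $\bm\xi^*-\bm\eta^*$ solves $(\Gamma-\tilde\Gamma+2\theta\Id)(\bm\xi^*-\bm\eta^*)=(\mu-\mu')\mathbf{1}$; both coefficient matrices are invertible by~\cite[Lemma 3.2]{SchiedZhangHotPotato}, so $\bm\xi^*+\bm\eta^*$ is a multiple of $\bm\nu$ and $\bm\xi^*-\bm\eta^*$ a multiple of $\bm\omega$, and the scalar conditions $\mathbf{1}^\top(\bm\xi^*+\bm\eta^*)=x+y$, $\mathbf{1}^\top(\bm\xi^*-\bm\eta^*)=x-y$ together with~\eqref{v and w} force $\bm\xi^*+\bm\eta^*=(x+y)\bm v$ and $\bm\xi^*-\bm\eta^*=(x-y)\bm w$. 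Solving for $\bm\xi^*$ and $\bm\eta^*$ yields the stated formulas, and a short computation using that $(\Gamma+\tilde\Gamma+2\theta\Id)\bm v$ and $(\Gamma-\tilde\Gamma+2\theta\Id)\bm w$ are multiples of $\mathbf{1}$ confirms that this pair does satisfy both Lagrange conditions and constraints, hence is a Nash equilibrium even within the full class of adapted strategies.

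For uniqueness I would show that \emph{every} Nash equilibrium is deterministic. Given an equilibrium $(\bm\xi,\bm\eta)$, split $\bm\xi=\bar{\bm\xi}+\tilde{\bm\xi}$ and $\bm\eta=\bar{\bm\eta}+\tilde{\bm\eta}$ into means and mean-zero adapted remainders; then agent~$1$'s cost separates as $[\tfrac12\bar{\bm\xi}^\top A\bar{\bm\xi}+\bar{\bm\xi}^\top\tilde\Gamma\bar{\bm\eta}]+G_1(\tilde{\bm\xi})$ with $G_1(\tilde{\bm\xi}):=\tfrac12\mathbb{E}[\tilde{\bm\xi}^\top A\tilde{\bm\xi}]+\mathbb{E}[\tilde{\bm\xi}^\top\tilde\Gamma\tilde{\bm\eta}]$, and the constraint separates into $\mathbf{1}^\top\bar{\bm\xi}=x$ and $\mathbf{1}^\top\tilde{\bm\xi}=0$; the equilibrium property then forces $\tilde{\bm\xi}$ to minimise $G_1$ over mean-zero adapted strategies, so $G_1(\tilde{\bm\xi})\le G_1(0)=0$, and symmetrically $G_2(\tilde{\bm\eta})\le0$. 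Adding the two inequalities and using $\tilde\Gamma+\tilde\Gamma^\top=\Gamma$ collapses the sum to $\tfrac12\mathbb{E}[(\tilde{\bm\xi}+\tilde{\bm\eta})^\top\Gamma(\tilde{\bm\xi}+\tilde{\bm\eta})]+\theta\,\mathbb{E}[|\tilde{\bm\xi}|^2+|\tilde{\bm\eta}|^2]\le0$. Positive definiteness of $\Gamma$ forces $\tilde{\bm\xi}+\tilde{\bm\eta}=0$ a.s.\ (and already $\tilde{\bm\xi}=\tilde{\bm\eta}=0$ if $\theta>0$); substituting $\tilde{\bm\eta}=-\tilde{\bm\xi}$ back into $G_1$ gives $G_1(\tilde{\bm\xi})=0=G_1(0)$, and the strict convexity of $G_1$ (again from $A\succ0$) pins down $\tilde{\bm\xi}=0$ in all cases. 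Hence every equilibrium is deterministic, and the uniqueness of the solution of the coupled Lagrange system identifies it with the pair constructed above.

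I expect the reduction to deterministic strategies to be the main obstacle: it is the only point where the adaptedness of the strategies genuinely enters, and it relies on combining the martingale property of $S^0$ with the positive definiteness of the Kac--Murdock--Szeg\H o matrix $\Gamma$. The borderline case $\theta=0$ needs the extra strict-convexity step, because there the summed quadratic inequality controls only $\tilde{\bm\xi}+\tilde{\bm\eta}$ and not $\tilde{\bm\xi}$, $\tilde{\bm\eta}$ separately.
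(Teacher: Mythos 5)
Your proposal is correct, but note first that the paper does not actually prove Theorem~\ref{th1}: it is imported verbatim from \cite[Theorem 2.5]{SchiedZhangHotPotato}, so the only in-paper point of comparison is the continuous-time analogue in Section~\ref{Continuous section}. Your existence argument --- reduce the expected cost to $\tfrac12\mathbb{E}[\bm\xi^\top(\Gamma+2\theta\Id)\bm\xi]+\mathbb{E}[\bm\xi^\top\tilde\Gamma\bm\eta]$ via Abel summation and the martingale property of $S^0$, pass to deterministic strategies by Jensen, and decouple the two coupled Lagrange systems into the $\bm\nu$- and $\bm\omega$-equations whose solvability is \cite[Lemma 3.2]{SchiedZhangHotPotato} --- is essentially the route taken in the source and mirrored here by Lemma~\ref{det is adp lm} and Proposition~\ref{deterministic fredholm prop}. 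Where you genuinely diverge is uniqueness. The authors (Proposition~\ref{unique prop}, following Lemma 3.3 of the source) interpolate between two putative equilibria, form the four-term functional $f(\alpha)$, and derive a contradiction between $f'(0+)\ge 0$ and an explicitly negative value of that derivative. You instead split each strategy into its mean and a mean-zero adapted fluctuation, use that both the constraint set and the cost factor across this splitting, and sum the two agents' optimality gaps to obtain
$\tfrac12\mathbb{E}\bigl[(\tilde{\bm\xi}+\tilde{\bm\eta})^\top\Gamma(\tilde{\bm\xi}+\tilde{\bm\eta})\bigr]+\theta\,\mathbb{E}\bigl[|\tilde{\bm\xi}|^2+|\tilde{\bm\eta}|^2\bigr]\le 0$,
which kills the fluctuations; your extra strict-convexity step for $\theta=0$ is indeed needed, since there the summed inequality controls only $\tilde{\bm\xi}+\tilde{\bm\eta}$. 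Your route yields a stronger intermediate statement (every adapted equilibrium is deterministic, not merely that at most one equilibrium exists) and avoids the interpolation computation, at the price of leaning on the exact quadratic mean/fluctuation orthogonality; the interpolation argument is more robust and is the one that transfers unchanged to the continuous-time setting of Section~\ref{Continuous section}, where no such clean orthogonal decomposition of the admissible set is available.
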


It was observed in~\cite{Schoeneborn} and~\cite{SchiedZhangHotPotato} that, for vanishing transaction costs (i.e., $\theta=0$), the equilibrium strategies may display strong oscillations with alternating buy and sell trades. These oscillations can be interpreted as a protection against predatory trading by the competitor~\cite{Schoeneborn,SchiedZhangHotPotato}. When transaction costs increase, these oscillations are dampened and they finally disappear when transaction costs reach a specific critical level $\theta^*$. This is the content of the following result from~\cite{SchiedZhangHotPotato}  for the specific case of exponential decay of price impact.

\begin{theorem}[Theorem 2.7 in~\cite{SchiedZhangHotPotato}]\label{nonosc thm}The following conditions are equivalent.
\begin{enumerate}
\item For every $N\in\mathbb{N} $ and $\rho>0$, all components of $\bm v$ are nonnegative.
\item For every $N\in\mathbb{N}  $ and $\rho>0$, all components of $\bm w$ are nonnegative.
\item $\theta\ge\theta^*=1/4$.
\end{enumerate}
\end{theorem}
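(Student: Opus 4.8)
Since $\bm v$ and $\bm w$ are obtained from $\bm\nu$ and $\bm\omega$ by dividing by the scalars $\mathbf1^\top\bm\nu$ and $\mathbf1^\top\bm\omega$, which are strictly positive (they are nonzero by \cite[Lemma 3.2]{SchiedZhangHotPotato}, and positivity follows from the positive definiteness of the symmetric parts $\tfrac32\Gamma+2\theta\Id$ and $\tfrac12\Gamma+2\theta\Id$ of $\Gamma+\tilde\Gamma+2\theta\Id$ and $\Gamma-\tilde\Gamma+2\theta\Id$), conditions (a) and (b) are equivalent to the nonnegativity of all components of $\bm\nu$ and of $\bm\omega$, respectively. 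The common tool is the factorization
$$\tilde\Gamma=\tfrac12(\Id-aE)^{-1}(\Id+aE),\qquad a:=e^{-\rho T/N}\in(0,1),$$
where $E$ is the nilpotent $(N+1)\times(N+1)$ matrix with ones on the first subdiagonal; it follows by summing $\tilde\Gamma=\tfrac12\Id+\sum_{k\ge1}a^kE^k$, and it is also the mechanism behind the closed-form expressions for $\bm\nu$ and $\bm\omega$ established in Theorem~\ref{omega and nu closed form thm}.

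For $\bm\omega$, note first that $\Gamma-\tilde\Gamma=\tilde\Gamma^\top$, so $\bm\omega$ solves the upper-triangular system $(\tilde\Gamma^\top+2\theta\Id)\bm\omega=\mathbf1$. Right-multiplying by $\Id-aE^\top$ and using the factorization gives $(\tilde\Gamma^\top+2\theta\Id)(\Id-aE^\top)=p\,\Id+(\tfrac12-2\theta)a\,E^\top$ with $p:=\tfrac12+2\theta$, hence $\bm\omega=(\Id-aE^\top)\bm y$, where $\bm y$ satisfies the scalar recursion $p\,y_i+(\tfrac12-2\theta)a\,y_{i+1}=1$ for $i<N$ and $p\,y_N=1$. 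Solving it yields $\omega_N=\tfrac1p>0$, $\omega_i=\tfrac1p(1-a\,y_{i+1})$ for $i<N$, and in particular that $\omega_{N-1}=\tfrac1p(1-a/p)$ has the sign of $p-a=\tfrac12+2\theta-a$. Both implications then follow at once: if $\theta\ge\tfrac14$ the recursion has ratio $s:=-(\tfrac12-2\theta)a/p\in[0,1)$, so $0\le y_{i+1}=\tfrac1p\sum_{j=0}^{N-i-1}s^j\le\tfrac1{p(1-s)}=\tfrac1{p+(\frac12-2\theta)a}\le\tfrac1a$ (the last inequality being equivalent to $a\le1$), whence $\omega_i\ge0$ for every $N$ and $\rho$; while if $\theta<\tfrac14$ then $\tfrac12+2\theta<1$, so taking $\rho$ small enough (or $N$ large enough) that $a>\tfrac12+2\theta$ makes $\omega_{N-1}<0$. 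This proves (b)$\,\Leftrightarrow\,$(c).

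For $\bm\nu$ the argument runs along the same lines, but a two-sided conjugation is needed because $\Gamma+\tilde\Gamma=2\tilde\Gamma+\tilde\Gamma^\top$ is not triangular. Using the factorization together with $EE^\top=\Id-\mathbf{e}_0\mathbf{e}_0^\top$ one computes
$$(\Id-aE)(\Gamma+\tilde\Gamma+2\theta\Id)(\Id-aE^\top)=d\,\Id+(c-d)\,\mathbf{e}_0\mathbf{e}_0^\top+(\tfrac12-2\theta)a\,E-(\tfrac12+2\theta)a\,E^\top,$$
with $c=\tfrac32+2\theta$ and $d=\tfrac32(1-a^2)+2\theta(1+a^2)$: a tridiagonal matrix with a rank-one correction in its top-left entry. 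Therefore $\bm\nu=(\Id-aE^\top)\bm z$, where $\bm z$ solves this perturbed tridiagonal system with right-hand side $(\Id-aE)\mathbf1=(1,1-a,\dots,1-a)^\top$. Solving it explicitly — e.g.\ via the roots of $(\tfrac12+2\theta)a\,r^2-d\,r-(\tfrac12-2\theta)a=0$, whose discriminant $d^2+a^2(1-16\theta^2)$ again features the threshold $\theta=\tfrac14$ — produces a closed form for $\bm\nu$ (this is Theorem~\ref{omega and nu closed form thm}). As in the $\bm\omega$ case one then reads off that all components of $\bm\nu$ are nonnegative for all $N$ and $\rho$ when $\theta\ge\tfrac14$, whereas for $\theta<\tfrac14$ a suitably small $\rho$ produces a negative component, again expected near the index $i=N-1$. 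Combining this with the previous paragraph yields (a)$\,\Leftrightarrow\,$(c)$\,\Leftrightarrow\,$(b).

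The step I expect to be the main obstacle is the $\bm\nu$ case. Unlike $\tilde\Gamma^\top+2\theta\Id$, the matrix $\Gamma+\tilde\Gamma+2\theta\Id$ is not triangular, so its reduction yields a genuine second-order recursion with a modified equation at the boundary index $i=0$ rather than a one-step recursion; extracting the sign pattern of the resulting solution uniformly in $N$ and $\rho$, and verifying that the critical value is exactly $\tfrac14$ (and not some nearby quantity), is where the real work lies. Everything else is bookkeeping.
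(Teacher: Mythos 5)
This statement is quoted verbatim from \cite{SchiedZhangHotPotato} (it is Theorem 2.7 there) and is not proved anywhere in the present paper, so your proposal can only be judged on its own terms. The half of your argument dealing with $\bm\omega$ is correct and complete: the reduction $(\tilde\Gamma^\top+2\theta\Id)(\Id-aE^\top)=(\tfrac12+2\theta)\Id+(\tfrac12-2\theta)aE^\top$ is right, the resulting one-step recursion gives $\omega_i=\tfrac1p(1-ay_{i+1})$ with $y_{i+1}=\tfrac1p\sum_{j}s^j$, and I checked that your formula reproduces \eqref{omi formula} (e.g.\ your $\omega_{N-1}=(p-a)/p^2$ agrees with the paper's $(\kappa-\alpha)/\kappa^2$). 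Both the uniform bound $y_{i+1}\le 1/a$ for $\theta\ge\tfrac14$ and the counterexample $a>\tfrac12+2\theta$ for $\theta<\tfrac14$ are sound, and your positivity argument for the normalizing scalars $\mathbf1^\top\bm\nu$, $\mathbf1^\top\bm\omega$ via the symmetric parts is also correct. So (b)$\Leftrightarrow$(c) is genuinely established.

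The gap is exactly where you say it is, and it is not a small one: for (a)$\Leftrightarrow$(c) you only carry out the conjugation to the tridiagonal system $d\,\Id+(c-d)\mathbf e_0\mathbf e_0^\top+(\tfrac12-2\theta)aE-(\tfrac12+2\theta)aE^\top$ (which I verified) and then assert that ``one reads off'' the sign pattern. Nothing is read off. Unlike the $\bm\omega$ case, the solution of this second-order recursion is a combination $c_+m_+^k+c_-m_-^k$ in which, for $\theta<\tfrac14$, the product of the roots $\alpha^2\kappa(\kappa-1)$ is negative, so $m_-<0$ and the components of $\bm\nu$ contain genuinely oscillating terms; showing that these oscillations force a negative entry for \emph{some} $N,\rho$ when $\theta<\tfrac14$, and that they never do when $\theta\ge\tfrac14$ — uniformly in $N$ and $\rho$, with the threshold landing exactly at $\tfrac14$ rather than at some nearby value — is the entire content of that implication. (The observation that the discriminant contains the factor $1-16\theta^2$ is suggestive but proves nothing about the sign of $\bm\nu$; note also that the paper's closed form in Theorem~\ref{omega and nu closed form thm}, with its $\delta_k$, $\phi_k$, $c_\pm$, $d_\pm$, is far from something whose nonnegativity is apparent by inspection.) Since the theorem asserts a three-way equivalence, establishing only (b)$\Leftrightarrow$(c) leaves the proof incomplete; you would need either to finish the sign analysis of $\bm z$ (and hence of $\bm\nu=(\Id-aE^\top)\bm z$) in both directions, or to find a direct argument linking the signs of $\bm\nu$ to those of $\bm\omega$.
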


A very surprising numerical observation made in \cite[Section 2.4]{SchiedZhangHotPotato} is that the expected costs of both  agents can \emph{decrease} when transaction costs \emph{increase}; see Figure~\ref{costs figure}. Economically, this phenomenon can be interpreted as follows. 
An increase of transaction costs leads to a decrease of predatory trading. Therefore, both traders need to take fewer precautions against predatory trading and thus can use more efficient strategies. The savings from the preceding effect can sometimes exceed the extra expenses in increased transaction costs, thus leading to an overall reduction of costs for all market participants. An additional effect is that an increase of transaction costs  reduces the volume that is traded and thus can  ``calm the market".

One of the main goals of the present paper is to give a mathematical proof and a quantitative analysis of the above-mentioned  numerical observation from  \cite[Section 2.4]{SchiedZhangHotPotato}.   To this end, the following sections will analyze the high-frequency limit of equilibrium strategies and costs, i.e., the limit when $N\uparrow\infty$. Before that, however, we conclude this section by recalling a discussion from \cite{SchiedZhangHotPotato} on how results might be impacted if our quadratic transactions were replaced by (piecewise) linear ones.

\begin{remark}[Quadratic versus proportional transaction costs]\label{transaction costs remark} It was shown in \cite[Proposition 2.6]{SchiedZhangHotPotato} that in the context of Theorem~\ref{th1} there exists a piecewise linear function $\tau$ of the form 
\begin{equation}\label{piecewise linear transction costs}
\tau(|x|)=\theta_0|x|+\sum_{k=1}^{M}\theta_k(|x|-c_k)\Ind{[c_k,\infty)}(|x|)
\end{equation}
with certain coefficients $\theta_k>0$ and thresholds $0<c_1<\cdots c_{M}$ such that $(\bm \xi^*,\bm \eta^*)$  is also a Nash equilibrium in $\mathscr{X}(x,\mathbb{T})\times\mathscr{X}(y,\mathbb{T})$ for the the modified expected cost functional in which the quadratic transaction cost function $x\mapsto \theta x^2$ in \eqref{costs xi def eq} and \eqref{costs eta def eq} is replaced with $x\mapsto\tau(|x|)$. Transaction costs of the form \eqref{piecewise linear transction costs} can  model a transaction tax that is subject to tax progression. With such a tax, small orders, such as those  placed by small investors, are taxed at a lower rate than large orders, which may be placed with the intention of moving the market. Moreover, since the main difference of quadratic and proportional transaction costs is their behavior at the origin, one may guess  that similar results as recalled in this section  for quadratic transaction costs and fixed $N$ might also hold for proportional transaction costs. Nevertheless, the function $\tau$ for which the above-mentioned result holds depends on all model parameters and in particular on $N$. We can therefore not expect that the limiting results obtained in the following sections remain valid if our quadratic transaction cost function is replaced by proportional transaction costs in a neighborhood of the origin. Indeed, in the limit $N\uparrow\infty$, individual trades of the equilibrium strategies can become arbitrarily small, and so the differences between quadratic and proportional transaction costs become crucial.\end{remark}

\begin{figure}
\begin{center}
\includegraphics[width=8cm]{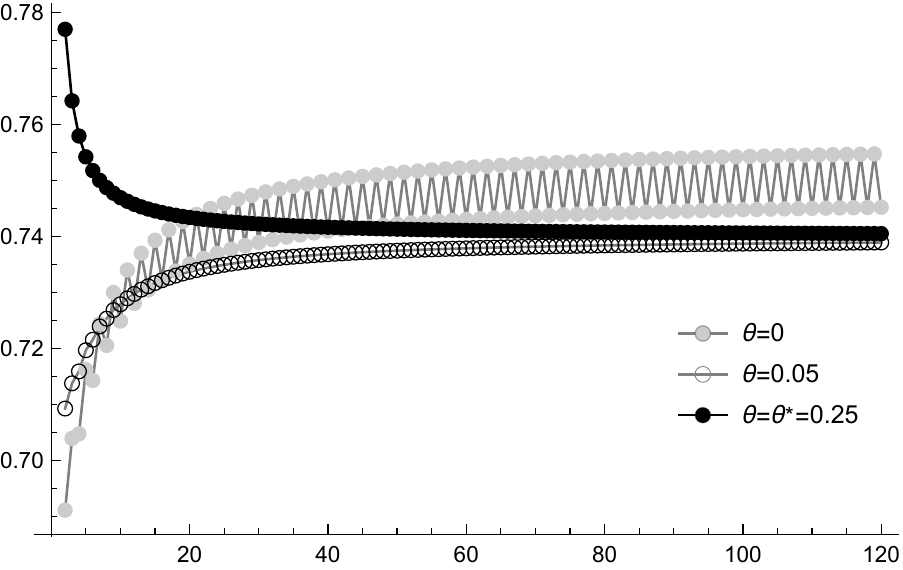}
\caption{Expected costs $\mathbb{E} [\mathscr{C}_{\mathbb{T}_N}(\bm\xi|\bm\eta) ]$ as a function of trading frequency, $N$,  for $\theta=0$, $\theta=0.05$, and $\theta=\theta^*=1/4$ if  $\rho T=1$ and $x=y=1$. }
\label{costs figure}
\end{center}
\end{figure}

 \section{High-frequency limits for equilibrium strategies and costs}\label{high-frequency limit section}
 
 Let us introduce 
 \begin{equation*}
 n_t := \Big\lceil\frac{ N t}T\Big\rceil
 \end{equation*}
 and the renormalized strategies
$$
		V_t^{\left(N\right)} := 1-\sum_{k=1}^{n_t} v_k \qquad\text{and}\qquad W^{(N)}_t=1-\sum_{k=1}^{n_t} w_k,\qquad 0\le t\le T,
		$$
	where $\bm v=(v_1,\dots,v_{N+1})^\top$ and $\bm w=(w_1,\dots,w_{N+1})^\top$ are as in \eqref{v and w}. To keep the notation simple, we will not make the dependence  of $\bm v$, $\bm w$, and of many other quantities on $N$ explicit. 
 Our first  main result  will deal with the asymptotics of $V^{(N)}$ and $W^{(N)}$.

\begin{theorem}[\bfseries Asymptotics of  strategies] \label{limit strategies kappa thm} The renormalized strategies $V^{(N)}$ and $W^{(N)}$ behave as follows as $N\uparrow\infty$. 
\begin{enumerate}
\item If $\theta>0$, then $V_0^{\left(N\right)} = 1$ and
	\begin{align*}
		V_t^{\left(N\right)} \lra \frac{e^{3\rho T}\left(6\rho\left(T-t\right)+4\right)-4e^{3\rho t}}{2e^{3\rho T}\left(3\rho T+5\right)-1},\qquad 0<t\le T.
	\end{align*} \item
 For $\theta=0$, we define the functions $f_\pm, g_\pm:[0,T]\to\mathbb{R}$ by
	\begin{align*}
		f_\pm\left(t\right) &:= \left(2e^{6\rho T}\left(3\rho T+5\right)+e^{3\rho T}+3\rho T+7\right)^{-1}\Big(\pm 3e^{3\rho\left(T-t\right)}\pm6e^{3\rho\left(2T-t\right)}+{}\\
		&\hspace{1cm}{}+e^{6\rho T}\left(6\rho\left(T-t\right)+4\right)+3\rho\left(T-t\right)+2e^{3\rho T}+4e^{3\rho t} -4e^{3\rho\left(T+t\right)}+3\Big),\nonumber\\
		g_\pm\left(t\right) &:= \left(2e^{6\rho T}\left(3\rho T+5\right)-3e^{3\rho T}-3\rho T-7\right)^{-1}\Big(\pm 3e^{3\rho\left(T-t\right)}\pm6e^{3\rho\left(2T-t\right)}+{}\\
		&\hspace{1cm}{}+e^{6\rho T}\left(6\rho\left(T-t\right)+4\right)-3\rho\left(T-t\right)-2e^{3\rho T}-4e^{3\rho t} -4e^{3\rho\left(T+t\right)}-3\Big).\nonumber
	\end{align*}
	Then $V_0^{\left(N\right)} = 1$, and for $0<t\le T$ the sequence $(\left.V_t^{\left(2N\right)}\right.)_{N \in \mathbb{N}}$ has exactly the two cluster points  $f_+\left(t\right)$ and $f_-(t)$, and $(\left.V_t^{\left(2N+1\right)}\right.)_{N \in \mathbb{N}}$ has exactly the two cluster points  $g_+\left(t\right)$ and $g_-(t)$.
	\item If $\theta>0$,  then 	
$$
	W^{(N)}_t\lra\frac{\rho(T-t)+1}{\rho T+1},\qquad 0\le t\le T.
$$
\item For $\theta=0$, we define the functions $\varphi_\pm, \psi_\pm:[0,T]\to\mathbb{R}$ by
\begin{align*}
\varphi_\pm(t)&:=\frac{1+\rho( T-t) \pm e^{-\rho(T-t)} }{1+\rho T+e^{-\rho T}},\\
\psi_\pm(t)&:=\frac{1+\rho( T-t) \pm e^{-\rho(T-t)} }{1+\rho T-e^{-\rho T}}.
\end{align*}
Then $W_T^{\left(N\right)} = 0$, and for $0\le t< T$ the sequence $( W_t^{\left(2N\right)})_{N \in \mathbb{N}}$ has exactly the two cluster points  $\varphi_+\left(t\right)$ and $\varphi_-(t)$, and $(\left.W_t^{\left(2N+1\right)}\right.)_{N \in \mathbb{N}}$ has exactly the two cluster points  $\psi_+\left(t\right)$ and $\psi_-(t)$.
\end{enumerate}
	 	\end{theorem}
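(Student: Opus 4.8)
The plan is to reduce the asymptotics of $V^{(N)}$ and $W^{(N)}$ to the behaviour of finite sums of geometric and arithmetic sequences, using the closed-form expressions for $\bm\nu$ and $\bm\omega$ to be established in Theorem~\ref{omega and nu closed form thm}, and then to expand carefully as $N\uparrow\infty$ with $r:=e^{-\rho T/N}\to1$. Since $\Gamma-\tilde\Gamma=\tilde\Gamma^\top$ and $\Gamma+\tilde\Gamma=2\tilde\Gamma+\tilde\Gamma^\top$, the linear system defining $\bm\omega$ is upper triangular, so the components $\omega_k$ satisfy a first-order backward recursion and equal a constant plus a single geometric sequence with ratio $-\tfrac{1+4\theta}{r(1-4\theta)}$; the system for $\bm\nu$, after eliminating the auxiliary running sums $\sum_{j<k}r^{k-j}\nu_j$ and $\sum_{j>k}r^{j-k}\nu_j$, reduces to a second-order linear recursion whose two characteristic roots $\mu_\pm$ have product $-\tfrac{1-4\theta}{1+4\theta}$ and an explicit sum, so each $\nu_k$ is a constant plus two geometric sequences (with a linear-in-$k$ correction when a root equals~$1$). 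Summing these closed forms over the index gives explicit formulas for $\mathbf1^\top\bm\nu$, $\mathbf1^\top\bm\omega$, $\sum_{k=1}^m\nu_k$ and $\sum_{k=1}^m\omega_k$, hence for $V^{(N)}_t=1-(\mathbf1^\top\bm\nu)^{-1}\sum_{k=1}^{n_t}\nu_k$ and $W^{(N)}_t=1-(\mathbf1^\top\bm\omega)^{-1}\sum_{k=1}^{n_t}\omega_k$, and the stated boundary values follow at once.

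For $\theta>0$ I would use the elementary limits $N(1-r)\to\rho T$, $r^{n_t}\to e^{-\rho t}$ and $r^N\to e^{-\rho T}$ together with the asymptotics of the roots. The decisive point for $\bm\nu$ is that one root satisfies $\mu_+=1+3\rho T/N+O(N^{-2})=e^{3\rho T/N}\bigl(1+O(N^{-2})\bigr)$ — this is the origin of the factor~$3$, hence of the terms $e^{3\rho t}$, $e^{3\rho T}$, $e^{6\rho T}$ in parts~(a) and~(b) — so that $\mu_+^{n_t}\to e^{3\rho t}$ and $\mu_+^N\to e^{3\rho T}$, while $\mu_-=\mu_+^{-1}\bigl(-\tfrac{1-4\theta}{1+4\theta}\bigr)\to\tfrac{4\theta-1}{4\theta+1}\in(-1,1)$. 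Consequently $\mu_-^{n_t}\to0$, all oscillatory contributions disappear, and collecting the surviving arithmetic terms ($\tfrac1N\sum_{k\le n_t}1\to t/T$) and $\mu_+$-terms gives the explicit limit of part~(a). For $\bm\omega$ the geometric ratio has modulus $>1$, so after renormalisation its contribution survives only through the boundary, $\mathbf1^\top\bm\omega\to\rho T+1$, and the arithmetic part produces the affine profile $\tfrac{\rho(T-t)+1}{\rho T+1}$ of part~(c).

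For $\theta=0$ the situation is critical: now $\mu_-=-\mu_+^{-1}$ lies exactly on the unit circle, $\mu_-=-e^{-3\rho T/N}\bigl(1+O(N^{-2})\bigr)$, so $\mu_-^{n_t}\to(-1)^{n_t}e^{-3\rho t}$ and $\mu_-^N\to(-1)^Ne^{-3\rho T}$; likewise the $\bm\omega$-ratio becomes $-r^{-1}$, giving $(-1)^{n_t}e^{\rho t}=(-1)^{n_t}e^{\rho T}e^{-\rho(T-t)}$. Thus $V^{(N)}_t$ and $W^{(N)}_t$ equal, up to a term tending to~$0$, explicit functions of $(-1)^{n_t}$ and $(-1)^N$ whose remaining $N$-dependence converges. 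Along the even subsequence $(-1)^N$ is frozen at $+1$ (and at $-1$ along the odd one), so only $(-1)^{n_t}$ fluctuates; evaluating the two choices of sign and simplifying (using $e^{-\rho(T-t)}=e^{-\rho T}e^{\rho t}$) identifies the candidate cluster points as $f_\pm,\varphi_\pm$ for even $N$ and $g_\pm,\psi_\pm$ for odd $N$. That these are \emph{exactly} the cluster points follows because for each fixed $t\in(0,T)$ the integers $n_t=\lceil Nt/T\rceil$ realise both parities for infinitely many $N$ of each parity — immediate from equidistribution when $t/T\notin\mathbb Q$, and by an elementary argument otherwise — so each sign is attained along a subsequence, while the convergent remainder rules out any further accumulation point.

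The main obstacle is the bookkeeping for $\bm\nu$: the second-order recursion, the two $N$-dependent roots, and the two boundary conditions at $k=0$ and $k=N$ produce a fairly heavy closed form, and one must keep careful track of which of the many exponential and polynomial terms survive, vanish, or blow up in the limit — in particular the factors $1/(1-\mu_+)\sim -N/(3\rho T)$, which are the source both of the genuinely nontrivial finite limits and of the polynomial-in-$t$ pieces appearing in parts~(a) and~(b). Once the closed forms and their asymptotics are pinned down, the parity argument needed for the case $\theta=0$ is short.
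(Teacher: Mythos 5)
Your proposal is correct and follows essentially the same route as the paper: the paper likewise obtains $\omega_k$ from the first-order recursion behind \eqref{omi formula} and $\nu_k$ from a second-order linear recursion (realized there through Usmani's inversion formula for the tridiagonal matrix $B$, whose characteristic roots $m_\pm$ are exactly your $\mu_\pm$ up to the normalization $m_\pm=\alpha\kappa\,\mu_\pm$, with product $\alpha^2\kappa(\kappa-1)$, i.e.\ $\mu_+\mu_-=\tfrac{\kappa-1}{\kappa}=-\tfrac{1-4\theta}{1+4\theta}$), and then takes limits exactly as you describe, with the unit-circle criticality of $\mu_-$ at $\theta=0$ driving the $(-1)^{n_t}$ and $(-1)^N$ oscillations. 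Your parity argument for ``exactly two'' cluster points at fixed $t\in(0,T)$ is in fact spelled out more carefully than in the paper, which only notes that one limit is approached along even $n_t$ and the other along odd $n_t$; the only point to watch in a full write-up is the degenerate value $\theta=\theta^*$ (where $\mu_-=0$ and the generic closed form has vanishing denominators), which the paper handles by a separate, simpler computation.
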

		
			\begin{figure}
 \centering
 \begin{minipage}[b]{8cm}
 \begin{overpic}[width=8cm]{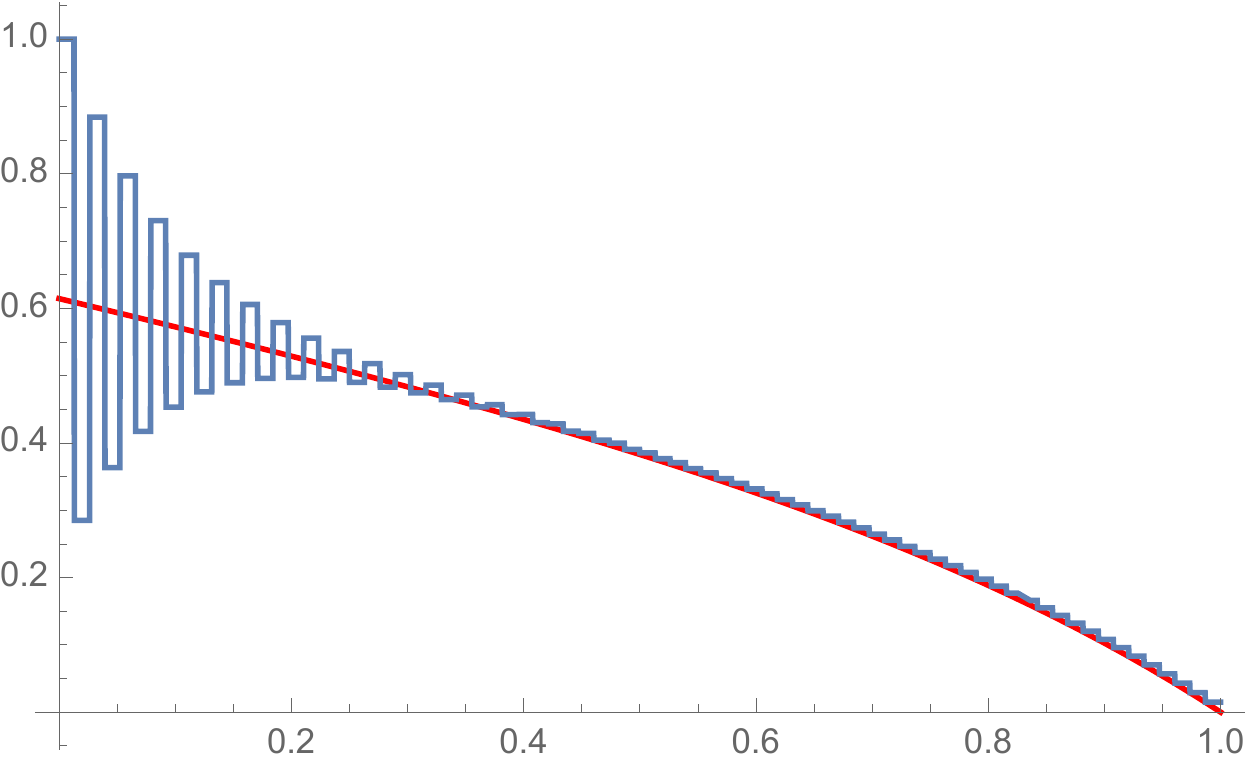}
\put(-8,60){\small$V^{(50)}_t$}
 \put(100,-3){\small$t$}
 \end{overpic}\end{minipage}
 \begin{minipage}[b]{8cm}
 \begin{overpic}[width=8cm]{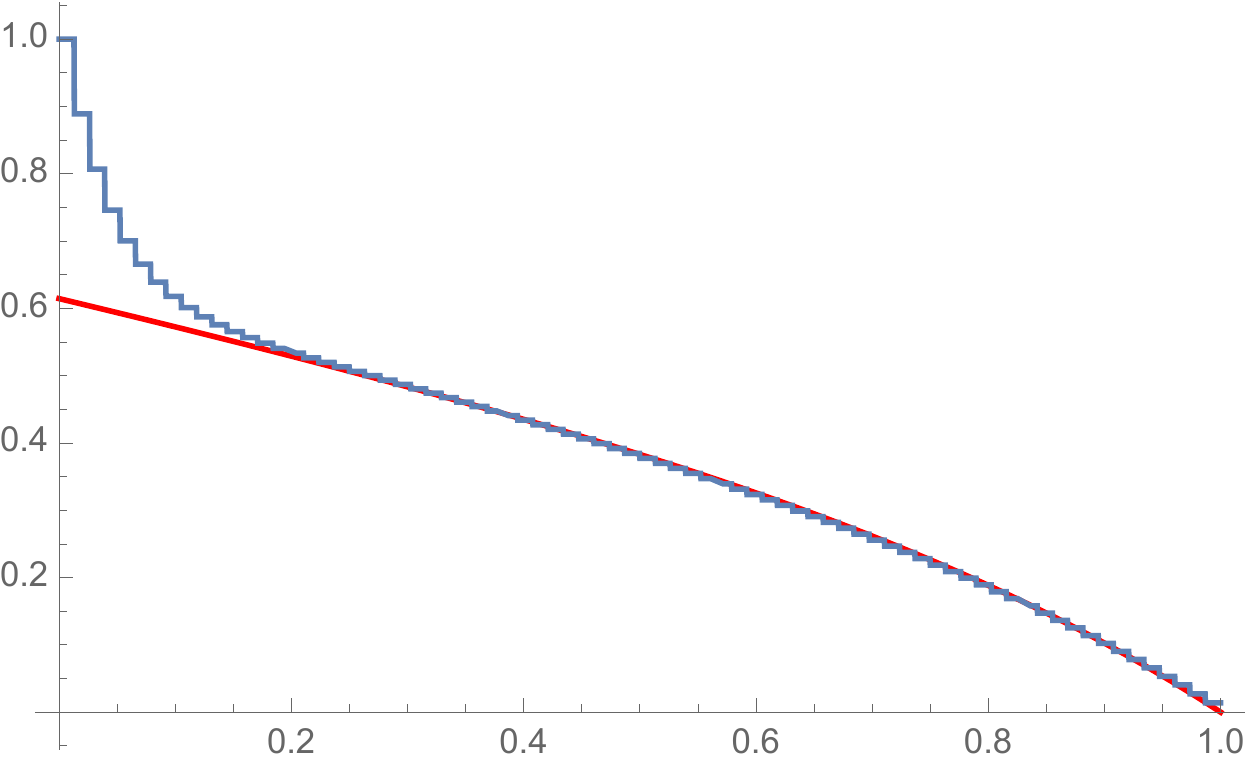}
\put(-8,60){\small$V^{(50)}_t$}
 \put(100,-3){\small$t$}
\end{overpic}\end{minipage}

\caption{Convergence of $V^{(75)}_t$ (blue) to $V$  (red) for small but nonzero $\theta$  (left) and large $\theta$ (right).}\end{figure}

		\begin{figure}
		\begin{center}
		\begin{minipage}[b]{8cm}
		\begin{overpic}[width=8cm]{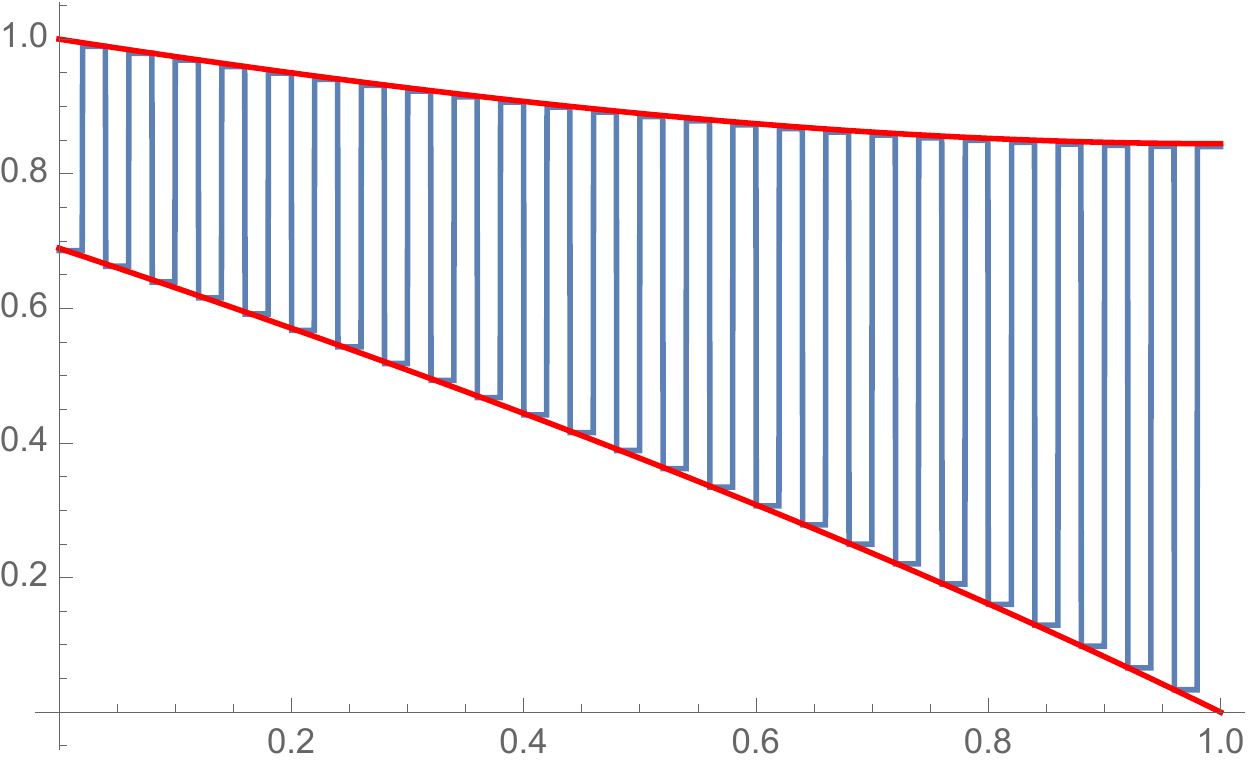}
		\put(-8,60){\small$W^{(50)}_t$}
		\put(102,-1){\small${t}$}
		\put(50,55){\small$ \varphi_+$}
	\put(36,25){\small$\varphi_-$}
		\end{overpic}
		\end{minipage}	\qquad
		\begin{minipage}[b]{8cm}
		\begin{overpic}[width=8cm]{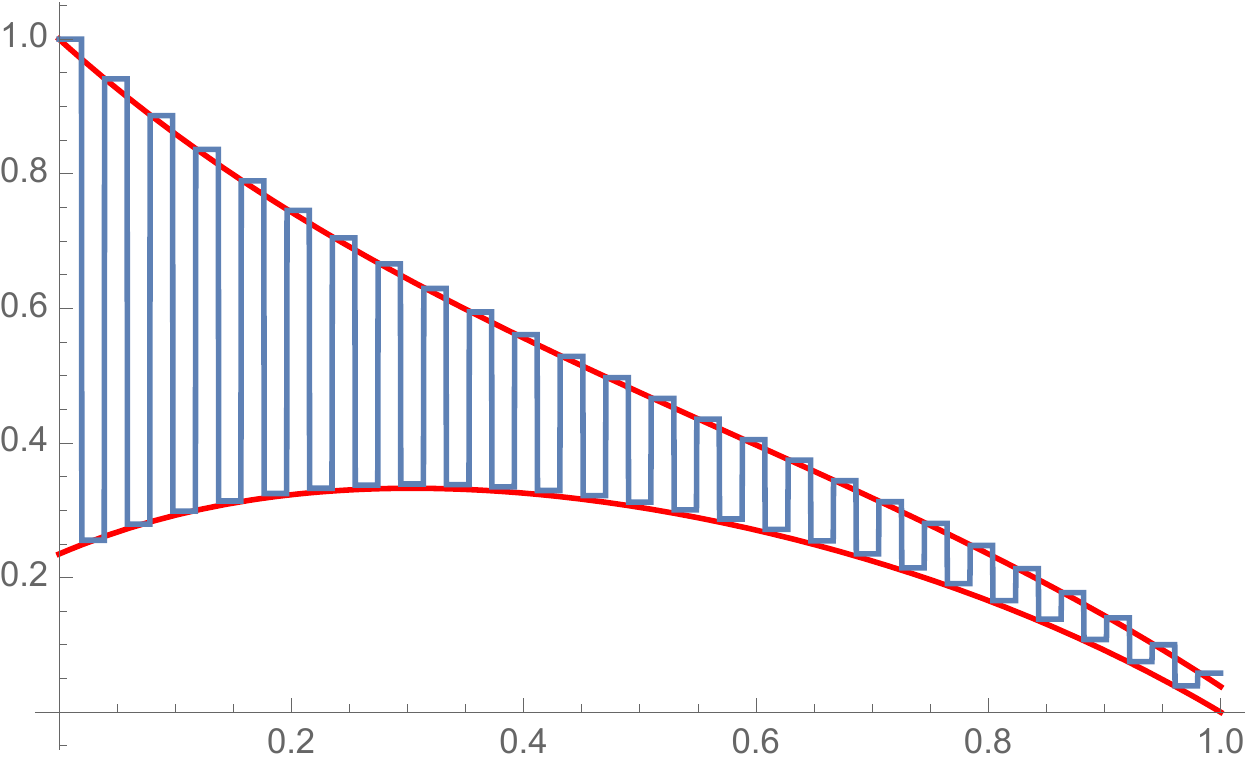}
		\put(-8,60){\small$V^{(50)}_t$}
		\put(102,-1){\small${t}$}
	\put(50,35){\small$f_+$}
	\put(30,17){\small$f_-$}

		\end{overpic}
		\end{minipage}		
		\caption{$W^{(50)}$ (left) and $V^{(50)}$ (right) in blue for $\theta=0$, together with the corresponding limits of oscillations from Theorem~\ref{limit strategies kappa thm} (b) and (d) in red.}
		\end{center}
		\end{figure}

		Note that the limits in parts (a) and (c) of the preceding theorem are independent of the particular value of $\theta>0$. A similar effect occurs in the asymptotics of the expected costs, as can be seen from part (a) of the following result.

\begin{theorem}[\bfseries Asymptotics of expected costs]\label{costs asymptotics thm}
For $x,y\in\mathbb{R}$ and $N\ge2$, let $\bm\xi^{(N)}\in\mathscr{X}(x,\mathbb{T}_N)$ and $\bm\eta^{(N)}\in\mathscr{X}(y,\mathbb{T}_N)$ be the corresponding equilibrium strategies. 
\begin{enumerate}
\item If $\theta>0$, then	
\begin{align*}
		\lim_{N\uparrow\infty}\mathbb{E}\left[\mathscr{C}_{\mathbb{T}_N}(\bm\xi^{(N)}|\bm\eta^{(N)})\right] =\frac{\left(x+y\right)^2\left(36e^{6\rho T}\left(8\rho T+13\right)-60e^{3\rho T}-3\right)}{16\left(2e^{3\rho T}\left(3\rho T+5\right)-1\right)^2}+\frac{x^2-y^2}{2\left(\rho T+1\right)}+\frac{\left(x-y\right)^2}{16\left(\rho T+1\right)^2}.
	\end{align*}
	\item If $\theta=0$, then
	\begin{align*}
		\lim_{N\uparrow\infty}\mathbb{E}\left[\mathscr{C}_{\mathbb{T}_{2N}}(\bm\xi^{(2N)}|\bm\eta^{(2N)})\right]=  \frac{\left(x+y\right)^2 \left(6e^{6\rho T}+3\right)}{2\left(2e^{6\rho T}\left(3\rho T+5\right)+e^{3\rho T}+3\rho T+7\right)}+\frac{x^2-y^2}{2\left(e^{-\rho T}+\rho T+1\right)},
	\end{align*}
	and
	\begin{align*}
		\lim_{N\uparrow\infty}\mathbb{E}\left[\mathscr{C}_{\mathbb{T}_{2N+1}}(\bm\xi^{(2N+1)}|\bm\eta^{(2N+1)})\right]=\frac{\left(x+y\right)^2 \left(6e^{6\rho T}-3\right)}{2\left(2e^{6\rho T}\left(3\rho T+5\right)-3e^{3\rho T}-3\rho T-7\right)}+\frac{x^2-y^2}{2\left(-e^{-\rho T}+\rho T+1\right)}.
	\end{align*}
	\end{enumerate}
\end{theorem}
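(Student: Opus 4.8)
The plan is to reduce the expected equilibrium cost to a purely algebraic expression in $\bm\nu$ and $\bm\omega$, and then to pass to the high-frequency limit using the closed-form representations of these vectors.

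\textbf{Step 1 (reduction of the cost functional).} Since $S^0$ is a martingale with $\mathbb{P}$-trivial $\mathscr{F}_0$, the equilibrium strategies of Theorem~\ref{th1} are deterministic, and the $\eps_k$ are $\mathrm{Bernoulli}(\tfrac12)$ and independent of everything else, one has $\mathbb{E}\big[\sum_k S^0_{t_k}\xi^{(N)}_k\big]=xS^0_0$ and $\mathbb{E}[\eps_k]=\tfrac12$. Substituting the impacted price process into \eqref{costs xi def eq}, the terms containing $S^0$ cancel the summand $xS^0_0$, leaving the quadratic form $\mathbb{E}\big[\mathscr{C}_{\mathbb{T}_N}(\bm\xi^{(N)}|\bm\eta^{(N)})\big]=(\bm\xi^{(N)})^\top(\tilde\Gamma+\theta\Id)\bm\xi^{(N)}+(\bm\xi^{(N)})^\top\tilde\Gamma\bm\eta^{(N)}$.

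\textbf{Step 2 (reduction to scalars).} Inserting $\bm\xi^{(N)}=\tfrac12(x+y)\bm v+\tfrac12(x-y)\bm w$ and $\bm\eta^{(N)}=\tfrac12(x+y)\bm v-\tfrac12(x-y)\bm w$, writing $\tilde\Gamma=\tfrac12\Gamma+\tfrac12(\tilde\Gamma-\tilde\Gamma^\top)$ with $\tilde\Gamma-\tilde\Gamma^\top$ antisymmetric, and pairing the defining equations $(\Gamma+\tilde\Gamma+2\theta\Id)\bm\nu=\mathbf 1$, $(\Gamma-\tilde\Gamma+2\theta\Id)\bm\omega=\mathbf 1$ with $\bm v$ and $\bm w$ (and using $\mathbf 1^\top\bm v=\mathbf 1^\top\bm w=1$), every bilinear form $\bm v^\top\tilde\Gamma\bm v$, $\bm v^\top\tilde\Gamma\bm w$, $\bm w^\top\tilde\Gamma\bm v$, $\bm w^\top\tilde\Gamma\bm w$ collapses to an expression in $s_\nu:=\mathbf 1^\top\bm\nu$, $s_\omega:=\mathbf 1^\top\bm\omega$, $\bm v^\top\bm v$, $\bm v^\top\bm w$, $\bm w^\top\bm w$. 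Collecting terms yields
\[
\mathbb{E}\big[\mathscr{C}_{\mathbb{T}_N}(\bm\xi^{(N)}|\bm\eta^{(N)})\big]=(x+y)^2\Big(\tfrac{1}{6s_\nu}-\tfrac{\theta}{12}\bm v^\top\bm v\Big)+(x^2-y^2)\Big(\tfrac{1}{2s_\omega}-\tfrac{\theta}{2}\bm v^\top\bm w\Big)+\tfrac{\theta}{4}(x-y)^2\,\bm w^\top\bm w,
\]
valid for every $N$ and every $\theta\ge0$ (one can check this directly, e.g.\ in the degenerate limit $\rho\to\infty$ where $\bm v=\bm w=\tfrac1{N+1}\mathbf 1$). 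The theorem thus amounts to computing the limits of $1/s_\nu$, $1/s_\omega$, $\theta\,\bm v^\top\bm v$, $\theta\,\bm v^\top\bm w$, $\theta\,\bm w^\top\bm w$ as $N\uparrow\infty$.

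\textbf{Step 3 (limits of the scalars).} Here we invoke the closed-form formulas for $\bm\nu$ and $\bm\omega$ from Theorem~\ref{omega and nu closed form thm}: their entries are explicit combinations of powers of $c:=e^{-\rho T/N}$, so $s_\nu$, $s_\omega$, $\bm v^\top\bm v$, $\bm v^\top\bm w$, $\bm w^\top\bm w$ are finite geometric-type sums that evaluate to rational functions of $c$, of $c^N=e^{-\rho T}$, of $N$ and of $\theta$; expanding with $c=1-\rho T/N+O(N^{-2})$ and sending $N\uparrow\infty$ gives the limits. The delicate point—and the main obstacle—is that $\bm v^\top\bm v$ and $\bm w^\top\bm w$ do \emph{not} tend to $0$: the equilibrium carries a macroscopic trade at the first grid point (in the $\bm v$-component) and at the last grid point (in the $\bm w$-component), and for $\theta>0$ each spreads into a boundary layer whose entries decay geometrically with ratio $\tfrac{4\theta-1}{4\theta+1}$; summing the squares over such a layer produces limits of order $1/\theta$ (e.g.\ $\bm w^\top\bm w\to\frac{1}{4\theta(\rho T+1)^2}$), so the prefactor $\theta$ cancels the $1/\theta$ and $\theta\,\bm v^\top\bm v$, $\theta\,\bm v^\top\bm w$, $\theta\,\bm w^\top\bm w$ have $\theta$-free limits (one finds $\theta\,\bm v^\top\bm w\to0$ and $\theta\,\bm v^\top\bm v\to\tfrac14\Delta^2$, with $\Delta$ the jump of the limiting $\bm v$-strategy at $t=0$). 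Combining these with $s_\nu$ and $s_\omega$, which also converge to explicit $\theta$-independent values (e.g.\ $s_\omega\to\rho T+1$), and simplifying algebraically yields the closed form of part~(a). The asymptotic analysis of $\bm v$ and $\bm w$ developed for Theorem~\ref{limit strategies kappa thm} is exactly what supplies these boundary-layer contributions.

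\textbf{Step 4 ($\theta=0$ and conclusion).} For $\theta=0$ the three $\theta$-terms vanish identically, so $\mathbb{E}[\mathscr{C}_{\mathbb{T}_N}]=\tfrac{(x+y)^2}{6s_\nu}+\tfrac{x^2-y^2}{2s_\omega}$; but the geometric ratio $\tfrac{4\theta-1}{4\theta+1}$ equals $-1$ at $\theta=0$, so the closed forms for $s_\nu$ and $s_\omega$ pick up a parity-dependent alternation (a $c^N$ versus $c^{N+1}$ in the relevant denominators), and $s_\nu$, $s_\omega$ converge only along the even and along the odd subsequences, to the two distinct pairs of values that produce the two cluster points in part~(b). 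The bulk of the work therefore lies in Step~3—summing the geometric series for $\bm v^\top\bm v$, $\bm v^\top\bm w$, $\bm w^\top\bm w$ including their boundary-layer parts, and then carrying out the cancellation that removes all $\theta$-dependence from the three coefficient limits—together with the separate bookkeeping of the two parity classes when $\theta=0$.
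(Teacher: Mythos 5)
Your proposal is correct, and the key algebraic reduction in Step 2 is genuinely different from (and leaner than) the paper's. The paper stops its simplification at a formula involving the $\tilde\Gamma$-weighted quadratic forms $\bm\nu^\top\tilde\Gamma\bm\nu$, $\bm\omega^\top(\tilde\Gamma-\tilde\Gamma^\top)\bm\nu$, $\bm\omega^\top\tilde\Gamma\bm\omega$, and then evaluates their limits by very long double-sum computations (its Lemmas~\ref{lemma7} and~\ref{lemma7 kappa =1/2}). You instead use the defining relations once more (e.g.\ $3\bm\nu^\top\tilde\Gamma\bm\nu+2\theta\bm\nu^\top\bm\nu=\mathbf 1^\top\bm\nu$ and its analogues) to eliminate every $\tilde\Gamma$-weighted form; I checked that your resulting identity is exactly equivalent to the paper's representation \eqref{am}. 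This buys two things. First, for $\theta=0$ the three $\theta$-terms vanish identically, so part~(b) follows from the parity-dependent limits of $\mathbf 1^\top\bm\nu$ and $\mathbf 1^\top\bm\omega$ already obtained in the proof of Theorem~\ref{limit strategies kappa thm} — the paper's separate Lemma~\ref{lemma7 kappa =1/2} becomes unnecessary. Second, for $\theta>0$ you only need the Euclidean limits $\theta\bm v^\top\bm v$, $\theta\bm v^\top\bm w$, $\theta\bm w^\top\bm w$; your claimed values ($\tfrac14$ times the squared terminal jumps of $V$ and $W$, and $0$ for the cross term) agree with what the paper establishes in Lemma~\ref{tax lemma}, and your boundary-layer mechanism — geometric decay with ratio $(4\theta-1)/(4\theta+1)$, hence $\bm w^\top\bm w\sim 1/(4\theta(\rho T+1)^2)$ — is exactly why these limits are $\theta$-free. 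I also verified that plugging your five scalar limits into your identity reproduces the stated closed forms in both parts.

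The one caveat is that Step~3 is asserted rather than executed: the paper derives the limits of $\theta\bm\nu^\top\bm\nu$ and $\theta\bm\omega^\top\bm\omega$ \emph{from} its Lemma~\ref{lemma7}, whereas you would compute them directly as single sums of squares of the closed-form entries from Theorem~\ref{omega and nu closed form thm} (plus the cross sum $\bm\nu^\top\bm\omega$, which the paper never needs). For $\bm\omega$ this is an elementary geometric sum; for $\bm\nu$ it is a computation of the same type and length as the evaluation of $\sum_i\nu_i$ in the proof of Theorem~\ref{limit strategies kappa thm}(a), using the limits catalogued in Lemma~\ref{lemma4}. So the plan is executable and strictly less computation than the paper's route, but the decisive summations still have to be carried out to make this a complete proof.
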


With the cost limits stated in the preceding theorem, we can now  prove in a  mathematically rigorous fashion that increasing the transaction costs level $\theta$ can sometimes lower the expected costs of all market participants. This statement is made precise in the following corollary and illustrated in Figure~\ref{cost regions figure}.

\begin{corollary}\label{cost comparison cor}
	Let $x=y$ and $\rho T>\log(\left.4+\sqrt{62}/2\right.)/3 \approx 0.69$. Then the limit of the expected costs $\mathbb{E} [\mathscr{C}_{\mathbb{T}_N}(\bm\xi^{(N)}|\bm\eta^{(N)}) ]$ for 
	$\theta>0$ is strictly lower   than the limit inferior of the expected costs for $\theta=0$.
\end{corollary}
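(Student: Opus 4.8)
The plan is to reduce the corollary to two scalar inequalities in the single parameter $\rho T$, each then settled by an explicit finite computation. Since $x=y$ (and we may clearly assume $x=y\neq 0$, all costs vanishing otherwise), the terms proportional to $x^2-y^2$ and to $(x-y)^2$ in Theorem~\ref{costs asymptotics thm} drop out, while $(x+y)^2=4x^2$ lets us divide throughout by $x^2$. Writing $b:=e^{3\rho T}>1$ and $s:=\rho T>0$, the three cost limits of Theorem~\ref{costs asymptotics thm} become explicit rational functions $L_\theta(b,s)$, $L_e(b,s)$, $L_o(b,s)$, where $L_e,L_o$ are the even- and odd-frequency cluster points from part~(b). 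Hence the sequence of $\theta=0$ equilibrium costs has exactly the two accumulation points $x^2L_e$ and $x^2L_o$, so its limit inferior is $x^2\min\{L_e,L_o\}$, whereas the $\theta>0$ limit equals $x^2L_\theta$. It therefore suffices to prove
\[
L_\theta(b,s)<\min\{L_e(b,s),L_o(b,s)\}\qquad\text{whenever }b>b_0:=4+\sqrt{62}/2,
\]
and one checks that $b_0$ is exactly the larger root of $2b^2-16b+1$ and that $\rho T>\tfrac13\log(4+\sqrt{62}/2)$ is equivalent to $b>b_0$.

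First I would compare $L_e$ with $L_o$: cross-multiplying by the denominators, which are positive for $b>1$, $s\ge 0$, turns $L_e<L_o$ into a polynomial inequality whose left-minus-right side I expect to collapse to something transparently negative, such as $-2b(2b+1)^2$, valid for all $b>1$, $s\ge 0$. Hence $\min\{L_e,L_o\}=L_e$, and it remains to show $L_\theta(b,s)<L_e(b,s)$ for $b>b_0$, $s\ge 0$. The denominators of $L_\theta$ and $L_e$ — namely $4(2b(3s+5)-1)^2$ and $2b^2(3s+5)+b+3s+7$ — are strictly positive for $b\ge 1$, $s\ge 0$, so after cross-multiplication this becomes $\Phi(b,s)>0$ for the two-variable polynomial $\Phi$ equal to the difference of cross-products. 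The decisive algebraic step is to show that $\Phi$ is in fact affine in $s$:
\begin{align*}
\Phi(b,s)&=c_1(b)\,s+c_0(b),\\
\text{with}\qquad c_1(b)&=9(2b+1)^2(2b^2-16b+1),\qquad c_0(b)=120b^4-828b^3-738b^2-57b+45.
\end{align*}
Granting this, the conclusion is immediate: for $b>b_0$ we have $2b^2-16b+1>0$, hence $c_1(b)>0$, while $c_0(b)>0$ on $[b_0,\infty)$; since $s=\rho T>0$ this forces $\Phi(b,s)>0$, i.e.\ $L_\theta<L_e$. Setting $b=e^{3\rho T}$, $s=\rho T$ then completes the proof.

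The main obstacle is the algebraic heart of the second step: verifying the displayed closed form for $\Phi$ — obtained by expanding the difference of cross-products, collecting powers of $\rho T$, and factoring — and in particular checking that the coefficient of $\rho T$ factors through $2b^2-16b+1$, since it is precisely this factorization that makes the threshold come out to be $4+\sqrt{62}/2$. This computation is lengthy but entirely routine. A secondary point requiring a little care is confirming that $c_0$ has no zero in $[b_0,\infty)$ — its largest real root lies below $b_0$ — which I would dispatch using $c_0(b_0)=1941b_0-84>0$ (reduce $c_0$ modulo $2b^2-16b+1$) together with monotonicity of $c_0$ on $[b_0,\infty)$, or via a crude bound of the shape $c_0(b)\ge b^3(120b-828)-738b^2-57b$.
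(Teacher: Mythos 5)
Your proposal is correct and follows the same overall strategy as the paper: reduce to the scalar case, identify the $\theta=0$ limit inferior as the even-$N$ cluster point (the cross-multiplied difference $L_e-L_o$ does collapse to a negative multiple of $b(2b+1)^2$ over positive denominators, exactly as you anticipate), and then settle $L_\theta<L_e$ by cross-multiplication. Your claimed closed form for $\Phi$ checks out: the $s^2$-terms cancel, the coefficient of $s$ is $9(2b+1)^2(2b^2-16b+1)$, and the constant term is $120b^4-828b^3-738b^2-57b+45=3(2b+1)^2(10b^2-79b+15)$, whose largest root $\approx 7.705$ indeed lies below $b_0=4+\sqrt{62}/2\approx 7.937$; your reduction $c_0(b_0)=1941b_0-84>0$ is also correct. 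The only real difference from the paper is the last step: the paper factors the same polynomial multiplicatively as $6(2b+1)^2\bigl(2b^2(3s+5)-b(48s+79)+3s+15\bigr)$ and then uses the bounds $48s+79<16(3s+5)$ and $3s+15>3s+5$ to reduce everything to $2b^2-16b+1>0$, which makes the separate positivity check for the constant term unnecessary; your additive split into $c_1(b)s+c_0(b)$ buys a slightly more mechanical verification at the cost of that one extra (easy) check on $c_0$ over $[b_0,\infty)$. Both routes hinge on the same quadratic $2b^2-16b+1$, which is where the threshold $\log(4+\sqrt{62}/2)/3$ comes from.
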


 \begin{figure}
 \begin{center}
 \includegraphics[width=6cm]{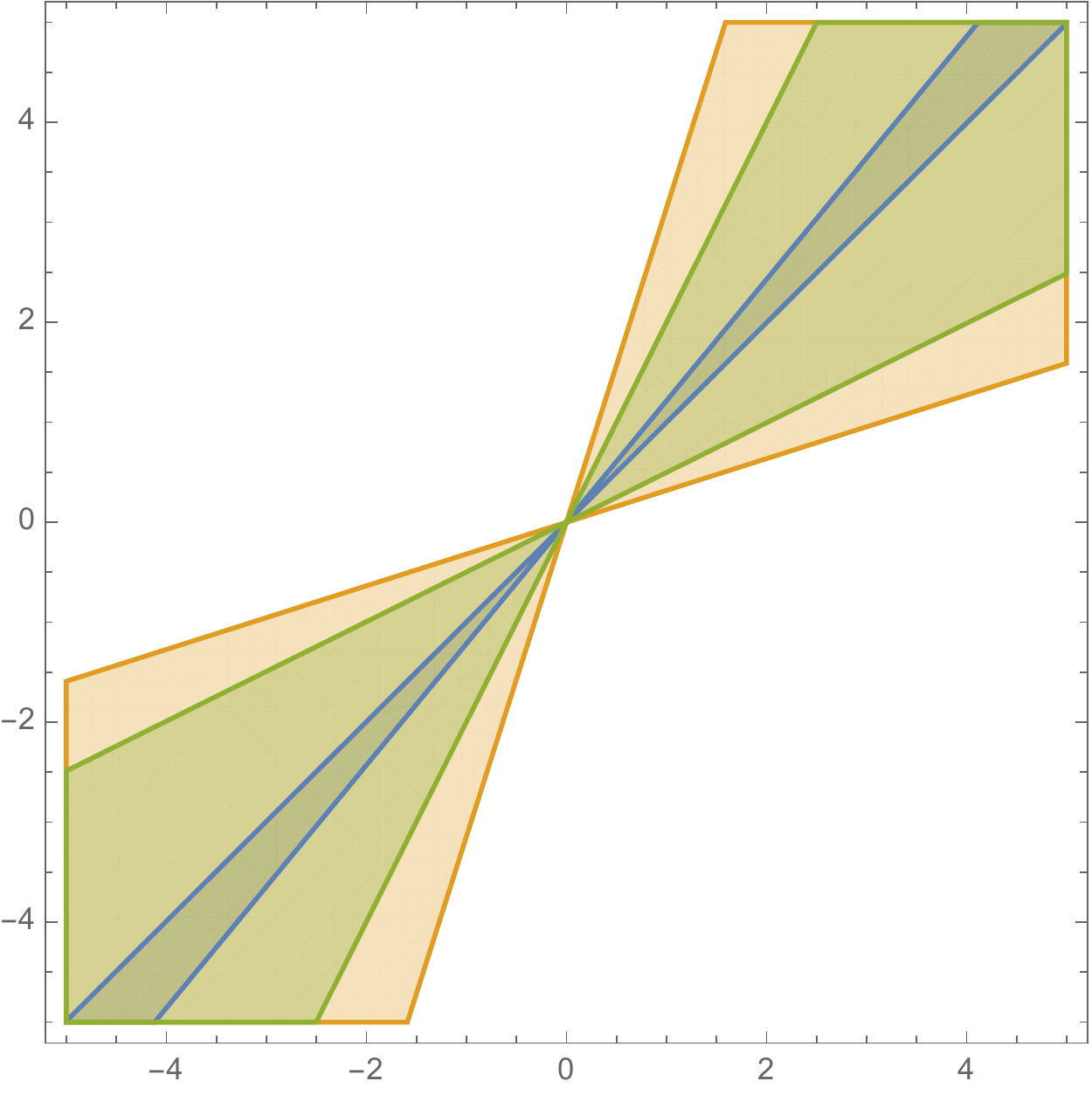}
\caption{Regions of $(x,y)\in[-5,5]\times[-5,5]$ for which the limit of the expected costs $\mathbb{E} [\mathscr{C}_{\mathbb{T}_N}(\bm\xi^{(N)}|\bm\eta^{(N)}) ]$ for $\theta=\theta^*=1/4$ is strictly lower   than the limit inferior of the expected costs if $\theta=0$, for $\rho T=0.69$ (blue), $\rho T=3$ (green), and $\rho T=6$ (orange).}
\end{center}
\label{cost regions figure}
 \end{figure}

%

In our analysis, we have considered the high-frequency limit $N\uparrow\infty$ while taking all other model parameters, notably $\rho$, as constant. In reality, one might expect that an increase of the trading frequency also leads to a more resilient market. That is, $\rho$ might increase with $N$. Note that Corollary~\ref{cost comparison cor} requires only a condition of the form $\rho >c$ for some constant $c$. Therefore we can expect  the conclusion of Corollary~\ref{cost comparison cor} to be robust if the framework is replaced by a variant in which $\rho$ increases with $N$. Moreover, it was observed numerically in~\cite{SchiedZhangHotPotato} that, for fixed and finite $N$, the expected costs $\mathbb{E} [\mathscr{C}_{\mathbb{T}_N}(\bm\xi^{(N)}|\bm\eta^{(N)}) ]$ can be a decreasing function of $\theta$ as long as $\theta$ remains sufficiently small. 

The quadratic transaction costs in our model can also be interpreted as a quadratic transaction tax, with $\theta \ge 0$ acting as the tax rate.
Corollary~\ref{cost comparison cor} shows that such a tax can sometimes make everyone better off and also generate some additional tax revenue. To analyze the tradeoff between the generation of tax revenues and additional costs in greater detail, we introduce the following definitions.

\begin{definition}If $N\in\mathbb{N}$, the tax rate is $\theta$, and  $\bm\xi^{(N)}\in\mathscr{X}(x,\mathbb{T}_N)$ and $\bm\eta^{(N)}\in\mathscr{X}(y,\mathbb{T}_N)$ are the corresponding equilibrium strategies, the \emph{total tax revenues} are
$$
	\text{TR}_N \coloneqq \theta\big( \bm\xi^{(N)}\big)^\top \bm\xi^{(N)} + \theta\big(\bm\eta^{(N)}\big)^\top \bm\eta^{(N)} $$
The total costs are defined as
$$C_N(\theta):=\mathbb{E} [\mathscr{C}_{\mathbb{T}_N}(\bm\xi^{(N)}|\bm\eta^{(N)}) ]+\mathbb{E} [\mathscr{C}_{\mathbb{T}_N}(\bm\eta^{(N)}|\bm\xi^{(N)}) ],
$$
and the \emph{total taxation costs} are 
$$\text{TC}_N:=C_N(\theta)-C_N(0).
$$
\end{definition}

\medskip 

The asymptotic behavior of the total taxation costs $\text{TC}_N$ is known from Theorem~\ref{costs asymptotics thm}. 
For the asymptotic behavior of the total tax revenues, we have the   following result. It shows that the tax revenues are asymptotically independent of the tax rate $\theta$ and that they  dominate the total taxation costs; see also Figure~\ref{tax figure}. In this sense, it is beneficial in our model  to levy a small transaction tax.
\begin{figure}\centering
\begin{overpic}[width=8cm]{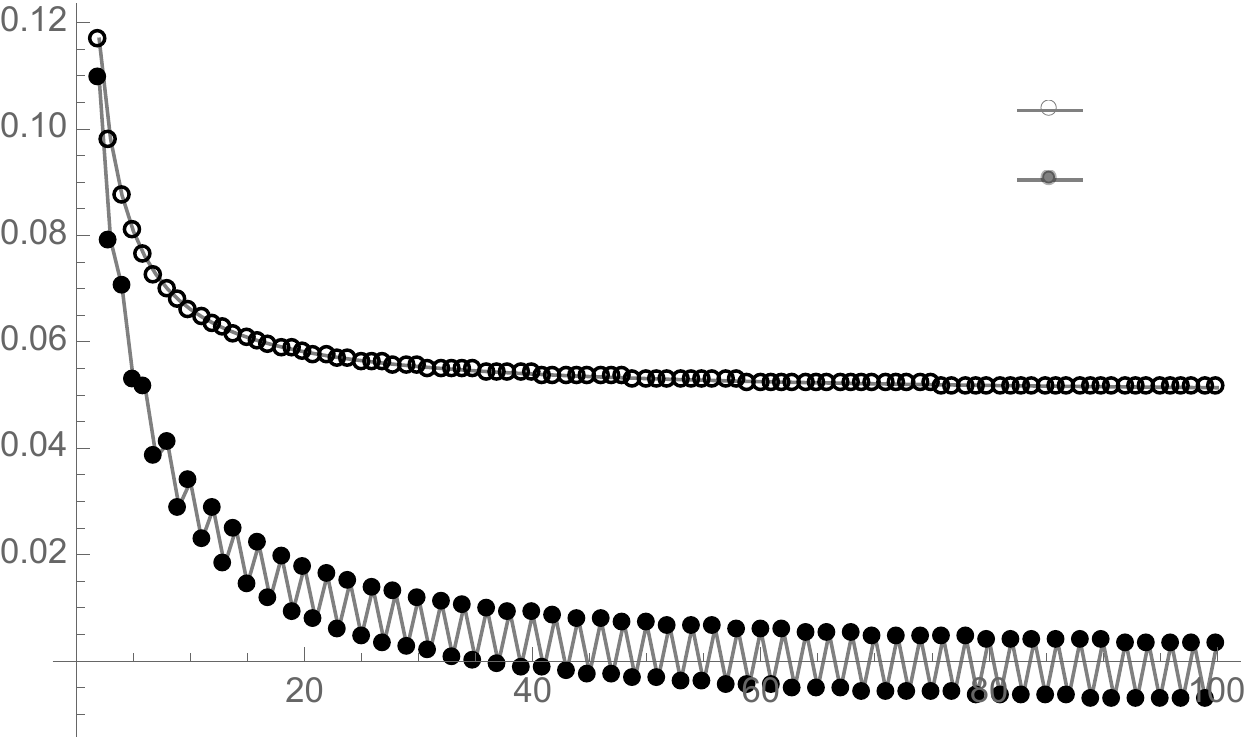}
\put(88,49){\small$\text{TR}_N$}
\put(88,43.5){\small$\text{TC}_N$}
\end{overpic}
\caption{Total tax revenues, $\text{TR}_N$, and total taxation costs, $\text{TC}_N$, for $N=2,\dots, 100$ and $x=1$, $y=1/2$, $\rho T=1$, and $\theta=\theta^*=1/4$. }
\label{tax figure}
\end{figure}

\begin{corollary}\label{tax corollary}For $\theta>0$ and initial positions $x,y\in\mathbb{R}$,
$$\text{\rm TR}_N\longrightarrow \frac{ (x+y)^2\,9(1+2e^{3\rho T})^2}{8(1-2e^{3\rho T}(5+3\rho T))^2}+  \frac{(x-y)^2}{8(\rho T+1)^2}\qquad\text{as $N\uparrow\infty$.}
$$
Moreover,
\begin{equation*}
\liminf_{N\uparrow\infty}(\text{\rm TR}_N-\text{\rm TC}_N)=\frac{(x+y)^2\,3 \left(2 e^{3 \rho T }+1\right)^2 \left(3 (\rho T +3)+2 e^{6 \rho T } (3 \rho T +5)-e^{3
   \rho T } (12 \rho T +19)\right)}{2 \left(1-2 e^{3 \rho T } (3 \rho T +5)\right)^2 \left(3 \rho T
   +e^{3 \rho T }+2 e^{6 \rho T } (3 \rho T +5)+7\right)}
\end{equation*}
and this expression is always nonnegative and strictly positive if $x\neq-y$.
\end{corollary}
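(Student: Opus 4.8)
\emph{Step 1: rewrite $\text{TR}_N$.} By Theorem~\ref{th1} the equilibrium strategies are $\bm\xi^{(N)}=\tfrac12(x+y)\bm v+\tfrac12(x-y)\bm w$ and $\bm\eta^{(N)}=\tfrac12(x+y)\bm v-\tfrac12(x-y)\bm w$, so the parallelogram identity makes the cross terms disappear and
$$\text{TR}_N=\theta\bigl(\|\bm\xi^{(N)}\|^2+\|\bm\eta^{(N)}\|^2\bigr)=\tfrac{\theta}{2}(x+y)^2\,\|\bm v\|^2+\tfrac{\theta}{2}(x-y)^2\,\|\bm w\|^2 .$$
Everything then reduces to the two limits $\lim_N\theta\|\bm v\|^2$ and $\lim_N\theta\|\bm w\|^2$. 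For these I would insert the closed-form representations of $\bm\nu$ and $\bm\omega$ from Theorem~\ref{omega and nu closed form thm}: there every component is a linear combination of geometric sequences in $e^{-\rho T/N}$, so $\bm\nu^\top\bm\nu$, $\bm\omega^\top\bm\omega$, $\mathbf{1}^\top\bm\nu$ and $\mathbf{1}^\top\bm\omega$ become finite geometric sums that can be evaluated explicitly and then expanded as $N\uparrow\infty$. This yields the two limits (both turning out independent of $\theta$) and hence the first displayed formula for $\lim_N\text{TR}_N$.

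\emph{Step 2: an identity for $\text{TR}_N-\text{TC}_N$.} Expanding the cost functionals of Definition~\ref{discrete strategies cost def} — using that $S^0$ is a martingale with $\mathscr F_0$ trivial, that $\eps_k$ is independent with $\mathbb{E}[\eps_k]=\tfrac12$, and that $G(0)=1$ — gives $\mathbb{E}[\mathscr C_{\mathbb T_N}(\bm\xi\,|\,\bm\eta)]=\tfrac12\bm\xi^\top\Gamma\bm\xi+\bm\xi^\top\tilde\Gamma\bm\eta+\theta\|\bm\xi\|^2$. Adding the two agents' expected costs and using $\tilde\Gamma+\tilde\Gamma^\top=\Gamma$ together with $\bm\xi^{(N)}+\bm\eta^{(N)}=(x+y)\bm v$ yields
$$C_N(\theta)=\tfrac12\bigl(\bm\xi^{(N)}+\bm\eta^{(N)}\bigr)^\top\Gamma\bigl(\bm\xi^{(N)}+\bm\eta^{(N)}\bigr)+\text{TR}_N=\tfrac12(x+y)^2\,\bm v^\top\Gamma\bm v+\text{TR}_N .$$
Since $\text{TR}_N=0$ for $\theta=0$, subtracting the $\theta=0$ version gives the clean identity $\text{TR}_N-\text{TC}_N=C_N(0)-\bigl(C_N(\theta)-\text{TR}_N\bigr)$, that is, $\tfrac12(x+y)^2$ times the difference of $\bm v^\top\Gamma\bm v$ computed at $\theta=0$ and at $\theta$.

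\emph{Step 3: the liminf.} In the identity of Step~2, the term $C_N(\theta)-\text{TR}_N=\tfrac12(x+y)^2\bm v^\top\Gamma\bm v$ converges, its limit coming from Theorem~\ref{costs asymptotics thm}(a) and Step~1 (the $(x-y)^2$-contributions cancelling). The term $C_N(0)$ oscillates and, by Theorem~\ref{costs asymptotics thm}(b) (summing both agents), has exactly two accumulation points, along even and along odd $N$; a short comparison of the two shows the smaller one is attained along the even subsequence, so $\liminf_N C_N(0)$ equals the even-$N$ limit. Subtracting and clearing denominators — a routine polynomial manipulation in the variables $e^{3\rho T}$ and $\rho T$ — produces the claimed closed form, whose nontrivial numerator factor is $h(\rho T)$, where $h(s):=2e^{6s}(3s+5)-e^{3s}(12s+19)+3(s+3)$.

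\emph{Step 4: nonnegativity, and the main obstacle.} The denominator of the resulting expression is manifestly positive and $(x+y)^2\ge0$, so it remains to check $h(s)\ge0$ for $s=\rho T>0$. I would verify $h(0)=0$, $h'(0)=0$, and $h''(s)=27e^{3s}\bigl(8e^{3s}(s+2)-(4s+9)\bigr)>0$ for all $s\ge0$ (the bracket equals $7$ at $s=0$ and has derivative $8e^{3s}(3s+7)-4\ge52>0$), so $h$ is strictly convex with minimum $0$ attained only at $s=0$; hence $h(\rho T)>0$ for $\rho T>0$, giving strict positivity of the liminf exactly when $x\neq-y$ and the value $0$ when $x=-y$. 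The delicate point of the whole argument is Step~1: extracting $\lim_N\theta\|\bm v\|^2$ and $\lim_N\theta\|\bm w\|^2$ from the closed forms requires carefully summing the geometric series and controlling their $N\uparrow\infty$ asymptotics, the subtlety being that individual trades of the equilibrium strategies become arbitrarily small while the quadratic functionals $\|\bm v\|^2,\|\bm w\|^2$ need not vanish; a secondary, purely mechanical nuisance is the polynomial identity in Step~3 that brings the liminf into the displayed factored form.
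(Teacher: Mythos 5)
Your argument is correct, but in Step~1 you take a substantially harder road than the paper, and it is worth knowing the shortcut. The paper does not recompute $\bm\nu^\top\bm\nu$ and $\bm\omega^\top\bm\omega$ from the closed forms of Theorem~\ref{omega and nu closed form thm}. Instead it pairs the defining relations with the vectors themselves: from $(\Gamma+\tilde\Gamma+2\theta\Id)\bm\nu=\bm1$ and $\Gamma=\tilde\Gamma+\tilde\Gamma^\top$ one gets $\bm\nu^\top\bm1=3\,\bm\nu^\top\tilde\Gamma\bm\nu+2\theta\,\bm\nu^\top\bm\nu$, and from $(\Gamma-\tilde\Gamma+2\theta\Id)\bm\omega=\bm1$ one gets $\theta\,\bm\omega^\top\bm\omega=\tfrac12\bm\omega^\top\bm1-\tfrac12\bm\omega^\top\tilde\Gamma\bm\omega$; solving for $\theta\,\bm\nu^\top\bm\nu$ and $\theta\,\bm\omega^\top\bm\omega$ reduces everything to the limits of $\bm\nu^\top\bm1$, $\bm\omega^\top\bm1$, $\bm\nu^\top\tilde\Gamma\bm\nu$, and $\bm\omega^\top\tilde\Gamma\bm\omega$, all of which are already available from \eqref{sum nui kappa>1/2, kappa neq 1 eq}, \eqref{1top omega convergence eq}, and Lemma~\ref{lemma7}. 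This turns the ``delicate point'' you flag into two lines; your direct geometric-series route would essentially replicate the effort of Lemma~\ref{lemma7} a second time (it is feasible, just needlessly heavy). Your Step~2 identity $C_N(\theta)=\tfrac12(x+y)^2\,\bm v^\top\Gamma\bm v+\text{TR}_N$ is in exactly the same spirit and is actually a cleaner organization than the paper's unexplained ``tedious though straightforward computation''; it transparently explains why the $(x-y)^2$ contributions cancel in $\text{TR}_N-\text{TC}_N$. Your identification of the liminf with the even-$N$ limit is justified by the inequality $c_0^+<c_0^-$ already established in the proof of Corollary~\ref{cost comparison cor}. Finally, your positivity argument via $h(0)=h'(0)=0$ and $h''>0$ is a valid alternative to the paper's first-derivative factorization $f'(z)=3(e^{3z}-1)(2e^{3z}(11+6z)-1)$; both close the argument.
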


	  \section{Nash equilibrium in continuous time}\label{Continuous section}
 
 In view of the convergence results from the preceding section, it is natural to ask whether the obtained limits are possibly related to a Nash equilibrium in a continuous-time extension of our model. To this end, we now introduce a continuous-time version of the primary model. Previous versions of this section's  statements and their proofs were first stated in the third author's doctoral  dissertation~\cite{ZhangThesis}. 
 
 \subsection{Definition of admissible strategies and Nash equilibria}
 
 We first define admissible strategies in continuous time. Various definitions of such strategies have been given in the literature~\cite{Gatheral,GSS,PredoiuShaikhetShreve,LorenzSchied, AS12, AlfonsiBlanc}; here we use the one from 
~\cite{LorenzSchied}, where strategies are right-continuous but may jump immediately at time $t=0$ and thus need  a starting value $Z_{0-}=z$ immediately before time $t=0$.
 
\begin{definition}\label{def ad cont strategy}
A strategy $(Z_t)_{t\ge0-}$ is called \emph{admissible}, if it satisfies the following conditions:
\begin{itemize}
\item  $(Z_t)_{t\ge0}$ is adapted to the filtration $(\mathscr{F}_t)_{t\ge0}$;
  \item the function $t\mapsto Z_t$ is $\mathbb{P}$-a.s.~right-continuous and bounded;
  \item the function $t\mapsto Z_t$ has finite and $\mathbb{P}$-a.s.~bounded total variation;
  \item there exists $T>0$ such that $Z_t=0$ $\mathbb{P}$-a.s.~for all $t\ge T$.
\end{itemize}
We denote the class of all admissible strategies $Z$ with initial value $Z_{0-}=z$ and time horizon $T$ by $\mathscr{X}(z,[0,T])$.
\end{definition}

 If two agents use the admissible strategies $X$ and $Y$, the \emph{affected price} $S^{X,Y}$ is defined as
 $$
S_t^{X,Y}=S_t^0+ \int_{[0,t)}e^{-\rho(t-s)}\,d X_s+ \int_{[0,t)}e^{-\rho(t-s)}\,d Y_s,
$$
where the integrals are Stieltjes integrals.
As in~\cite{LorenzSchied}, our definition of  the liquidation costs in a two-player setting will be motivated by a discrete-time approximation. To this end, let $X\in\mathscr{X}(x,[0,T])$, $Y\in\mathscr{X}(y,[0,T])$, and $N\in\mathbb{N}$ be   given. For $t^N_k:=kT/N\in\mathbb{T}_N$, we define the following discretized trades
\begin{equation*}
\begin{split}
\xi^N_0:=X_0-X_{0-}\quad&\text{ and }\quad\xi^N_k:=X_{t^N_k}-X_{t^N_{k-1}}\text{ for }k=\{1,2,\dots,N\};\\
\eta^N_0:=Y_0-Y_{0-}\quad&\text{ and }\quad\eta^N_k:=Y_{t^N_k}-Y_{t^N_{k-1}}\text{ for }k=\{1,2,\dots,N\}.
\end{split}
\end{equation*}
Then  $\bm\xi^N\in\mathscr{X}(x,\mathbb{T}_N)$ and $\bm\eta^N\in\mathscr{X}(y,\mathbb{T}_N)$. Furthermore, for each $N\in\mathbb{N}$ let $(\eps^N_k)_{k\in\{0,1,\dots,N\}}$ be a sequence of i.i.d.~Bernoulli\,$(\frac12)$-distributed random variables that are independent of $\sigma(\bigcup_{t\ge0}\mathscr{F}_t)$. According to Definition~\ref{discrete strategies cost def}, we have,
\begin{equation*}
\begin{split}
\mathscr{C}_{\mathbb{T}_N}(\bm\xi^N|\bm\eta^N):=&xS^0_{0-}+\sum_{k=0}^N\Big(\frac1{2}(\xi^N_k)^2-S^{\bm\xi^N,\bm\eta^N}_{t^N_k}\xi^N_k+\eps^N_k \xi^N_k\eta^N_k+\theta(\xi^N_k)^2\Big),\\
\mathscr{C}_{\mathbb{T}_N}(\bm\eta^N|\bm\xi^N):=&yS^0_{0-}+\sum_{k=0}^N\Big(\frac1{2}(\eta^N_k)^2-S^{\bm\xi^N,\bm\eta^N}_{t^N_k}\eta^N_k+(1-\eps^N_k) \xi^N_k\eta^N_k+\theta(\eta^N_k)^2\Big).
\end{split}
\end{equation*}
In the following lemma, we obtain the convergence of the expected liquidation costs. Its proof is analogous to the one of \cite[Lemma 1]{LorenzSchied} and hence omitted. 

\begin{lemma}\label{conv exp costs}
As $N\uparrow\infty$, we have
\begin{align*}
\mathbb{E}[\mathscr{C}_{\mathbb{T}_N}(\bm\xi^N|\bm\eta^N)]\longrightarrow&\mathbb{E}\bigg[\frac12\int_{[0,T]}\int_{[0,T]}  e^{-\rho|t-s|}\,d X_s\,d X_t+\int_{[0,T]}\int_{[0,t)}  e^{-\rho(t-s)}\,d Y_s\,d X_t\label{continuous-time ciost limit 1}\\
&\qquad+\frac1{2}\sum_{t\in[0,T]}\Delta X_t\Delta Y_t+\theta\sum_{t\in[0,T]}(\Delta X_t)^2\bigg].
\end{align*}
\end{lemma}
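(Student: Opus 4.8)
The plan is to establish the convergence by viewing each term on the right-hand side of the discrete cost functional $\mathscr{C}_{\mathbb{T}_N}(\bm\xi^N|\bm\eta^N)$ as a Riemann–Stieltjes sum for the corresponding integral in the limit, and then to justify passage to the limit under the expectation. First I would note that the term $xS^0_{0-}$ is independent of $N$, and that after taking the expectation the cross terms involving $S^0$ vanish by the martingale property: writing $S^{\bm\xi^N,\bm\eta^N}_{t^N_k}=S^0_{t^N_k}-\sum_{j<k}e^{-\rho(t^N_k-t^N_j)}(\xi^N_j+\eta^N_j)$, one has $\mathbb{E}[S^0_{t^N_k}\xi^N_k]=\mathbb{E}[S^0_{0-}\xi^N_k]$ by optional stopping (since $\xi^N_k$ is $\mathscr{F}_{t^N_k}$-measurable and bounded and $S^0$ is a right-continuous martingale with $\mathscr{F}_0$ trivial), so all these contributions sum to $xS^0_{0-}$, cancelling the additive constant up to an $\mathscr{F}_0$-trivial term; similarly $\mathbb{E}[\eps^N_k\xi^N_k\eta^N_k]=\tfrac12\mathbb{E}[\xi^N_k\eta^N_k]$ by independence of $\eps^N_k$. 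This reduces the claim to showing that
$$
\mathbb{E}\Big[\sum_{k}\Big(\tfrac12(\xi^N_k)^2+\sum_{j<k}e^{-\rho(t^N_k-t^N_j)}(\xi^N_j+\eta^N_j)\xi^N_k+\tfrac12\xi^N_k\eta^N_k+\theta(\xi^N_k)^2\Big)\Big]
$$
converges to the asserted limit.

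Next I would split the double sum $\sum_k\sum_{j<k}e^{-\rho(t^N_k-t^N_j)}\xi^N_j\xi^N_k$ together with the diagonal $\tfrac12\sum_k(\xi^N_k)^2$; symmetrizing, this equals $\tfrac12\sum_{j,k}e^{-\rho|t^N_j-t^N_k|}\xi^N_j\xi^N_k$, which is exactly the Riemann–Stieltjes approximation of $\tfrac12\int_{[0,T]}\int_{[0,T]}e^{-\rho|t-s|}\,dX_s\,dX_t$ since $\xi^N_k=X_{t^N_k}-X_{t^N_{k-1}}$ (with $\xi^N_0=\Delta X_0$). The mixed sum $\sum_k\sum_{j<k}e^{-\rho(t^N_k-t^N_j)}\eta^N_j\xi^N_k$ approximates $\int_{[0,T]}\int_{[0,t)}e^{-\rho(t-s)}\,dY_s\,dX_t$. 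The remaining term $\tfrac12\sum_k\xi^N_k\eta^N_k$ is where the jump contributions survive: off-diagonal products $\Delta X_{t}\Delta Y_{s}$ with $t\ne s$ and products against continuous parts go to zero as the mesh shrinks (by uniform boundedness of total variation and right-continuity, at most finitely many jumps exceed a given size and the continuous part has vanishing oscillation on a fine grid), while the near-diagonal products converge to $\tfrac12\sum_{t\in[0,T]}\Delta X_t\Delta Y_t$; the same reasoning gives $\theta\sum_k(\xi^N_k)^2\to\theta\sum_{t}(\Delta X_t)^2$. The convergence of the Stieltjes sums for the double integrals is standard given that $t\mapsto e^{-\rho|t-s|}$ is continuous and $X,Y$ have bounded variation — one should be slightly careful that the kernel $e^{-\rho(t-s)}\mathbf 1_{\{s<t\}}$ has a jump across the diagonal, but splitting into the symmetric piece (continuous kernel) plus the diagonal jump piece handles this, which is precisely why the $\tfrac12\Delta X\Delta Y$ term appears.

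Finally, to interchange limit and expectation I would invoke dominated convergence: each summand is bounded in absolute value by $C(|\xi^N_k|+|\eta^N_k|)\sum_j(|\xi^N_j|+|\eta^N_j|)\le C\,\mathrm{Var}(X)^2+C\,\mathrm{Var}(Y)^2$ type bounds, and since the strategies are admissible their total variation is $\mathbb{P}$-a.s.\ bounded, uniformly in $N$, by a (possibly random but integrable) constant; a uniform-in-$N$ integrable majorant then follows and DCT applies. The main obstacle I expect is the bookkeeping for the mixed double integral — keeping track of the one-sided kernel $\mathbf 1_{\{s<t\}}$, verifying that the Riemann–Stieltjes sums of a kernel with a diagonal discontinuity converge to the right object, and correctly allocating the diagonal mass to the $\tfrac12\Delta X_t\Delta Y_t$ sum rather than to the double integral. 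Since the argument is entirely parallel to that of \cite[Lemma 1]{LorenzSchied}, where the single-agent case is treated in full detail, I would carry over their estimates verbatim with the second agent's strategy $Y$ entering only through the (already-controlled) term $\int_{[0,t)}e^{-\rho(t-s)}\,dY_s$ in the affected price, and therefore omit the repetition.
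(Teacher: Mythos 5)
Your proposal follows exactly the route the paper intends: the paper omits this proof entirely, stating only that it is analogous to Lemma~1 of~\cite{LorenzSchied}, and your sketch reproduces that argument's key steps in the two-player setting (martingale cancellation of the $S^0$ terms, symmetrization of the off-diagonal sum with the $\tfrac12(\xi^N_k)^2$ diagonal to recover the symmetric kernel, convergence of the one-sided mixed sum to $\int\int_{[0,t)}$, survival of only the co-located jumps in $\tfrac12\sum_k\xi^N_k\eta^N_k$, and dominated convergence via the uniform bound on total variation). One intermediate identity is misstated, though the conclusion it feeds is correct: for $\mathscr{F}_{t^N_k}$-measurable $\xi^N_k$ one does \emph{not} have $\mathbb{E}[S^0_{t^N_k}\xi^N_k]=\mathbb{E}[S^0_{0}\xi^N_k]$ (take $\xi^N_k=S^0_{t^N_k}$ for a counterexample); the correct direction is to condition forward, $\mathbb{E}[S^0_{t^N_k}\xi^N_k]=\mathbb{E}[S^0_T\xi^N_k]$, and only after summing over $k$ and using $\sum_k\xi^N_k=x$ a.s.\ does one obtain $\mathbb{E}\big[\sum_k S^0_{t^N_k}\xi^N_k\big]=x\,\mathbb{E}[S^0_T]=xS^0_0$, which cancels the additive constant in expectation. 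With that repair the argument is sound and matches the reference the paper relies on.
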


Clearly, by interchanging the roles of $X$ and $Y$ in the preceding lemma, we obtain the convergence of the expected costs $\mathbb{E}[\mathscr{C}_{\mathbb{T}_N}(\bm\eta^N|\bm\xi^N)]$. Motivated by this lemma, we can now state the following definition.

\begin{definition}
Given initial asset positions $x,y\in\mathbb{R}$ and $T>0$, the \emph{liquidation costs} of $X\in\mathscr{X}(x,[0,T])$ given $Y\in\mathscr{X}(y,[0,T])$ are defined as
\begin{align*}
\mathscr{C}(X|Y)&=\frac12\int_{[0,T]}\int_{[0,T]}  e^{-\rho|t-s|}\,d X_s\,d X_t+\int_{[0,T]}\int_{[0,t)}  e^{-\rho(t-s)}\,d Y_s\,d X_t\\
&\qquad+\frac1{2}\sum_{t\in[0,T]}\Delta X_t\Delta Y_t+\theta\sum_{t\in[0,T]}(\Delta X_t)^2.
\end{align*}
\end{definition}

We can now define the concept of a Nash equilibrium in this continuous-time setting.

\begin{definition}\label{Nash Def continuous}For a given time horizon $[0,T]$ and initial asset positions $x$, $y\in\mathbb{R}$, a \emph{Nash equilibrium} is a pair $(X^*,Y^*)$ of strategies in $\mathscr{X}(x,[0,T])\times\mathscr{X}(y,[0,T])$ such that 
$$\mathbb{E}[\,\mathscr{C}(X^*|Y^*)\,]=\inf_{X\in\mathscr{X}(x,[0,T])}\mathbb{E}[\,\mathscr{C}(X|Y^*)\,]\quad\text{and}\quad \mathbb{E}[\,\mathscr{C}(Y^*|X^*)\,]=\inf_{Y\in\mathscr{X}(y,[0,T])}\mathbb{E}[\,\mathscr{C}(Y|X^*)\,].
$$
\end{definition}

\subsection{Existence and nonexistence of a Nash equilibrium}

The following result gives a complete solution to the questions of existence, uniqueness, and characterization of Nash equilibria in continuous time. It states in particular that a Nash equilibrium exists for all initial positions $x,y\in\mathbb{R}$ if and only  $\theta$ is equal to the critical value $\theta^*=1/4$; in this case, the Nash equilibrium is unique and given by the continuous-time limits of equilibrium strategies derived in Theorem~\ref{limit strategies kappa thm}  for $\theta>0$.

\begin{theorem}\label{main th continuous}
Let $\rho>0$, $T>0$, and $x,y\in\mathbb{R}$ be given.
\begin{enumerate}
\item For $\theta=\theta^*=1/4$, there exists a unique Nash equilibrium $(X^*,Y^*)$ in the class $\mathscr{X}(x,[0,T])\times\mathscr{X}(y,[0,T])$ of adapted strategies.  The optimal strategies $X^*$ and $Y^*$ are deterministic and given by
\begin{equation}\label{equilibrium strategies continuous}
X^*_t=\frac12(x+y)V_t+\frac12(x-y)W_t\qquad\text{and}\qquad
Y^*_t=\frac12(x+y)V_t-\frac12(x-y)W_t,
\end{equation}
where 
\begin{equation*}
\begin{split}
&V_t=\frac{e^{3\rho T}\big(6\rho(T-t)+4\big)-4e^{3\rho t}}{2 e^{3\rho T}(3\rho T+5)-1}\quad\text{if $t\in[0,T]$, and }V_{0-}=1,\\
&W_t=\frac{\rho(T-t)+1}{\rho T+1}\quad \text{if $t\in[0,T)$, $W_{0-}=1$, and  $W_T=0$.}
\end{split}
\end{equation*}
\item For $\theta\neq\theta^*$, a Nash equilibrium exists if and only if $x = y = 0$, in which case the Nash equilibrium is unique and the equilibrium strategies vanish identically.
\end{enumerate} 
\end{theorem}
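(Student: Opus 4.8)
\textbf{Proof plan for Theorem~\ref{main th continuous}.}

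The plan is to treat the two parts separately but to build both on the same computational engine: the observation that, for a \emph{fixed} competitor strategy $Y$, the map $X\mapsto\mathbb{E}[\mathscr{C}(X|Y)]$ is a quadratic functional on the convex set $\mathscr{X}(x,[0,T])$, so a best response is characterized by a first-order (Euler--Lagrange / variational) condition obtained by perturbing $X$ by an admissible signed measure of total mass zero. Concretely, if $\mu$ denotes the signed measure $dX$ (with an atom $\Delta X_0=X_0-1\cdot$(something) at $0$ and $X_{0-}=x$), the cost splits into the transient-impact bilinear form with kernel $e^{-\rho|t-s|}$, the cross term with kernel $e^{-\rho(t-s)}\mathbbm{1}_{\{s<t\}}$ plus the jump-matching term $\frac12\sum\Delta X_t\Delta Y_t$, and the quadratic-jump penalty $\theta\sum(\Delta X_t)^2$. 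Differentiating in direction $d\nu$ (with $\nu([0,T])=0$) yields the condition that the ``adjusted price impact function'' $t\mapsto \int_{[0,T]}e^{-\rho|t-s|}dX_s+\int_{[0,t]}e^{-\rho(t-s)}dY_s+\tfrac12\Delta Y_t+2\theta\,dX_{\{t\}}$ must be constant (a Lagrange multiplier) on the support of $X$; on the closure of the complement the usual inequality holds, but since we expect the equilibrium to charge all of $[0,T]$ this will not be an issue. Symmetrizing in $X$ and $Y$ and passing to the decoupled variables $X+Y$ and $X-Y$ — exactly as in Theorem~\ref{th1} — reduces the fixed-point problem to two scalar linear equilibrium equations, one with kernel $\Gamma+\tilde\Gamma$ (the $V$-equation) and one with kernel $\Gamma-\tilde\Gamma$ (the $W$-equation), now in their continuous-time incarnations.

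For part (a), I would then \textbf{verify directly} that the candidate pair \eqref{equilibrium strategies continuous} solves these two variational identities when $\theta=1/4$. This is where the special value $\theta^*=1/4$ enters: the $W$-equilibrium has a jump of $W$ at the terminal time $T$ (since $W_{T-}=1/(\rho T+1)\neq0=W_T$), and, more delicately, one should check whether $V$ and $W$ have jumps at $t=0$; the terminal/initial jump contributions $\tfrac12\Delta Y_t+2\theta\,\Delta X_t$ are precisely balanced only when $2\theta=\tfrac12$. So I would (i) compute $\int_{[0,T]}e^{-\rho|t-s|}\,dV_s$ and the analogous integral for $W$ in closed form — the integrand is piecewise exponential/linear, so these are elementary — (ii) add the cross/jump terms with $\theta=1/4$, and (iii) confirm the sum is constant in $t\in[0,T]$, which pins down the Lagrange multiplier and hence verifies optimality of the best-response condition. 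Since the functional is strictly convex in the relevant direction (the kernel $e^{-\rho|t-s|}$ is strictly positive definite, and the $\theta$-term is convex), the critical point is the unique global minimizer, giving both existence and uniqueness of the Nash equilibrium; deterministic optimality against a deterministic opponent is standard (conditioning, Jensen, $S^0$ a martingale), so the adapted/deterministic distinction costs nothing. Consistency with the discrete picture is automatic from Lemma~\ref{conv exp costs} and Theorem~\ref{limit strategies kappa thm}(a),(c), but I would present the verification self-containedly.

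For part (b), the claim is nonexistence whenever $\theta\neq1/4$ and $(x,y)\neq(0,0)$. I would argue by contradiction: suppose $(X^*,Y^*)$ is a Nash equilibrium; decouple into the $V$- and $W$-components as above; each component must satisfy its scalar best-response variational identity with the given $\theta$. The key structural fact is that any admissible strategy in $\mathscr{X}(z,[0,T])$ with $z\neq0$ must ``dispose of'' the mass $z$, and the terminal boundary term forces a jump $\Delta Z_T$ whose size is tied to the multiplier; running the variational identity near $t=T$ shows the best response wants a terminal block trade of a size determined by the kernel, while the jump-penalty term $2\theta(\Delta Z_T)^2$ wants it of a different size unless $2\theta=\tfrac12$ — more precisely, the matching condition at $T$ reads (schematically) $(\tfrac12-2\theta)\Delta Z_T=0$ after using the competitor's symmetry, so for $\theta\neq1/4$ we need $\Delta Z_T=0$; but then iterating the variational identity backward (it becomes a delay/renewal-type identity with no terminal block) forces $Z\equiv0$, hence $z=0$, contradiction. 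I expect \textbf{this backward-rigidity argument to be the main obstacle}: one must show cleanly that, absent the terminal jump that only $\theta^*$ permits, the linear variational identity has no admissible solution carrying nonzero mass — this likely requires either a Laplace/Gronwall argument on the resolvent of the exponential kernel, or an explicit diagonalization showing the only $L^2$ (bounded-variation) solution of the homogeneous problem is trivial. I would isolate this as a lemma about the scalar continuous-time equilibrium equation, prove it once, and apply it to both the $V$- and $W$-components (the thresholds differ slightly between the two kernels, but $\theta^*=1/4$ turns out to be the common critical value, matching Theorem~\ref{nonosc thm}).
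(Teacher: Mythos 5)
Your plan for part (a) is essentially the paper's proof: one verifies by direct computation that the candidate pair satisfies the Fredholm first-order condition of Proposition~\ref{deterministic fredholm prop} (the sum $\int_{[0,T]}e^{-\rho|t-s|}dX^*_s+\int_{[0,t)}e^{-\rho(t-s)}dY^*_s+\frac12\Delta Y^*_t+2\theta\Delta X^*_t$ is constant in $t$ exactly when $\theta=1/4$), sufficiency follows from positive definiteness of the kernel, and Lemma~\ref{det is adp lm} lifts the equilibrium to adapted strategies. One caveat: strict convexity of each player's cost gives uniqueness of \emph{best responses}, not of the \emph{equilibrium}; the paper needs a separate Rosen-type argument (Proposition~\ref{unique prop}, built on the strictly convex combination $f(\alpha)$ of all four cost functionals) to rule out two distinct fixed points. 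You would either need that argument or an explicit uniqueness proof for the decoupled linear integral equations.

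For part (b) there are two genuine gaps. First, a hypothetical Nash equilibrium lives in the class of \emph{adapted, possibly random} strategies, so the variational identity is available only as a conditional-expectation identity at stopping times (Proposition~\ref{random fredholm prop}), not pointwise in $t$. Extracting the jump relations $\frac12\Delta Y^*_\tau=2\theta\Delta X^*_\tau$ (and its mirror) at a random jump time requires approximating $\tau$ from the right by stopping times at which $X^*$ and $Y^*$ do not jump (the paper's Lemma~\ref{stopping time lemma}); your sketch treats the strategies as if they were deterministic and this step is absent. Relatedly, the rigidity argument can only pin down $\mathbb{E}[X^*_t]$, which suffices because $x=\mathbb{E}[X^*_0]$ once $\Delta X^*_0=0$. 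Second, the ``backward rigidity'' that you correctly flag as the main obstacle is left unproved. The paper resolves it in two different ways for the two components: for the difference, $M_t=\int_{[0,t]}e^{-\rho s}\,d(X^*_s-Y^*_s)$ is shown to be a continuous bounded-variation martingale, hence constant, forcing $X^*\equiv Y^*$ and $x=y$; for the sum, the jump-free integral identity for $\mathbb{E}[X^*_t]$ is converted into explicit first-order ODEs for two auxiliary functions $f,g$, whose boundary conditions force the Lagrange multiplier $\eta=0$ and then $\mathbb{E}[X^*_t]\equiv0$, hence $x=0$. Your proposed Laplace/Gronwall or diagonalization route is plausible but not carried out. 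Finally, a sign point worth fixing: at interior times the jump of the \emph{sum} $X^*+Y^*$ carries the factor $\frac12-2\theta$ while the jump of the difference is always zero, whereas at $t=T$ the sum's jump carries $\frac12+2\theta$ (never zero) and it is the \emph{difference} whose terminal jump survives only at $\theta=1/4$; your schematic condition conflates the two, and getting this right is what makes $V$ jump only at $0$ and $W$ only at $T$ in the equilibrium.
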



A natural question in view of Theorem~\ref{main th continuous} (a) is whether the convergence of the discrete-time expected costs, as observed in Theorem~\ref{costs asymptotics thm} for $\theta>0$, can   be interpreted as the convergence toward the costs of the continuous-time equilibrium. The following corollary answers this question in the affirmative.

\begin{corollary}\label{cont costs cor}For $x,y\in\mathbb{R}$, let $X^*$ and $Y^*$ be as in \eqref{equilibrium strategies continuous} and assume that $\theta=\theta^*=1/4$. Then $\mathscr{C}(X^*|Y^*)$ is equal to the limit of the discrete-time expected costs in Theorem~\ref{costs asymptotics thm} {\rm(a)}.
\end{corollary}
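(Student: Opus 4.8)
The plan is to evaluate $\mathscr{C}(X^*|Y^*)$ directly from the defining formula in Definition of liquidation costs, using the explicit expressions for $V$ and $W$ given in Theorem~\ref{main th continuous}~(a), and then to match the result term-by-term against the limit in Theorem~\ref{costs asymptotics thm}~(a). Since $X^*$ and $Y^*$ are deterministic, the expectation is trivial, and the symmetric/antisymmetric decomposition $X^*_t=\tfrac12(x+y)V_t+\tfrac12(x-y)W_t$ naturally splits the cost into a pure-$V$ part proportional to $(x+y)^2$, a pure-$W$ part proportional to $(x-y)^2$, and a mixed $V$–$W$ part. The quadratic form $\tfrac12\int\!\int e^{-\rho|t-s|}dX_s\,dX_t+\int\!\int_{[0,t)}e^{-\rho(t-s)}dY_s\,dX_t$ is bilinear, so I would first record how each of the three blocks distributes across the double integrals; the cross term involving $\int\!\int_{[0,t)}e^{-\rho(t-s)}\,dY_s\,dX_t$ is the only one that is not symmetric in its two arguments, and it is precisely this asymmetry that produces the $x^2-y^2$ contribution (the antisymmetric cross-terms between $V$ and $W$ survive, while the symmetric ones cancel against their mirror images). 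One also has to keep track of the jump contributions $\tfrac12\sum_t\Delta X_t\Delta Y_t+\theta\sum_t(\Delta X_t)^2$ with $\theta=1/4$: here $V$ has a jump only at $t=0$ (since $V_{0-}=1$ while $V_0=\tfrac{6\rho T+4-4}{2e^{3\rho T}(3\rho T+5)-1}\cdot e^{3\rho T}$, wait — $V_0=\tfrac{e^{3\rho T}(6\rho T+4)-4}{2e^{3\rho T}(3\rho T+5)-1}$), and $W$ has jumps at $t=0$ (from $1$ to $1$, i.e. no jump: $W_{0-}=1=W_0$) and at $t=T$ (from $\tfrac{1}{\rho T+1}$ down to $0$). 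So the jump terms reduce to a finite, explicitly computable sum.

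The concrete steps I would carry out are: (i) compute the Stieltjes differentials $dV_t$ and $dW_t$ on $(0,T]$, including the atoms at $t=0$ for $V$ and at $t=T$ for $W$; (ii) evaluate the six scalar integrals $\int_{[0,T]}\!\int_{[0,T]}e^{-\rho|t-s|}dA_s\,dB_t$ and $\int_{[0,T]}\!\int_{[0,t)}e^{-\rho(t-s)}dA_s\,dB_t$ for $(A,B)\in\{(V,V),(W,W),(V,W),(W,V)\}$ — these are elementary exponential integrals once the boundary atoms are handled carefully; (iii) add the jump contributions from $\tfrac12\Delta V_0\Delta W_0$-type terms and $\tfrac14(\Delta X_t)^2$ at $t=0$ and $t=T$; (iv) collect everything into the form $\alpha(x+y)^2+\beta(x^2-y^2)+\gamma(x-y)^2$ and verify that $\alpha=\tfrac{36e^{6\rho T}(8\rho T+13)-60e^{3\rho T}-3}{16(2e^{3\rho T}(3\rho T+5)-1)^2}$, $\beta=\tfrac{1}{2(\rho T+1)}$, and $\gamma=\tfrac{1}{16(\rho T+1)^2}$. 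The coefficient $\gamma$ should come out cleanly from the pure-$W$ block since $W$ is a simple affine-exponential profile; $\beta$ from the asymmetric cross term; and $\alpha$ is the laborious one.

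The main obstacle I anticipate is purely computational rather than conceptual: verifying the coefficient $\alpha$ of $(x+y)^2$. The function $V$ involves three exponential modes ($e^{3\rho t}$ appearing in the numerator, plus the constant and linear-in-$t$ pieces multiplied by $e^{3\rho T}$), so the double integral $\int\!\int e^{-\rho|t-s|}dV_s\,dV_t$ expands into a sizeable number of exponential terms that must be integrated over the triangle $\{s<t\}$ and its reflection, then combined with the $t=0$ atom. Getting the algebra to collapse to the stated closed form — in particular confirming the constants $8\rho T+13$ and $60$ — is where the real work lies. A convenient shortcut that I would use to reduce this burden: rather than recomputing from scratch, appeal to Lemma~\ref{conv exp costs}, which already gives that $\mathbb{E}[\mathscr{C}_{\mathbb{T}_N}(\bm\xi^N|\bm\eta^N)]$ converges to $\mathbb{E}[\mathscr{C}(X|Y)]$ for any fixed admissible $X,Y$; applying this with $X=X^*$, $Y=Y^*$ and using that the discretizations $\bm\xi^N,\bm\eta^N$ of $(X^*,Y^*)$ converge, as renormalized strategies, to the same limits $V,W$ that describe the discrete equilibrium strategies in the high-frequency limit (Theorem~\ref{limit strategies kappa thm}~(a),(c)), one identifies $\mathbb{E}[\mathscr{C}(X^*|Y^*)]$ with $\lim_N\mathbb{E}[\mathscr{C}_{\mathbb{T}_N}(\bm\xi^{(N)}|\bm\eta^{(N)})]$ — provided one checks that the discretization of the continuous equilibrium is asymptotically a near-optimal response, which follows from Theorem~\ref{main th continuous}~(a) together with the convergence of discrete equilibria. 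This route avoids the brute-force integral entirely and is likely the intended proof; the fallback direct computation in steps (i)--(iv) serves as a cross-check.
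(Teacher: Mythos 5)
Your ``fallback'' direct computation is in fact exactly the paper's proof: the paper writes $dX^*_t=\varphi_+(t)\,dt$ and $dY^*_t=\varphi_-(t)\,dt$ on $(0,T)$, records the atoms $\Delta X^*_0=\Delta Y^*_0$ and $\Delta X^*_T=-\Delta Y^*_T$ (your bookkeeping of the jumps of $V$ and $W$ is correct), evaluates the resulting elementary exponential integrals, and collects the coefficients of $(x+y)^2$, $x^2-y^2$, $(x-y)^2$. So steps (i)--(iv) of your outline, carried through, constitute a valid proof.

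The route you actually advocate as ``likely the intended proof,'' however, has a genuine gap. Lemma~\ref{conv exp costs} gives $\mathbb{E}[\mathscr{C}_{\mathbb{T}_N}(\bm\xi^N|\bm\eta^N)]\to\mathbb{E}[\mathscr{C}(X^*|Y^*)]$ where $\bm\xi^N,\bm\eta^N$ are the \emph{discretizations of the fixed continuous-time strategies} $X^*,Y^*$. The quantity in Theorem~\ref{costs asymptotics thm}~(a) is the limit of $\mathbb{E}[\mathscr{C}_{\mathbb{T}_N}(\bm\xi^{(N)}|\bm\eta^{(N)})]$ for the \emph{discrete-time equilibrium strategies}, which are a different sequence. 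Knowing that both sequences of renormalized cumulative strategies converge pointwise to the same curves $V,W$ does not let you transfer the cost limit from one sequence to the other: the expected cost is a quadratic form ($\frac12\bm\xi^\top(\Gamma+2\theta\Id)\bm\xi+\bm\xi^\top\tilde\Gamma\bm\eta$) that is not continuous with respect to pointwise convergence of the cumulative positions --- small high-frequency wiggles in $\bm\xi^{(N)}$ that vanish in the cumulative sense can still contribute at order one to $\theta\sum_k\xi_k^2$ and to $\bm\xi^\top\Gamma\bm\xi$. Establishing that they do not is precisely the content of the laborious Lemmas on the limits of $\bm\nu^\top\tilde\Gamma\bm\nu$, $\bm\omega^\top\tilde\Gamma\bm\omega$, etc., in the appendix; your shortcut implicitly assumes their conclusion. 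The appeal to ``near-optimality'' does not close this either: near-optimality of the discretized $X^*$ as a response is measured against a different opponent strategy ($\bm\eta^N$ versus $\bm\eta^{(N)}$) and in any case bounds the cost gap only from one side. The honest statement is that the corollary is proved by computing $\mathscr{C}(X^*|Y^*)$ explicitly and observing that it coincides with the already-computed limit of Theorem~\ref{costs asymptotics thm}~(a); the agreement of the two is the content of the corollary, not something that can be read off from Lemma~\ref{conv exp costs} alone.
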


\subsection{Uniqueness and first-order conditions}

In this section, we analyze Nash equilibria in continuous time. In particular, we will prove the uniqueness of Nash equilibria, show that Nash equilibria in the class of deterministic strategies are also Nash equilibria in the larger class of adapted strategies, and provide a first-order condition for optimal strategies in a Nash equilibrium extending the Fredholm integral equations derived in~\cite{GSS}.
For admissible strategies $X$ and $Y$, we define the following expressions,
\begin{equation*}
\begin{split}
&C( X, Y):=\mathbb{E}\bigg[\int_{[0,T]}\int_{[0,T]}  e^{-\rho|t-s|}\,d Y_s\,d X_t\bigg],\\
&C_1(  X,Y):=\mathbb{E}\bigg[\int_{[0,T]}\int_{[0, t)}  e^{-\rho(t-s)}\,d Y_s\,d X_t\bigg],\quad C_2(  X,Y):=\mathbb{E}\bigg[\sum_{t\in[0,T]}\Delta X_t\Delta Y_t\bigg].
\end{split}
\end{equation*}
Then,
\begin{equation}\label{continuous-times costs shorthand decomposition}
\mathbb{E}\big[\mathscr{C}(X|Y)\big]=\frac{1}{2}C(X,X)+C_1(X,Y)+\frac1{2}C_2(X,Y)+\theta C_2(X,X).
\end{equation}
A first step toward proving the uniqueness of Nash equilibria is the strict convexity of the map $X\mapsto\mathbb{E}[\mathscr{C}(X|Y)]$, which is established in the following lemma.

\begin{lemma}\label{convex lm}
Given $T>0, \rho>0$, $\theta\ge0$, initial asset positions $x, y\in\mathbb{R}$ and an admissible strategy $Y\in\mathscr{X}(y,[0,T])$, the functional $\mathbb{E}[\mathscr{C}(X|Y)]$ is strictly convex with respect to $X\in\mathscr{X}(x,[0,T])$.
\end{lemma}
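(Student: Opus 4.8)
The strategy is to decompose the functional according to \eqref{continuous-times costs shorthand decomposition} and to show that each of the three $X$-dependent quadratic forms $X\mapsto C(X,X)$, $X\mapsto C_2(X,X)$, and the linear term $X\mapsto C_1(X,Y)$ behaves appropriately under convex combinations, with at least one of the genuinely quadratic pieces being \emph{strictly} convex. Since $C_1(X,Y)$ is linear in $X$, it contributes nothing to strict convexity (and nothing against it), and $\theta C_2(X,X)\ge 0$ is at worst convex. Hence the whole burden falls on the term $\tfrac12 C(X,X)=\tfrac12\mathbb{E}\big[\int_{[0,T]}\int_{[0,T]}e^{-\rho|t-s|}\,dX_s\,dX_t\big]$, and the heart of the matter is to prove that this quadratic form is strictly positive definite on increments of admissible strategies with the \emph{same} initial value $x$ — equivalently, that $\int\!\int e^{-\rho|t-s|}\,d\mu_s\,d\mu_t>0$ for every nonzero signed measure $\mu$ of bounded variation on $[0,T]$ (here $\mu = dX - dX'$, which is a nonzero finite signed measure whenever $X\ne X'$, and which has total mass zero since both strategies start at $x$ and end at $0$, though we will not even need the zero-mass property).

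\textbf{Key steps.} First, fix $X,X'\in\mathscr{X}(x,[0,T])$ with $X\neq X'$ and $\alpha\in(0,1)$, and write $X^\alpha=\alpha X+(1-\alpha)X'$; note $X^\alpha\in\mathscr{X}(x,[0,T])$. Expanding each term of \eqref{continuous-times costs shorthand decomposition} along $X^\alpha$ and using bilinearity, the difference $\alpha\,\mathbb{E}[\mathscr{C}(X|Y)]+(1-\alpha)\,\mathbb{E}[\mathscr{C}(X'|Y)]-\mathbb{E}[\mathscr{C}(X^\alpha|Y)]$ reduces, after the linear terms cancel, to $\tfrac12\alpha(1-\alpha)\big(C(X-X',X-X')+2\theta\,C_2(X-X',X-X')\big)$. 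Since $C_2(Z,Z)=\mathbb{E}[\sum_t(\Delta Z_t)^2]\ge 0$ and $\theta\ge 0$, it suffices to show $C(Z,Z)>0$ for $Z:=X-X'\not\equiv 0$. Second, for the positive-definiteness of $C$: the exponential kernel $e^{-\rho|t-s|}$ is the covariance kernel of an Ornstein--Uhlenbeck-type stationary process, and more concretely one has the well-known representation, valid for a finite signed measure $\mu$ on $[0,T]$,
\begin{equation*}
\int_{[0,T]}\int_{[0,T]} e^{-\rho|t-s|}\,\mu(ds)\,\mu(dt)=\frac{1}{\pi}\int_{-\infty}^{\infty}\frac{2\rho}{\rho^2+\lambda^2}\,\big|\widehat\mu(\lambda)\big|^2\,d\lambda,
\end{equation*}
where $\widehat\mu(\lambda)=\int_{[0,T]}e^{i\lambda t}\,\mu(dt)$. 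The integrand is nonnegative, so the left side is $\ge 0$; and it is $=0$ only if $\widehat\mu\equiv 0$, i.e. $\mu=0$, by uniqueness of the Fourier transform of finite measures. Applying this with $\mu$ the (pathwise) signed measure induced by $dZ$ on the event $\{Z\not\equiv 0\}$, which has positive probability, we get $C(Z,Z)=\mathbb{E}[\int\!\int e^{-\rho|t-s|}\,dZ_s\,dZ_t]>0$. Combining, the displayed difference is strictly positive, which is precisely strict convexity of $X\mapsto\mathbb{E}[\mathscr{C}(X|Y)]$ on $\mathscr{X}(x,[0,T])$.

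\textbf{Main obstacle.} The only real content is the strict positive-definiteness of the exponential kernel against bounded-variation signed measures, and the only subtlety there is the integrability/measurability bookkeeping needed to justify the Fourier (or Laplace/Bochner) representation for a signed measure that is merely of finite total variation on a compact interval — this is harmless since such $\mu$ is a finite measure and $\widehat\mu$ is a bounded continuous function, so the spectral integral converges absolutely. An alternative to the Fourier route, if one prefers to stay elementary, is to use the differential identity $e^{-\rho|t-s|}=\rho\int_0^\infty e^{-\rho|t-r|}e^{-\rho|s-r|}\,dr$ is \emph{not} quite right, so instead observe directly that for any smooth test density the operator with kernel $e^{-\rho|t-s|}$ is the Green's function of $-\tfrac{1}{2\rho}\,\partial_t^2+\tfrac{\rho}{2}$ on the line, hence positive; one then reduces the general BV measure to this case by a standard mollification/limiting argument. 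Either way, once $C(Z,Z)>0$ for $Z\ne 0$ is in hand, assembling the proof is immediate, and no difficulty arises from the stochastic (adapted, expectation) layer since everything is pathwise and the expectation of a strictly positive random variable on a positive-probability event is strictly positive.
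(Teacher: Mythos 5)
Your proposal is correct and follows essentially the same route as the paper: decompose via \eqref{continuous-times costs shorthand decomposition}, observe that the terms involving $Y$ are affine in $X$ and that $\theta C_2(X,X)$ is convex, and reduce strict convexity to the strict positive-definiteness $C(X-X',X-X')>0$, which the paper obtains by citing the Bochner positive-definiteness of $t\mapsto e^{-\rho t}$ from \cite[Proposition 2.6]{GSS}. The only difference is that you unpack that citation with the explicit spectral representation of the exponential kernel, which is exactly the content of the cited result.
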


\begin{proof}
Let $\alpha\in(0,1)$ and $X^0, X^1\in\mathscr{X}(x,[0,T])$ be two distinct admissible strategies. Since the function $t\mapsto   e^{-\rho t}$ is positive definite in the sense of Bochner, we obtain
\begin{equation}\label{pos def Bochner costs eq}
C(X^1-X^0,X^1-X^0)=\mathbb{E}\bigg[\int_{[0,T]}\int_{[0,T]}  e^{-\rho|s-t|}\,d\big(X^1_s-X_s^0\big)\,d\big(X^1_t-X_t^0\big)\bigg]>0;
\end{equation}
see \cite[Proposition 2.6]{GSS}.
Hence,
\begin{equation*}
\begin{split}
\lefteqn{C(\alpha X^1+(1-\alpha)X^0,\alpha X^1+(1-\alpha)X^0)}\\
&<C(\alpha X^1+(1-\alpha)X^0,\alpha X^1+(1-\alpha)X^0)+\alpha(1-\alpha)C(X^1-X^0,X^1-X^0)\\
&=\alpha^2C(X^1,X^1)+(1-\alpha)^2C(X^0,X^0)+2\alpha(1-\alpha)C(X^0,X^1)\\
&\qquad+\alpha(1-\alpha)C(X^1,X^1)-2\alpha(1-\alpha)C(X^0,X^1)+\alpha(1-\alpha)C(X^0,X^0)\\
&=\alpha C(X^1,X^1)+(1-\alpha)C(X^0,X^0).
\end{split}
\end{equation*}
Moreover, $C_1(X,Y)$ and $C_2(X,Y)$ are clearly affine in $X$, whereas $C_2(X,X)$ is convex. The result thus follows from \eqref{continuous-times costs shorthand decomposition}.
\end{proof}

 We can now establish the uniqueness of Nash equilibria.

\begin{proposition}\label{unique prop}
Given $T>0, \rho>0$, $\theta\ge0$ and initial asset positions $x, y\in\mathbb{R}$, there exists at most one Nash equilibrium in the class $\mathscr{X}(x,[0,T])\times\mathscr{X}(y,[0,T])$ of adapted strategies.
\end{proposition}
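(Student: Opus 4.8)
The plan is to exploit the strict convexity established in Lemma~\ref{convex lm} together with the symmetry of the game under interchange of the two players. Suppose $(X^*,Y^*)$ and $(\widehat X,\widehat Y)$ are both Nash equilibria in $\mathscr{X}(x,[0,T])\times\mathscr{X}(y,[0,T])$. The first step is to argue, by a standard variational convexity argument, that each agent's best response to a fixed strategy of the opponent is unique: if $Y$ is fixed, then by Lemma~\ref{convex lm} the functional $X\mapsto\mathbb{E}[\mathscr{C}(X|Y)]$ is strictly convex on the convex set $\mathscr{X}(x,[0,T])$, hence has at most one minimizer. (One should check that $\mathscr{X}(x,[0,T])$ is indeed convex, which is immediate from Definition~\ref{def ad cont strategy}, since the constraints — adaptedness, right-continuity, bounded total variation, vanishing after a common $T$, and initial value $x$ — are all preserved under convex combinations.) Consequently it suffices to show $X^*=\widehat X$ and $Y^*=\widehat Y$ as a pair; uniqueness of best responses alone does not give this, so the argument must couple the two players.

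The second and main step is the coupling argument. Consider the "average" candidate pair $\bar X:=\tfrac12(X^*+\widehat X)$, $\bar Y:=\tfrac12(Y^*+\widehat Y)$, which again lies in $\mathscr{X}(x,[0,T])\times\mathscr{X}(y,[0,T])$ by convexity of the strategy classes. The key computation is to compare $\mathbb{E}[\mathscr{C}(\bar X|Y^*)]+\mathbb{E}[\mathscr{C}(\bar X|\widehat Y)]$ with what the equilibria deliver. Using the decomposition \eqref{continuous-times costs shorthand decomposition} and the fact that $C_1(X,Y)$ and $C_2(X,Y)$ are affine in $X$ while $\tfrac12 C(X,X)+\theta C_2(X,X)$ is strictly convex in $X$ (this is exactly the content of the proof of Lemma~\ref{convex lm}, since $C_2(X,X)=\mathbb{E}[\sum_t(\Delta X_t)^2]$ is convex and $C(X,X)$ is strictly convex by the Bochner positive-definiteness bound \eqref{pos def Bochner costs eq}), one gets a strict inequality
\begin{equation*}
\mathbb{E}[\mathscr{C}(\bar X|Y^*)]<\tfrac12\mathbb{E}[\mathscr{C}(X^*|Y^*)]+\tfrac12\mathbb{E}[\mathscr{C}(\widehat X|Y^*)]
\end{equation*}
unless $X^*=\widehat X$, and similarly with the roles of $Y^*,\widehat Y$ and of the two players interchanged. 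Summing the four resulting inequalities (the two for the first player against $Y^*$ and against $\widehat Y$, and the two for the second player against $X^*$ and against $\widehat X$) and then invoking the equilibrium (minimality) property of $(X^*,Y^*)$ and $(\widehat X,\widehat Y)$ to bound the averaged right-hand sides from below by $\mathbb{E}[\mathscr{C}(X^*|Y^*)]+\mathbb{E}[\mathscr{C}(Y^*|X^*)]$ and $\mathbb{E}[\mathscr{C}(\widehat X|\widehat Y)]+\mathbb{E}[\mathscr{C}(\widehat Y|\widehat X)]$ respectively, one should reach a contradiction with the strictness unless $X^*=\widehat X$ and $Y^*=\widehat Y$. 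The bookkeeping here has to be done carefully: the cross terms $C_1(\bar X,\bar Y)$ expand into averages of $C_1(X^*,Y^*),C_1(X^*,\widehat Y),C_1(\widehat X,Y^*),C_1(\widehat X,\widehat Y)$, and one needs all the affine pieces to cancel so that only the strictly convex quadratic pieces survive to produce the strict inequality.

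The main obstacle I anticipate is arranging the four-fold summation so that the equilibrium inequalities point the right way and the affine cross-terms cancel exactly, leaving a clean contradiction from strict convexity; this is where market-impact game uniqueness proofs typically require a slightly delicate symmetrization (it is essentially the same maneuver as in uniqueness proofs for potential or quadratic games, but the transient-impact cross term $C_1$ is not symmetric, so one cannot simply appeal to a potential). A secondary, more routine point to verify is that the strategy classes are genuinely convex and that all the functionals in \eqref{continuous-times costs shorthand decomposition} are finite for admissible strategies, so that the convex-combination manipulations are legitimate; these follow from boundedness and bounded-variation in Definition~\ref{def ad cont strategy} together with the estimates already implicit in Lemma~\ref{conv exp costs}.
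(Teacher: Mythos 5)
Your overall strategy---symmetrizing over the two players and the two putative equilibria and letting strict convexity produce a contradiction---is the same as the paper's, and your preliminary remarks (convexity of $\mathscr{X}(x,[0,T])$, insufficiency of best-response uniqueness) are correct. However, the specific midpoint coupling you propose does not close, and the obstacle you flag at the end is a genuine gap, not bookkeeping. Write $f(\alpha)$ for the sum of the four costs along the segments $X^\alpha=\alpha\widehat X+(1-\alpha)X^*$, $Y^\alpha=\alpha\widehat Y+(1-\alpha)Y^*$, with the second equilibrium moved in the opposite direction; your four convexity inequalities summed are exactly $f(1/2)<\tfrac12 f(0)+\tfrac12 f(1)$. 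The Nash property bounds the left side from below, $f(1/2)\ge f(0)$, but it also only bounds the cross terms $\mathbb{E}[\mathscr{C}(\widehat X|Y^*)]$, $\mathbb{E}[\mathscr{C}(X^*|\widehat Y)]$, etc.\ appearing in $f(1)$ from \emph{below}, so all you obtain is $f(0)<f(1)$, which is perfectly consistent with $f(1)\ge f(0)$---no contradiction. Even the sharper route of expanding $f(\alpha)=f(0)+\alpha A+\alpha^2B$ ($f$ is quadratic in $\alpha$) and using $f(1/2)\ge f(0)$ only yields $2A+B\ge0$, which after computation reduces to $C(U+V,U+V)\le0$ with $U=\widehat X-X^*$, $V=\widehat Y-Y^*$, i.e.\ $U=-V$, not $U=V=0$. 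Evaluating at the midpoint is too coarse.

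The paper's proof fixes precisely this point by working at first order: since $f(\alpha)\ge f(0)$ for all $\alpha$, the one-sided derivative satisfies $f'(0+)\ge0$, and this derivative is then computed explicitly from the decomposition \eqref{continuous-times costs shorthand decomposition}. The crucial structural ingredient---absent from your write-up---is that the linear cross terms do not merely ``cancel'' but recombine via the identity $C_1(U,V)+C_1(V,U)+C_2(U,V)=C(U,V)$ into
\begin{equation*}
f'(0)=-C(U,U)-C(V,V)-C(U,V)-2\theta\bigl(C_2(U,U)+C_2(V,V)\bigr),
\end{equation*}
which is strictly negative unless $U=V=0$, because $C(U,U)+C(V,V)+C(U,V)=\tfrac12C(U,U)+\tfrac12C(V,V)+\tfrac12C(U+V,U+V)>0$ by the Bochner positive definiteness \eqref{pos def Bochner costs eq}. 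This contradicts $f'(0+)\ge0$. So the missing idea is the passage from the midpoint inequality to the first-order optimality condition at $\alpha=0$ together with the $C=C_1+C_1^{\top}+C_2$ identity; without it, the uncontrolled cross terms in $f(1)$ point the estimate in the wrong direction.
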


\begin{proof}We use a similar reasoning as in \cite[Lemma 3.3]{SchiedZhangHotPotato} and \cite[Lemma 4.1]{SchiedZhangCARA}. We assume by way of contradiction that there exist two distinct Nash equilibria $(X^0,Y^0)$ and $(X^1,Y^1)$ in $\mathscr{X}(x,[0,T])\times\mathscr{X}(y,[0,T])$. Then we define for $\alpha\in[0,1]$
$$
X^{\alpha}:=\alpha X^1+(1-\alpha)X^0\qquad\text{and}\qquad Y^{\alpha}:=\alpha Y^1+(1-\alpha)Y^0.
$$
We furthermore let
$$
f(\alpha):=\mathbb{E}\Big[\mathscr{C}(X^{\alpha}|Y^0)+\mathscr{C}(Y^{\alpha}|X^0)+\mathscr{C}(X^{1-\alpha}|Y^1)+\mathscr{C}(Y^{1-\alpha}|X^1)\Big].
$$
According to Lemma~\ref{convex lm} and the assumption that the two Nash equilibria  $(X^0,Y^0)$ and $(X^1,Y^1)$ are distinct,  $f(\alpha)$ is strictly convex in $\alpha$ and thus has a unique minimum at $\alpha=0$. It follows that
\begin{equation}\label{contradiction >}
\lim_{h\downarrow0}\frac{f(h)-f(0)}{h}=\frac{df(\alpha)}{d\alpha}\Big|_{\alpha=0+}\ge0.
\end{equation}
On the other hand, we have
\begin{equation*}
\begin{split}
\frac{d}{d\alpha}\Big|_{\alpha=0}\mathbb{E}[\mathscr{C}(X^{\alpha}|Y^0)]&=C(X^1-X^0,X^0)+C_1(X^1-X^0,Y^0)\\
&+\frac1{2}C_2(X^1-X^0,Y^0)+2\theta C_2(X^1-X^0,X^0).
\end{split}
\end{equation*}
Taking derivatives of $\mathbb{E}[\mathscr{C}(Y^{\alpha}|X^0)]$, $\mathbb{E}[\mathscr{C}(X^{1-\alpha}|Y^1)]$ and $\mathbb{E}[\mathscr{C}(Y^{1-\alpha}|X^1)]$ in the same way gives
\begin{equation*}
\begin{split}
\lefteqn{\frac{df(\alpha)}{d\alpha}\Big|_{\alpha=0}}\\
&=-C(X^1-X^0,X^1-X^0)-C(Y^1-Y^0,Y^1-Y^0)-C(Y^1-Y^0,Y^1-Y^0)\\
&\qquad-2\theta \Big(C_2(X^1-X^0,X^1-X^0)+C_2(Y^1-Y^0,Y^1-Y^0)\Big)\\
&<-\frac12C(X^1-X^0,X^1-X^0)-\frac12C(Y^1-Y^0,Y^1-Y^0)-\frac12C(X^1-X^0,Y^1-Y^0)\\
&<0,
\end{split}
\end{equation*}
which contradicts \eqref{contradiction >}. \end{proof}

 The following lemma will allow us to concentrate on deterministic strategies when searching for a Nash equilibrium. It is similar to \cite[Lemma 3.4]{SchiedZhangHotPotato}.

\begin{lemma}\label{det is adp lm}
A Nash equilibrium in the class $\mathscr{X}_{\text{det}}(x,[0,T])\times\mathscr{X}_{\text{det}}(y,[0,T])$ of deterministic strategies is also a Nash equilibrium in the class $\mathscr{X}(x,[0,T])\times\mathscr{X}(y,[0,T])$ of adapted strategies.
\end{lemma}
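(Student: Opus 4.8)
The plan is to show that deterministic optimizers remain optimal once adapted competitors are allowed, by a conditioning/Jensen argument that exploits the strict convexity established in Lemma~\ref{convex lm} together with the fact that the unaffected price process $S^0$ is a martingale. Suppose $(X^*,Y^*)\in\mathscr{X}_{\text{det}}(x,[0,T])\times\mathscr{X}_{\text{det}}(y,[0,T])$ is a Nash equilibrium among deterministic strategies. Since the roles of the two agents are symmetric, it suffices to show that $\mathbb{E}[\mathscr{C}(X^*|Y^*)]\le\mathbb{E}[\mathscr{C}(X|Y^*)]$ for every \emph{adapted} $X\in\mathscr{X}(x,[0,T])$; the same argument with $X$ and $Y$ interchanged then finishes the proof.

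So fix an adapted $X\in\mathscr{X}(x,[0,T])$ and set $\overline X_t:=\mathbb{E}[X_t]$. The first step is to check that $\overline X$ is again an admissible (now deterministic) strategy: right-continuity and the boundedness/terminal conditions pass through the expectation by dominated convergence using the uniform bounds in Definition~\ref{def ad cont strategy}, and $\overline X_{0-}=x$. The second step is the key inequality $\mathbb{E}[\mathscr{C}(X|Y^*)]\ge\mathbb{E}[\mathscr{C}(\overline X|Y^*)]$. Here one has to be a little careful because $\mathscr{C}(X|Y^*)$ as defined in Definition~\ref{discrete strategies cost def} involves the price process and hence $S^0$; but the passage to the continuous-time cost functional in Lemma~\ref{conv exp costs} has already eliminated the martingale term, so for the purposes of this lemma I would work directly with the closed-form $\mathbb{E}[\mathscr{C}(X|Y^*)]=\tfrac12 C(X,X)+C_1(X,Y^*)+\tfrac12 C_2(X,Y^*)+\theta C_2(X,X)$ from \eqref{continuous-times costs shorthand decomposition}. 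Since $Y^*$ is deterministic, $C_1(X,Y^*)$ and $C_2(X,Y^*)$ are affine in $X$ and therefore depend on $X$ only through $\overline X$ (one can push the expectation inside the Stieltjes integrals against the deterministic integrator $Y^*$ via Fubini). The genuinely convex pieces are $C(X,X)$ and $C_2(X,X)$: by the Bochner positive-definiteness used in \eqref{pos def Bochner costs eq}, writing $X=\overline X+(X-\overline X)$ gives $C(X,X)=C(\overline X,\overline X)+2C(\overline X,X-\overline X)+C(X-\overline X,X-\overline X)$, the cross term vanishes in expectation because $\overline X$ is deterministic and $\mathbb{E}[X_t-\overline X_t]=0$, and $\mathbb{E}[C(X-\overline X,X-\overline X)]\ge0$; the same decomposition applies verbatim to $C_2$. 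Summing up, $\mathbb{E}[\mathscr{C}(X|Y^*)]\ge\mathbb{E}[\mathscr{C}(\overline X|Y^*)]$.

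The third and final step is to invoke the Nash property of $(X^*,Y^*)$ in the deterministic class: since $\overline X\in\mathscr{X}_{\text{det}}(x,[0,T])$, we have $\mathbb{E}[\mathscr{C}(\overline X|Y^*)]\ge\mathbb{E}[\mathscr{C}(X^*|Y^*)]$, and chaining the two inequalities yields $\mathbb{E}[\mathscr{C}(X|Y^*)]\ge\mathbb{E}[\mathscr{C}(X^*|Y^*)]$, which is exactly what was needed. (Strict convexity from Lemma~\ref{convex lm} is not strictly necessary for this direction, but it does tell us that the inequality is strict unless $X=\overline X$ a.s., i.e.\ unless $X$ was already deterministic — consistent with the uniqueness statement in Proposition~\ref{unique prop}.)

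The main obstacle I anticipate is the bookkeeping in the second step: justifying that $C_1(\cdot,Y^*)$ and $C_2(\cdot,Y^*)$ really do factor through $\overline X$, which requires a Fubini-type interchange of $\mathbb{E}$ with the (random) Stieltjes integral $\int d X_t$, and handling the atoms $\Delta X_t$ in $C_2$ — one should phrase these as sums over the (at most countable, but now possibly $\omega$-dependent) jump times. The cleanest route is to note that total-variation boundedness is uniform in $\omega$ by admissibility, so $X\mapsto \int f\,dX$ extends to a bounded linear functional and $\mathbb{E}$ commutes with it by a standard approximation by step integrands; the jump term is then subsumed in the integral against $Y^*$ (whose jump set is a fixed finite set since $Y^*$ is a deterministic admissible strategy, so $C_2(X,Y^*)=\sum_{t:\Delta Y^*_t\ne0}\mathbb{E}[\Delta X_t]\Delta Y^*_t$ depends only on $\overline X$). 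Everything else is a routine convexity-plus-conditioning argument.
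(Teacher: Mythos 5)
Your proof is correct, but it takes a genuinely different and considerably heavier route than the paper's. The paper's proof is purely pathwise: once $Y^*$ is deterministic, $\mathscr{C}(\cdot\,|Y^*)$ is a deterministic functional of the path, and for $\mathbb{P}$-almost every $\omega$ the realization $X(\omega)$ of an adapted strategy is \emph{itself} an element of $\mathscr{X}_{\text{det}}(x,[0,T])$ (right-continuity, boundedness, bounded variation, and the terminal condition all hold pathwise by Definition~\ref{def ad cont strategy}). Hence $\mathscr{C}(X(\omega)|Y^*)\ge\mathscr{C}(X^*|Y^*)$ for a.e.\ $\omega$ by the deterministic Nash property, and taking expectations finishes the proof — no averaging, no Fubini interchange, and no convexity are needed. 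Your Jensen-type detour through the mean path $\overline X_t=\mathbb{E}[X_t]$ works, and the interchanges you worry about are all justifiable under the uniform total-variation bound, but they are exactly the bookkeeping the pathwise observation makes unnecessary. Two small points in your write-up: the jump set of a deterministic admissible $Y^*$ is at most \emph{countable}, not finite (a BV function can jump countably often); the sum defining $C_2(X,Y^*)$ still converges absolutely, so nothing breaks, but the claim as stated is inaccurate. Also, the strictness remark you extract from Lemma~\ref{convex lm} (strict unless $X=\overline X$ a.s.) is weaker than what the pathwise argument yields, namely equality iff $\mathscr{C}(X(\cdot)|Y^*)=\mathscr{C}(X^*|Y^*)$ a.s., which via strict convexity in the deterministic class forces $X(\omega)=X^*$ for a.e.\ $\omega$.
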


\begin{proof}
Let $(X^*,Y^*)\in\mathscr{X}_{\text{det}}(x,[0,T])\times\mathscr{X}_{\text{det}}(y,[0,T])$ be a Nash equilibrium in the class of deterministic strategies. For any strategy $X\in\mathscr{X}(x,[0,T])$, we have 
$$
\mathscr{C}(X(\omega)|Y^*)\ge\mathscr{C}(X^*|Y^*)\quad\text{ for $\mathbb{P}$-almost all }\omega\in\Omega.
$$
Therefore we obtain $\mathbb{E}[\mathscr{C}(X|Y^*)]\ge\mathscr{C}(X^*|Y^*)$ with equality if and only if $\mathscr{C}(X|Y^*)=\mathscr{C}(X^*|Y^*)$ $\mathbb{P}$-a.s. This shows the optimality of $X^*$ within the class $\mathscr{X}(x,[0,T])$ of adaptive strategies. Analogously we obtain the optimality of $Y^*$ within the class $\mathscr{X}(y,[0,T])$ of adaptive strategies. This completes the proof.
\end{proof}

 We now derive   first-order conditions for the optimality of $\mathbb{E}[\mathscr{C}(X|Y)]$ within the class $\mathscr{X}(x,[0,T])$ and for given $Y\in\mathscr{X}(y,[0,T])$. The first result is the following proposition, which, for our special case of exponential decay of price impact, extends \cite[Theorem 2.11]{GSS}, where, for $Y=0$ and $\theta=0$, the optimality of $X\in \mathscr{X}_{det}(x,[0,T])$ is characterized in terms of  a Fredholm integral equation.

\begin{proposition}\label{deterministic fredholm prop}Let  $x,y\in\mathbb{R}$ and  $Y\in\mathcal{X}_{det}(y,T)$ be given. Then a strategy  $X^{*}\in\mathcal{X}_{det}(x, [0,T])$ minimizes the liquidation costs $\mathscr{C}(X|Y)$ over $X\in \mathscr{X}_{\text{det}}(x,[0,T])$, if  and only if there exists a constant $\eta\in\mathbb{R}$ such that for all $t\in[0,T]$,
\begin{equation}\label{eqfred}
\int_{[0,T]}  e^{-\rho|t-s|}\,d X^*_s +\int_{[0, t)}  e^{-\rho(t-s)}\,d Y_s +\frac{1}{2}\Delta Y_t+2\theta\Delta X^*_t=\eta.
\end{equation}
\end{proposition}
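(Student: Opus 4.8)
The plan is to characterize the minimizer of the strictly convex functional $X\mapsto\mathscr{C}(X|Y)$ on the affine space $\mathscr{X}_{\text{det}}(x,[0,T])$ via a standard first-order (variational) argument. First I would note that $\mathscr{C}(X|Y)$ is strictly convex in $X$ — this is Lemma~\ref{convex lm} specialized to deterministic strategies, where the expectation is superfluous — so that a critical point, if it exists, is the unique global minimizer, and conversely the minimizer must be a critical point. Since $\mathscr{X}_{\text{det}}(x,[0,T])$ is an affine subspace whose underlying linear space is $\mathscr{X}_{\text{det}}(0,[0,T])$ (deterministic admissible strategies with initial value $0$, which also end at $0$), the optimality of $X^*$ is equivalent to the directional derivative vanishing: for every $H\in\mathscr{X}_{\text{det}}(0,[0,T])$,
\begin{equation*}
\frac{d}{d\alpha}\Big|_{\alpha=0}\mathscr{C}(X^*+\alpha H\,|\,Y)=0.
\end{equation*}

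Next I would compute this directional derivative explicitly using the decomposition \eqref{continuous-times costs shorthand decomposition} (read without the expectation, for deterministic strategies). The quadratic term $\tfrac12C(X,X)$ contributes $C(X^*,H)=\int_{[0,T]}\int_{[0,T]}e^{-\rho|t-s|}\,dX^*_s\,dH_t$, using the symmetry of the kernel; the term $C_1(X,Y)$ is affine in $X$ and contributes $C_1(H,Y)=\int_{[0,T]}\int_{[0,t)}e^{-\rho(t-s)}\,dY_s\,dH_t$; the term $\tfrac12C_2(X,Y)=\tfrac12\sum_{t}\Delta X_t\Delta Y_t$ contributes $\tfrac12\sum_{t\in[0,T]}\Delta H_t\Delta Y_t$; and the term $\theta C_2(X,X)=\theta\sum_t(\Delta X_t)^2$, being quadratic, contributes $2\theta\sum_{t\in[0,T]}\Delta X^*_t\Delta H_t$. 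Collecting terms, the condition becomes
\begin{equation*}
\int_{[0,T]}\bigg(\int_{[0,T]}e^{-\rho|t-s|}\,dX^*_s+\int_{[0,t)}e^{-\rho(t-s)}\,dY_s+\frac12\Delta Y_t+2\theta\Delta X^*_t\bigg)\,dH_t=0
\end{equation*}
for all $H\in\mathscr{X}_{\text{det}}(0,[0,T])$. Writing $F(t)$ for the bracketed function, this says $\int_{[0,T]}F(t)\,dH_t=0$ for all admissible $H$ with $\int_{[0,T]}dH_t=0$.

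The remaining step is the duality/annihilator argument: $\int_{[0,T]}F\,dH=0$ for all signed measures $dH$ of bounded variation with total mass zero if and only if $F$ is $dH$-a.e.\ constant, i.e.\ there is $\eta\in\mathbb{R}$ with $F(t)=\eta$ for all $t$ — which is precisely \eqref{eqfred}. One direction is immediate (if $F\equiv\eta$ then $\int F\,dH=\eta\int dH=0$). For the other, one tests against differences of point masses $dH=\delta_{t_1}-\delta_{t_2}$, which lie in the relevant space since point masses at interior times can be accommodated by admissible strategies (a pair of opposite jumps), forcing $F(t_1)=F(t_2)$; a brief check handles the endpoints $t=0$ and $t=T$, using that jumps there are also admissible. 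I expect this density/annihilator step to be the only delicate point, since one must make sure the test perturbations $H$ genuinely lie in $\mathscr{X}_{\text{det}}(0,[0,T])$ (right-continuity, bounded variation, vanishing after $T$, and total increment zero) and that $F$ is regular enough — it is right-continuous with left limits and of bounded variation, being built from Stieltjes integrals against the continuous kernel plus jump terms — for pointwise constancy to follow from constancy on a dense set of test times. Given the strict convexity already in hand, no separate existence argument is needed for the "if" direction: any $X^*$ satisfying \eqref{eqfred} is automatically the global minimizer.
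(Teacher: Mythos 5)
Your proposal is correct and follows essentially the same route as the paper: the first-order condition is obtained by perturbing along round trips, the test perturbations $dH=\delta_{t_1}-\delta_{t_2}$ are exactly the paper's $Z_s=\Ind{\{s\ge t_1\}}-\Ind{\{s\ge t_2\}}$, and sufficiency follows from convexity (the paper writes $X=X^*+Z$ and uses $C(Z,Z)\ge0$, $C_2(Z,Z)\ge0$, which is the same thing). Your closing worry about regularity of $F$ and density of test times is unnecessary, since the two-point round trips are admissible for every pair $t_1,t_2\in[0,T]$ and hence give $F(t_1)=F(t_2)$ pointwise directly.
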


\begin{proof}Suppose first that $X^*$ minimizes the liquidation costs $\mathscr{C}(X|Y)$ over $X\in \mathscr{X}_{\text{det}}(x,[0,T])$. We fix $t_0,t\in[0,T]$ and define $Z\in\mathscr{X}_{\text{det}}(0,[0,T])$ by $Z_s=\mathbbmss{1}_{\{s\ge t_0\}}-\mathbbmss{1}_{\{s\ge t\}}$. Admissible strategies $Z$ with initial value $Z_{0-}=0$ are often called \lq\lq round trips". The optimality of $X^*$ implies that the function 
\begin{equation}\label{Cost expansion in alpha eq}
\begin{split}
f(\alpha)&:=\mathscr{C}(X^*+\alpha Z|Y)\\
&=\frac{1}{2}C(X^*,X^*)+C_1(X^*,Y)+ \frac1{2} C_2(X^*,Y)+\frac{\alpha^2}{2}C(Z,Z)+\theta C_2(X^*,X^*)\\
&\qquad+\alpha C(Z,X^*)+\alpha C_1(Z,Y)+ \frac\alpha {2} C_2(Z,Y)+2\alpha \theta C_2(Z,X^*)+\alpha^2\theta C_2(Z,Z)
\end{split}
\end{equation}
 has a minimum at $\alpha=0$. Here we have used the decomposition \eqref{continuous-times costs shorthand decomposition}. Thus,
 \begin{align*}
 0&=\frac{d f(\alpha)}{d\alpha}\Big|_{\alpha=0}=C(Z,X^*)+C_1(Z,Y)+\frac1 {2} C_2(Z,Y)+2\theta C_2(Z,X^*)\\
 &=\int_{[0,T]}  e^{-\rho|t_0-s|}\,d X^*_s-\int_{[0,T]}  e^{-\rho|t-s|}\,d X^*_s +\int_{[0, t_0)}  e^{-\rho(t_0-s)}\,d Y_s-\int_{[0, t)}  e^{-\rho(t-s)}\,d Y_s \\
 &\qquad+\frac{1}{2}\Delta Y_{t_0}-\frac{1}{2}\Delta Y_t+2\theta\Delta X^*_{t_0}-2\theta\Delta X^*_t.
 \end{align*}
 Thus, \eqref{eqfred} follows if we let 
$$\eta:=\int_{[0,T]}  e^{-\rho|t_0-s|}\,d X^*_s +\int_{[0, t_0)}  e^{-\rho(t_0-s)}\,d Y_s +\frac{1}{2}\Delta Y_{t_0}+2\theta\Delta X^*_{t_0}.
$$

  Conversely, we now assume that $X^*\in\mathcal{X}_{det}(x, [0,T])$ satisfies \eqref{eqfred} and prove that $X^*$ is optimal. To this end, we take an arbitrary \lq\lq round trip"  $Z\in\mathscr{X}_{\text{det}}(0,[0,T])$.  Using \eqref{Cost expansion in alpha eq} for $\alpha=1$ and the facts that 
$C_2(Z,Z)\ge0$ and $ C(Z,Z) \ge0$ by \eqref{pos def Bochner costs eq}, we have
\begin{equation*}
\begin{split}
\lefteqn{\mathscr{C}(X^*+Z|Y)}\\
&\ge\mathscr{C}(X^*|Y)+C(Z,X^*)+C_1(Z,Y)+ \frac1{2} C_2(Z,Y)+2\theta C_2(Z,X^*)\\
&=\mathscr{C}(X^*|Y)+\int_{[0,T]}\Big(\int_{[0,T]}  e^{-\rho|t-s|}\,d X_t +\int_{[0, s)}  e^{-\rho|t-s|}\,d Y_t+ \frac1{2}\Delta Y_s+2\theta\Delta X^*_s\Big)\,d Z_s\\
&=\mathscr{C}(X^*|Y)+\eta(Z_T-Z_0)=\mathscr{C}(X^*|Y).
\end{split}
\end{equation*}
Since every strategy $X\in\mathscr{X}_{\text{det}}(x,[0,T])$ can be written as $X^*+Z$ for some \lq\lq round trip" $Z\in\mathscr{X}_{\text{det}}(0,[0,T])$, we obtain the optimality of $X^*$.
\end{proof}

 The following proposition extends the necessity of the first-order condition for optimality to the case of  strategies that are not necessarily deterministic.
 
\begin{proposition}\label{random fredholm prop}
Let $Y\in\mathscr{X}(y,[0,T])$ be given.  If there exists an optimal strategy $X^*$ minimizing the expected liquidation costs $\mathbb{E}[\mathscr{C}(X|Y)]$ in $\mathscr{X}(x,[0,T])$, then for any $[0,T]$-valued stopping time $\tau$, there exists an $\mathscr{F}_{\tau}$-measurable random variable $\eta$ such that  for every stopping time $\sigma$ taking values in $[\tau,T]$,
\begin{equation}\label{Fredholm formula}
\mathbb{E}\bigg[\int_{[0,T]}  e^{-\rho|\sigma-t|}\,dX^*_t+\int_{[0,\sigma)}  e^{-\rho(\sigma-t)}\,dY_t+\frac1{2}\Delta Y_{\sigma}+2\theta\Delta X^*_{\sigma}\,\Big|\,\mathscr{F}_{\tau}\bigg]=\eta\qquad\text{$\mathbb{P}$-a.s.}
\end{equation}
As a matter of fact, we can take 
\begin{equation*}
\eta:=\mathbb{E}\bigg[\int_{[0,T]}  e^{-\rho|\tau-t|}\,dX^*_t+\int_{[0,\tau)}  e^{-\rho(\tau-t)}\,dY_t+\frac1{2}\Delta Y_{\tau}+2\theta\Delta X^*_{\tau}\,\Big|\,\mathscr{F}_{\tau}\bigg]
\end{equation*}
\end{proposition}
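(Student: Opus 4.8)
The plan is to imitate the variational argument in the proof of Proposition~\ref{deterministic fredholm prop}, but to replace the deterministic round trips by \emph{random} round trips that are localized on an arbitrary event in $\mathscr{F}_\tau$; this localization is exactly what will produce an $\mathscr{F}_\tau$-measurable $\eta$. I would fix a $[0,T]$-valued stopping time $\tau$, a stopping time $\sigma$ taking values in $[\tau,T]$, and an event $A\in\mathscr{F}_\tau$, and set
$$
Z_s:=\mathbbmss{1}_A\big(\mathbbmss{1}_{\{s\ge\tau\}}-\mathbbmss{1}_{\{s\ge\sigma\}}\big),\qquad s\ge 0-.
$$
The first step is to check that $Z\in\mathscr{X}(0,[0,T])$: right-continuity, boundedness and bounded total variation of $s\mapsto Z_s$ are immediate, $Z_s=0$ for $s\ge T$ since $\sigma\le T$, $Z_{0-}=0$, and adaptedness holds because $A\cap\{\tau\le s\}\in\mathscr{F}_s$ by the definition of $\mathscr{F}_\tau$ while $A\cap\{\sigma\le s\}\in\mathscr{F}_s$ since $\mathscr{F}_\tau\subseteq\mathscr{F}_\sigma$. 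Hence $X^*+\alpha Z\in\mathscr{X}(x,[0,T])$ for every $\alpha\in\mathbb{R}$.

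Next I would use the optimality of $X^*$. By the decomposition \eqref{continuous-times costs shorthand decomposition}, the symmetry of the bilinear forms $C$ and $C_2$, and the linearity of $C_1(\cdot,Y)$ and $C_2(\cdot,Y)$, the map $\alpha\mapsto\mathbb{E}[\mathscr{C}(X^*+\alpha Z|Y)]$ is an exact quadratic polynomial in $\alpha$ with nonnegative leading coefficient $\tfrac12 C(Z,Z)+\theta C_2(Z,Z)$; since it is minimized at $\alpha=0$, its coefficient of $\alpha$ must vanish, i.e.
$$
C(Z,X^*)+C_1(Z,Y)+\tfrac12 C_2(Z,Y)+2\theta\,C_2(Z,X^*)=0.
$$
All integrals here are finite because the strategies and their total variations are bounded, so differentiating under the expectation requires nothing further. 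I would then observe that the Stieltjes measure $dZ$ places an atom of mass $\mathbbmss{1}_A$ at $\tau$ and an atom of mass $-\mathbbmss{1}_A$ at $\sigma$; substituting this into the four terms above and using Fubini's theorem for Stieltjes integrals converts the identity into
$$
\mathbb{E}\big[\,\mathbbmss{1}_A\,(G_\tau-G_\sigma)\,\big]=0,\qquad G_u:=\int_{[0,T]}e^{-\rho|u-t|}\,dX^*_t+\int_{[0,u)}e^{-\rho(u-t)}\,dY_t+\tfrac12\Delta Y_u+2\theta\,\Delta X^*_u.
$$

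Finally, since $G_\tau$ and $G_\sigma$ are bounded and $A\in\mathscr{F}_\tau$ was arbitrary, the last display is precisely the assertion that $\mathbb{E}[G_\sigma\mid\mathscr{F}_\tau]=\mathbb{E}[G_\tau\mid\mathscr{F}_\tau]$ $\mathbb{P}$-a.s.; taking $\eta:=\mathbb{E}[G_\tau\mid\mathscr{F}_\tau]$, which is $\mathscr{F}_\tau$-measurable and independent of $\sigma$, then yields \eqref{Fredholm formula} together with the stated representation of $\eta$. The only genuinely delicate point---and the mechanism that turns the deterministic Euler--Lagrange identity of Proposition~\ref{deterministic fredholm prop} into a conditional one---is the adaptedness of the localized perturbation $Z$; the admissibility checklist for $Z$ and the Stieltjes--Fubini bookkeeping are routine, so I do not anticipate a deeper obstacle.
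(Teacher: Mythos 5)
Your proposal is correct and follows essentially the same route as the paper: the localized random round trip $Z_s=\mathbbmss{1}_A(\mathbbmss{1}_{\{s\ge\tau\}}-\mathbbmss{1}_{\{s\ge\sigma\}})$, the vanishing of the linear coefficient of the quadratic $\alpha\mapsto\mathbb{E}[\mathscr{C}(X^*+\alpha Z|Y)]$, and the passage from $\mathbb{E}[\mathbbmss{1}_A(G_\tau-G_\sigma)]=0$ for all $A\in\mathscr{F}_\tau$ to the conditional identity defining $\eta$. Your additional remarks on the adaptedness of $Z$ and the nonnegativity of the leading coefficient are accurate and only make the argument more explicit than the paper's version.
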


\begin{proof}For $\tau$ and $\sigma$ as in the assertion and $A\in\mathscr{F}_\tau$, we define a \lq\lq round trip" $Z\in\mathscr{X}(0,[0,T])$ by
$$
Z_t=\mathbbmss{1}_A\Big(\Ind{\{t\ge\tau\}}-\Ind{\{t\ge\sigma\}}\Big).
$$
Expanding the expected costs $\mathbb{E}[\mathscr{C}( X^*+\alpha  Z| Y)]$ as in \eqref{Cost expansion in alpha eq} and taking derivatives with respect to $\alpha$ at $\alpha=0$ yields the following necessary first-order condition for optimality,
\begin{equation}\label{nece exp condition}
0=C( Z, X^*)+C_1(  Z,Y)+\frac1{2}C_2(  Z,Y)+2\theta C_2( Z,X^*).
\end{equation}
By exploiting the specific form of $Z$, \eqref{nece exp condition} becomes
\begin{align*}
0=&\mathbb{E}\bigg[\mathbbmss{1}_A\Big(\int_{[0,T]}\left(  e^{-\rho|\tau-t|}-  e^{-\rho|\sigma-t|}\right)\,d X^*_t+\int_{[0, \tau)}  e^{-\rho|\tau-t|}\,d Y_t-\int_{[0,\sigma)}  e^{-\rho|\sigma-t|}\,d Y_t\\
&\qquad+\frac1{2}\left(\Delta Y_{\tau}-\Delta Y_{\sigma}\right)+2\theta\left(\Delta X^*_{\tau}-\Delta X^*_{\sigma}\right)\Big)\bigg].
\end{align*}
This implies that for all $A\in\mathscr{F}_{\tau}$,
\begin{align*}
\lefteqn{\mathbb{E}\bigg[\mathbbmss{1}_A\Big(\int_{[0,T]}  e^{-\rho|\sigma-t|}\,dX^*_t+\int_{[0,\sigma)}  e^{-\rho(\sigma-t)}\,dY_t+\frac1{2}\Delta Y_{\sigma}+2\theta\Delta X^*_{\sigma}\Big)\bigg]}\\
&\qquad=\mathbb{E}\bigg[\mathbbmss{1}_A\Big(\int_{[0,T]}  e^{-\rho|\tau-t|}\,dX^*_t+\int_{[0,\tau)}  e^{-\rho(\tau-t)}\,dY_t+\frac1{2}\Delta Y_{\tau}+2\theta\Delta X^*_{\tau}\Big)\bigg].
\end{align*}
Note that the right-hand side is independent of $\sigma$. Taking conditional expectations thus gives the result.\end{proof}

\section[]{Expressing the discrete equilibrium strategies in closed form}\label{closed form section}

In this section, our aim is to compute the vectors ${\bm\nu}$ and $\bm\omega$. Our corresponding result will be Theorem~\ref{omega and nu closed form thm}
 at the end of this section. It will be needed for the proofs of our asymptotic results.
 
 As in~\cite{SchiedZhangHotPotato}, it will be convenient to define
$$\alpha := e^{-\rho T/N}\qquad\text{and}\qquad \kappa := 2\theta + \frac12.$$
Note that we have $\kappa\ge1/2$ with equality if and only if $\theta=0$ and that the critical value $\theta^*$ corresponds to $\kappa=1$.

To compute $\bm\nu$, we define the matrix 
\begin{align*}
	B := (\left.1-\alpha^2\right.)\left(\Id+\Gamma^{-1} (\left.\tilde{\Gamma}+2\theta\Id\right.)\right).
\end{align*}
To obtain a more explicit representation of $B$, recall first that the inverse of the Kac--Murdock--Szeg\H o matrix $\Gamma$ has a simple tridiagonal structure and is given by
\begin{equation}\label{KMS inverse eq}
\Gamma^{-1}=\frac{1}{1-\alpha^{{2}}}\begin{pmatrix}
  \,&{1}&{-\alpha}&0&\cdots&\cdots&0 \,{} \\
     &{-\alpha}&{1+\alpha^2}&{-\alpha}&0&\cdots&0\,{} \\
     &0&\ddots&\ddots&\ddots&\ddots&\vdots\,{} \\
     &\vdots&\ddots&\ddots&\ddots&\ddots&\vdots\,{} \\
     &\vdots&\ddots&\ddots&{-\alpha}&{1+\alpha^2}&{-\alpha}\,{} \\
     &0&\cdots&\cdots&0&{-\alpha}&{1}\,{}
 \end{pmatrix};
\end{equation}
see, e.g., \cite[Section 7.2, Problems 12-13]{HornJohnson}. Thus, 
\begin{eqnarray*}
B&=&(1-\alpha^2)\Id+
 \begin{pmatrix}
  \,&{1}&{-\alpha}&0&\cdots&\cdots&0 \,{} \\
     &{-\alpha}&{1+\alpha^2}&{-\alpha}&0&\cdots&0\,{} \\
     &0&\ddots&\ddots&\ddots&\ddots&\vdots\,{} \\
     &\vdots&\ddots&\ddots&\ddots&\ddots&\vdots\,{} \\
     &\vdots&\ddots&\ddots&{-\alpha}&{1+\alpha^2}&{-\alpha}\,{} \\
     &0&\cdots&\cdots&0&{-\alpha}&{1}\,{}
 \end{pmatrix} \begin{pmatrix}
  \,&\kappa&0&\cdots&\cdots&\cdots&0 &\,{} \\
     &\alpha&\kappa&0&\cdots&\cdots&0 &\,{} \\
     &\alpha^2&\alpha&\ddots&\ddots&\ddots&\vdots&\,{} \\
\,{}&\vdots&\vdots & & & 0& \vdots\\
     &\alpha^{N-1}&\alpha^{N-2}&\ddots&\ddots&\kappa&0&\,{} \\
     &\alpha^N&\alpha^{N-1}&\cdots&\cdots&\alpha&\kappa&\,{}
 \end{pmatrix}\\
 &=&\begin{pmatrix}
  \,&{1-2\alpha^2+\kappa}&{- \alpha \kappa}&0&\cdots&\cdots&0 \,{} \\
     &{-\alpha (\kappa -1)}&{1+\alpha^2(\kappa -2)+\kappa }&{- \alpha \kappa}&0&\cdots&0\,{} \\
     &0&\ddots&\ddots&\ddots&\ddots&\vdots\,{} \\
     &\vdots&\ddots&\ddots&\ddots&\ddots&\vdots\,{} \\
     &\vdots&\ddots&0&{-\alpha (\kappa -1)}&{1+\alpha^2(\kappa -2)+\kappa }&{- \alpha \kappa}\,{} \\
     &0&\cdots&\cdots&0&{-\alpha (\kappa -1)}&{1-\alpha^2+\kappa }\,{}
 \end{pmatrix}.
\end{eqnarray*}

\begin{lemma}\label{lemma1}
	For $k \le N$, the $k^{\text{th}}$ leading principal minor $\delta_k$ of $B$ is given by
	\begin{align*}
		\delta_k &= c_+m_+^k+c_-m_-^k,
	\end{align*}
	where, for the real number
	\begin{align*}
		R := \sqrt{\alpha^4\left(\kappa-2\right)^2-2\alpha^2\left(2+\left(\kappa-1\right)\kappa\right)+\left(\kappa+1\right)^2},
	\end{align*}
	the real numbers $c_\pm$ and $m_\pm$ are given by
	\begin{align*}
		c_\pm =\frac{\pm\left(1-\alpha^2\left(\kappa+2\right)+\kappa\right)+R}{2R} \hspace{1cm} \text{ and } \hspace{1cm}
		m_\pm =\frac{1+\alpha^2\left(\kappa-2\right)+\kappa\pm R}{2}.
	\end{align*}
\end{lemma}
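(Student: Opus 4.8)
The plan is to recognize $\delta_k$ as the solution of a linear recurrence coming from expanding the determinant of the tridiagonal-like matrix $B$ along its last row and column. Although $B$ is not symmetric, the displayed explicit form shows that all its super-diagonal entries equal $-\alpha\kappa$, all sub-diagonal entries equal $-\alpha(\kappa-1)$, and all interior diagonal entries equal $1+\alpha^2(\kappa-2)+\kappa$, with the sole exception of the $(1,1)$ entry $1-2\alpha^2+\kappa$ and the $(N+1,N+1)$ entry $1-\alpha^2+\kappa$. Since leading principal minors only involve the top-left corner, the irregular last diagonal entry is irrelevant for $\delta_k$ with $k\le N$, and only the $(1,1)$ entry deviates from the regular pattern. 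Expanding $\delta_k$ along the last row of the $k\times k$ leading block yields, for $k\ge 3$,
\begin{equation*}
\delta_k = b\,\delta_{k-1} - c\,\delta_{k-2},\qquad b:=1+\alpha^2(\kappa-2)+\kappa,\quad c:=(-\alpha\kappa)\bigl(-\alpha(\kappa-1)\bigr)=\alpha^2\kappa(\kappa-1),
\end{equation*}
a constant-coefficient second-order linear recurrence. First I would verify this recurrence and compute the two initial values $\delta_1 = 1-2\alpha^2+\kappa$ and $\delta_2$ directly from the $2\times 2$ block.

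Next I would solve the recurrence by the standard characteristic-equation method: the characteristic polynomial is $m^2 - bm + c = 0$, whose roots are $m_\pm = \tfrac12\bigl(b\pm\sqrt{b^2-4c}\bigr)$. A direct computation gives $b^2-4c = \bigl(1+\alpha^2(\kappa-2)+\kappa\bigr)^2 - 4\alpha^2\kappa(\kappa-1)$, which I would expand and check equals $R^2$ with $R$ as defined in the statement; this identifies $m_\pm = \tfrac12\bigl(1+\alpha^2(\kappa-2)+\kappa\pm R\bigr)$, matching the claimed formula. Assuming $m_+\ne m_-$ (equivalently $R\ne 0$), the general solution is $\delta_k = c_+ m_+^k + c_- m_-^k$, and the constants $c_\pm$ are pinned down by the two initial conditions. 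Solving the $2\times 2$ linear system $c_+ m_+ + c_- m_- = \delta_1$, $c_+ m_+^2 + c_- m_-^2 = \delta_2$ (or, more symmetrically, extending the recurrence backward to define $\delta_0$ and matching $\delta_0,\delta_1$) yields the stated $c_\pm = \bigl(\pm(1-\alpha^2(\kappa+2)+\kappa)+R\bigr)/(2R)$, which I would confirm by substitution.

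The main obstacle, and the only genuinely non-routine point, is the algebraic identity $b^2 - 4c = R^2$ together with the back-substitution that produces exactly the asserted closed form for $c_\pm$; these require careful but elementary polynomial manipulation in $\alpha^2$ and $\kappa$. A secondary issue is the degenerate case $R=0$ (a double root), where the formula $\delta_k = c_+m_+^k+c_-m_-^k$ should be read as its limiting value $\delta_k = (A+Bk)m^k$; since the statement is phrased with $c_\pm$ and $m_\pm$ as explicit functions that remain well-defined and whose combination extends continuously, one can either treat this case separately or argue by continuity in the parameters. I would also double-check the edge indices: the formula is claimed for $k\le N$, so it never sees the modified bottom-right entry, and the $k=1,2$ cases should be verified to be consistent with the closed form directly rather than relying on the recurrence.
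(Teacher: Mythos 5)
Your proposal follows essentially the same route as the paper: derive the three-term recurrence $\delta_k = \left(1+\alpha^2\left(\kappa-2\right)+\kappa\right)\delta_{k-1}-\alpha^2\kappa\left(\kappa-1\right)\delta_{k-2}$ from the tridiagonal structure of the leading blocks, solve it via the characteristic equation, and pin down $c_\pm$ from the initial values $\delta_1,\delta_2$. The only substantive point the paper adds is a short case analysis showing the discriminant $R^2$ is strictly positive for $\alpha\in\left[0,1\right]$ and $\kappa\ge 1/2$, so $R$ is real and the roots are distinct, which disposes of the degenerate case you flag.
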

\begin{proof}
	We have
	\begin{align}
		\delta_1 &= 1-2\alpha^2+\kappa,\label{a}\\
		\delta_2 &= -2\alpha^4\left(\kappa-2\right)-2\alpha^2\left(\kappa+2\right)+\left(\kappa+1\right)^2\label{b}.
	\end{align}
	For $k \in \left\{3,\dots,N\right\}$, the $k^{\text{th}}$ principal minor, $\delta_k$, is given by the recursion
	\begin{align*}
		\delta_k &= \left(1+\alpha^2\left(\kappa-2\right)+\kappa\right)\delta_{k-1}-\alpha^2\kappa\left(\kappa-1\right)\,\delta_{k-2}.
	\end{align*}
	This recursion is a homogeneous linear difference equation of second order.
	Its characteristic equation is
	\begin{align}\label{c}
		m^2-\left(1+\alpha^2\left(\kappa-2\right)+\kappa\right)m+\alpha^2\kappa\left(\kappa-1\right) = 0.
	\end{align}
	This equation has the two roots, $m_+$ and $m_-$. 
	
	We claim first that $m_+$ and $m_-$ are real for $\alpha \in \left[0,1\right]$ and $\kappa \ge 1/2$.
	This claim is equivalent to the nonnegativity of the argument of the square root in our formula for $R$. 
	We claim that this is in turn equivalent to  $f\left(t\right):= t^2\left(\kappa-2\right)^2-2t\left(2+\left(\kappa-1\right)\kappa\right)+\left(\kappa+1\right)^2 \ge 0$ for $0\le t\le 1$, where we have replaced $\alpha^2$ by the parameter $t$.
	The preceding claim is clearly true for $\kappa=2$.
	Otherwise, $f$ is minimized at $t_0 := \left(2+\left(\kappa-1\right)\kappa\right)/\left(\kappa-2\right)^2$.
	We have $t_0 < 1$ if and only if $\kappa<2/3$.
	In this case, we have $f\left(t\right)\ge f\left(t_0\right) = 8\left(1-\kappa\right)\kappa/\left(\kappa-2\right)^2>0$ for all $t$.
	For $\kappa\ge 2/3$ we have $t_0 \ge 1$ and in turn $f'\left(t\right)\le 0$ for $0\le t\le1$.
	This gives us $f\left(t\right) \ge f\left(1\right) = 1$ for $0\le t\le 1$ and proves our claim that the roots $m_\pm$ are real.
		
	It now follows from the general theory of homogeneous linear difference equations of second order that every solution to \eqref{c} is of the form $c_1\left(m_+\right)^k+c_2\left(m_-\right)^k$, where $c_1$ and $c_2$ are real constants; see \cite[Theorem 3.7]{KelleyPeterson}.
	Requiring the two initial conditions \eqref{a} and \eqref{b} yields $c_1 = c_+$ and $c_2 = c_-$.
%
\end{proof}

\begin{lemma}\label{lemma2}
	Define $\phi_n$ recursively by
	\begin{align*}
		\phi_{N+2} = 1, \hspace{2cm} \phi_{N+1} = 1-\alpha^2+\kappa,
	\end{align*}
	and for $k = N, N-1, \dots, 2$ by
	\begin{align*}
		\phi_k = \left(1+\alpha^2\left(\kappa-2\right)+\kappa\right)\phi_{k+1}-\alpha^2\kappa\left(\kappa-1\right)\,\phi_{k+2}.
	\end{align*}
	Then, for $k \in \left\{2,\dots, N+2\right\},$
	\begin{align*}
		\phi_k = d_+  m_+ ^{N+2-k}+d_- m_-^{N+2-k},
	\end{align*}
	where $m_\pm$ are as in Lemma~\ref{lemma1} and
	\begin{align*}
		d_\pm := \frac{\pm\left(1+\left(1-\alpha^2\right)\kappa\right)+R}{2R}.
	\end{align*}
\end{lemma}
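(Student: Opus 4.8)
The plan is to recognize that the sequence $(\phi_k)$, read in the direction of decreasing index, satisfies exactly the same second-order homogeneous linear recursion as the principal minors $(\delta_k)$ in Lemma~\ref{lemma1}, namely the one with characteristic equation \eqref{c}. Concretely, setting $\psi_j := \phi_{N+2-j}$ for $j = 0, 1, \dots, N$, the defining recursion for $\phi$ becomes $\psi_j = (1+\alpha^2(\kappa-2)+\kappa)\psi_{j-1} - \alpha^2\kappa(\kappa-1)\psi_{j-2}$ for $j = 2, \dots, N$, which is the identical recursion. Therefore, by \cite[Theorem 3.7]{KelleyPeterson} and the fact (already established in the proof of Lemma~\ref{lemma1}) that the roots $m_\pm$ are real and distinct for $\alpha\in[0,1]$ and $\kappa\ge1/2$, every solution is of the form $\psi_j = d_+ m_+^j + d_- m_-^j$, i.e.\ $\phi_k = d_+ m_+^{N+2-k} + d_- m_-^{N+2-k}$ for constants $d_\pm$ to be determined.

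First I would verify that $m_+ \neq m_-$ on the relevant parameter range, so that the two-dimensional solution space is genuinely spanned by $\{m_+^j, m_-^j\}$; this follows since $m_+ - m_- = R$ and $R = 0$ would force the argument of the square root to vanish, but the proof of Lemma~\ref{lemma1} showed $f(t) \ge f(1) = 1 > 0$ for $\kappa \ge 2/3$ and $f(t) \ge 8(1-\kappa)\kappa/(\kappa-2)^2 > 0$ for $1/2 \le \kappa < 2/3$ (the borderline $\kappa=1$ still gives $R>0$), so $R > 0$ throughout. Then I would pin down $d_+$ and $d_-$ by imposing the two initial conditions $\phi_{N+2} = 1$ and $\phi_{N+1} = 1-\alpha^2+\kappa$, which correspond to $j = 0$ and $j = 1$: the first gives $d_+ + d_- = 1$, and the second gives $d_+ m_+ + d_- m_- = 1-\alpha^2+\kappa$.

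Solving this $2\times2$ linear system: from the two equations, $d_+(m_+ - m_-) = (1-\alpha^2+\kappa) - m_-$. Using $m_- = \tfrac12(1+\alpha^2(\kappa-2)+\kappa - R)$ and $m_+ - m_- = R$, one computes $(1-\alpha^2+\kappa) - m_- = \tfrac12(2-2\alpha^2+2\kappa - 1 - \alpha^2(\kappa-2) - \kappa + R) = \tfrac12(1 - \alpha^2\kappa + \kappa + R) = \tfrac12(1 + (1-\alpha^2)\kappa + R)$, hence $d_+ = (1 + (1-\alpha^2)\kappa + R)/(2R)$, matching the stated formula. Subtracting then yields $d_- = 1 - d_+ = (-1 - (1-\alpha^2)\kappa + R)/(2R)$, i.e.\ the formula with the lower sign. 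This is essentially all routine algebra — there is no real obstacle; the only point requiring care is getting the index bookkeeping right so that the "terminal" conditions at $k = N+2, N+1$ play the role of "initial" conditions for the reversed sequence, and confirming that the recursion is valid on the full index range $k \in \{2,\dots,N\}$ so that the closed form extends to $k \in \{2,\dots,N+2\}$.
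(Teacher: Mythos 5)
Your proposal is correct and follows essentially the same route as the paper: reverse the index via $\psi_j:=\phi_{N+2-j}$, observe that the same second-order recursion as in Lemma~\ref{lemma1} holds, invoke \cite[Theorem 3.7]{KelleyPeterson}, and fix $d_\pm$ from the two terminal conditions. Your explicit solution of the $2\times2$ system and the check that $R>0$ (so $m_+\neq m_-$) are correct and, if anything, slightly more detailed than the paper's own argument.
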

\begin{proof}
	Let
	\begin{align}\label{g}
		\psi_0 = 1, \hspace{1cm} \psi_1 = 1-\alpha^2+\kappa,
	\end{align}
	and for $l \in \left\{2,\dots,N\right\}$, let
	\begin{align}\label{f}
		\psi_l = \left(1+\alpha^2\left(\kappa-2\right)+\kappa\right)\psi_{l-1}-\alpha^2\kappa\left(\kappa-1\right) \psi_{l-2}.
	\end{align}
	Then $\psi_k = \phi_{N+2-k}$.
	As in the proof of Lemma~\ref{lemma1} we see that the general solution to \eqref{f} is of the form $d_1  m_+ ^l+d_2 m_- ^l$, where $m_\pm$ are as above.
	Choosing $d_1 = d_+$ and $d_2 = d_-$ ensures that the initial conditions \eqref{g} are satisfied and completes the proof	 
\end{proof}

\begin{lemma}\label{B inverse lemma}
	The matrix $B$ is non-singular and its inverse is given by
\begin{align}\label{B inverse eq}
(B^{-1})_{ij}
&=\begin{cases}(\alpha\kappa)^{j-i}\delta_{i-1}\phi_{j+1}\delta_{N+1}^{-1}&\text{if $i\le j$,}\\
(\alpha(\kappa-1))^{i-j}\delta_{j-1}\phi_{i+1}\delta_{N+1}^{-1}&\text{if $i\ge j$,}
\end{cases}
\end{align}
	where $\delta_0=1$.
\end{lemma}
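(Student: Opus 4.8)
The plan is to handle the two assertions in turn. Non-singularity is immediate: since $\Id+\Gamma^{-1}(\tilde{\Gamma}+2\theta\Id)=\Gamma^{-1}(\Gamma+\tilde{\Gamma}+2\theta\Id)$, we have $B=(1-\alpha^2)\,\Gamma^{-1}(\Gamma+\tilde{\Gamma}+2\theta\Id)$, a product of invertible matrices: $1-\alpha^2\neq0$ because $\alpha=e^{-\rho T/N}\in(0,1)$, the matrix $\Gamma$ is invertible with inverse \eqref{KMS inverse eq}, and $\Gamma+\tilde{\Gamma}+2\theta\Id$ is invertible by \cite[Lemma~3.2]{SchiedZhangHotPotato}. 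In particular $\det B\neq0$, and since $\det B$ is the $(N+1)$st leading principal minor $\delta_{N+1}$, the matrix $A$ defined by the right-hand side of \eqref{B inverse eq} is well defined; it then suffices to verify $BA=\Id$. For this, write $a:=1+\alpha^2(\kappa-2)+\kappa$ and $b:=\alpha^2\kappa(\kappa-1)$, so that the recursions of Lemmas~\ref{lemma1} and~\ref{lemma2} become $\delta_k=a\delta_{k-1}-b\delta_{k-2}$ and $\phi_k=a\phi_{k+1}-b\phi_{k+2}$, both valid for $2\le k\le N$ with the convention $\delta_0=1$, while expanding $\delta_{N+1}$ along the last row of $B$ gives $\delta_{N+1}=(1-\alpha^2+\kappa)\delta_N-b\,\delta_{N-1}$. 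Since $B$ is tridiagonal with sub-diagonal entries $-\alpha(\kappa-1)$, super-diagonal entries $-\alpha\kappa$, interior diagonal entries $a$, and corner diagonal entries $\delta_1=1-2\alpha^2+\kappa$ and $1-\alpha^2+\kappa$, the entry $(BA)_{ik}$ equals $-\alpha(\kappa-1)A_{i-1,k}+aA_{ik}-\alpha\kappa A_{i+1,k}$ for $2\le i\le N$, with the evident two-term truncations for $i=1$ and $i=N+1$.

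Next I would run through the cases $k\ge i+1$, $k\le i-1$, and $k=i$. For $k\ge i+1$, all three entries $A_{i-1,k}$, $A_{ik}$, $A_{i+1,k}$ are given by the first ($i\le j$) branch of \eqref{B inverse eq}, and factoring out $(\alpha\kappa)^{k-i}\phi_{k+1}\delta_{N+1}^{-1}$ reduces $(BA)_{ik}$ to $a\delta_{i-1}-b\,\delta_{i-2}-\delta_i=0$ by the $\delta$-recursion; the row $i=1$ is immediate from $\delta_0=1$. The case $k\le i-1$ is symmetric: all three entries come from the second branch, and factoring out $(\alpha(\kappa-1))^{i-k}\delta_{k-1}\delta_{N+1}^{-1}$ reduces $(BA)_{ik}$ to $a\phi_{i+1}-b\,\phi_{i+2}-\phi_i=0$ by the $\phi$-recursion; the row $i=N+1$ is immediate from $\phi_{N+2}=1$ and $\phi_{N+1}=1-\alpha^2+\kappa$. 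Finally, for $k=i$ with $2\le i\le N$, one application of the $\phi$-recursion gives $\delta_{N+1}(BA)_{ii}=D_i$, where $D_i:=\delta_{i-1}\phi_i-b\,\delta_{i-2}\phi_{i+1}$.

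The heart of the matter is then the identity $D_i=\delta_{N+1}$, which is precisely what makes the diagonal of $BA$ come out equal to $1$. Applying the $\phi$-recursion to $\phi_i$ and then the $\delta$-recursion to the resulting coefficient $a\delta_{i-1}-b\delta_{i-2}$ of $\phi_{i+1}$ yields $D_i=\delta_i\phi_{i+1}-b\,\delta_{i-1}\phi_{i+2}=D_{i+1}$, so $D_i$ is independent of $i$ on $\{2,\dots,N+1\}$; moreover $D_{N+1}=(1-\alpha^2+\kappa)\delta_N-b\,\delta_{N-1}$ (using $\phi_{N+1}=1-\alpha^2+\kappa$ and $\phi_{N+2}=1$) equals $\delta_{N+1}$ by the last-row expansion recorded above. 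Hence $(BA)_{ii}=1$ for $2\le i\le N$, and the corner entries $(BA)_{11}$ and $(BA)_{N+1,N+1}$ come out as $D_2/\delta_{N+1}$ and $D_{N+1}/\delta_{N+1}$ in exactly the same way (again using $\delta_0=1$). This gives $BA=\Id$, i.e.\ $A=B^{-1}$. Beyond the observation that $D_i$ is constant, the only work is the careful bookkeeping of these cases; the statement is in fact an instance of the classical inversion formula for tridiagonal matrices, but the direct verification above keeps the argument self-contained.
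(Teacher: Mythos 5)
Your proof is correct. The non-singularity argument is exactly the paper's: $B=(1-\alpha^2)\,\Gamma^{-1}(\Gamma+\tilde{\Gamma}+2\theta\Id)$ is a product of invertible factors, whence $\delta_{N+1}=\det B\neq0$. Where you diverge is in justifying \eqref{B inverse eq}: the paper simply invokes Usmani's inversion formula for tridiagonal Jacobi matrices (the references \cite{Usmani2,Usmani1}), whereas you re-derive the needed special case by verifying $BA=\Id$ directly. Your verification is sound: the off-diagonal entries of $BA$ vanish by the three-term recursions for $\delta_k$ and $\phi_k$ (with the correct boundary checks at $i=1$ via $\delta_0=1$ and at $i=N+1$ via $\phi_{N+2}=1$, $\phi_{N+1}=1-\alpha^2+\kappa$), and the diagonal entries reduce to the invariant $D_i=\delta_{i-1}\phi_i-\alpha^2\kappa(\kappa-1)\,\delta_{i-2}\phi_{i+1}$, which you correctly show is independent of $i$ and equals $\delta_{N+1}$ by the last-row expansion of $\det B$. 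What the paper's citation buys is brevity; what your computation buys is self-containedness and an explicit confirmation that the boundary conventions (the modified corner diagonal entries $\delta_1$ and $1-\alpha^2+\kappa$, and the conventions $\delta_0=1$, $\phi_{N+2}=1$) mesh with the recursions of Lemmas~\ref{lemma1} and~\ref{lemma2} — precisely the bookkeeping one must trust when quoting the general formula. One minor remark: since $B$ is square and already known to be invertible, checking $BA=\Id$ alone does suffice, as you implicitly use.
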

\begin{proof}
	It was shown in \cite[Lemma 3.2]{SchiedZhangHotPotato} that both $\Gamma$ and ${\Gamma}+\tilde{\Gamma}+2\theta\Id$ are  invertible. Thus, 
		$B  = (\left.1-\alpha^2\right.)\,\Gamma^{-1} ({\Gamma}+\tilde{\Gamma}+2\theta\Id ) $ is also invertible. Note that this implies $\delta_{N+1} \neq 0$, so that the right-hand side of \eqref{B inverse eq} is well-defined.
In view of  Lemmas~\ref{lemma1} and~\ref{lemma2}, the explicit form of the inverse now follows from Usmani's formula for the inversion of a tridiagonal Jacobi matrix~\cite{Usmani2,Usmani1}.
\end{proof}

\begin{theorem}\label{omega and nu closed form thm}The components of $\bm\omega$ are given by
	\begin{align}\label{omi formula}
		 \omega_i &= \frac{\left(1-\alpha\right)\kappa+\alpha\left(\frac{\alpha\left(\kappa-1\right)}{\kappa}\right)^{N+1-i}}{\kappa\left(\kappa-\alpha\left(\kappa-1\right)\right)},
	\end{align}
	for all $i \in \left\{1,\dots,N+1\right\}$. In particular, $ \omega_{N+1} = 1/\kappa$.

The components of $\bm\nu$ are given as follows,
	\begin{align*}
		\nu_1 &= \frac{1-\alpha}{\delta_{N+1}} \Bigg(\phi_2 + \left(1-\alpha\right) \sum_{j=2}^N \left(\alpha\kappa\right)^{j-1} \phi_{j+1} + \left(\alpha\kappa\right)^N\Bigg),\\
		\nu_{N+1} &= \frac{1-\alpha}{\delta_{N+1}} \Bigg(\left(\alpha\left(\kappa-1\right)\right)^N + \left(1-\alpha\right) \sum_{j=2}^N \left(\alpha\left(\kappa-1\right)\right)^{N+1-j} \delta_{j-1} + \delta_N\Bigg),
	\end{align*}
	and for $i =2,\dots,N$,
	\begin{equation*}
	\begin{split}
		\nu_i &= \frac{1-\alpha}{\delta_{N+1}} \Bigg(\left(\alpha\left(\kappa-1\right)\right)^{i-1} \phi_{i+1}+ \left(1-\alpha\right)\sum_{j=2}^{i-1} \left(\alpha\left(\kappa-1\right)\right)^{i-j} \delta_{j-1} \phi_{i+1}\\ 
		&\hspace{3.5cm}+ \left(1-\alpha\right)\sum_{j=i}^N\left(\alpha\kappa\right)^{j-i} \delta_{i-1} \phi_{j+1} + \left(\alpha\kappa\right)^{N+1-i}\delta_{i-1}\Bigg).
	\end{split}
	\end{equation*}
\end{theorem}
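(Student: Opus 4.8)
The plan is to compute $\bm\omega$ and $\bm\nu$ separately, exploiting the tridiagonal structure that has been set up. For $\bm\omega$, recall that $\bm\omega=(\Gamma-\tilde\Gamma+2\theta\Id)^{-1}\mathbf 1$. The key observation is that $\Gamma-\tilde\Gamma=\tilde\Gamma^\top-\tilde\Gamma$ plus the diagonal is in fact lower-triangular after the shift: indeed $\Gamma-\tilde\Gamma=\tilde\Gamma^\top$, which is upper-triangular, so $\Gamma-\tilde\Gamma+2\theta\Id$ is upper-triangular with diagonal entries $\tfrac12+2\theta=\kappa$ and off-diagonal entries $(\tilde\Gamma^\top)_{ij}=\alpha^{j-i}$ for $i<j$. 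Wait — one should be careful with the precise convention; in any case, after identifying the correct triangular matrix one is left with solving $M\bm\omega=\mathbf 1$ where $M$ is triangular with a geometric pattern of entries. First I would write out the scalar equations $\sum_j M_{ij}\omega_j=1$ and solve them by back-substitution (or forward-substitution), starting from the corner component. The geometric structure means the recursion for the $\omega_i$ is a first-order linear recursion with constant coefficients, which solves explicitly; matching the boundary component $\omega_{N+1}=1/\kappa$ pins down the constant and yields \eqref{omi formula}. This part should be routine once the triangular structure is correctly identified.

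For $\bm\nu=(\Gamma+\tilde\Gamma+2\theta\Id)^{-1}\mathbf 1$, there is no triangular shortcut, but we have already done the hard work: $\Gamma+\tilde\Gamma+2\theta\Id$ is, up to the scalar factor $(1-\alpha^2)^{-1}$ and multiplication by $\Gamma^{-1}$, exactly the matrix $B$ — more precisely $B=(1-\alpha^2)\Gamma^{-1}(\Gamma+\tilde\Gamma+2\theta\Id)$, hence $(\Gamma+\tilde\Gamma+2\theta\Id)^{-1}=\frac1{1-\alpha^2}B^{-1}\Gamma^{-1}$, and so
\begin{equation*}
\bm\nu=\frac1{1-\alpha^2}B^{-1}\Gamma^{-1}\mathbf 1.
\end{equation*}
So the plan is: (i) compute $\Gamma^{-1}\mathbf 1$ explicitly using the tridiagonal form \eqref{KMS inverse eq} — the interior rows sum to $1+\alpha^2-2\alpha=(1-\alpha)^2$ times $\frac1{1-\alpha^2}$, the two boundary rows give $\frac{1-\alpha}{1-\alpha^2}=\frac1{1+\alpha}$, so $\Gamma^{-1}\mathbf1$ has a simple closed form (constant $\frac{(1-\alpha)^2}{1-\alpha^2}=\frac{1-\alpha}{1+\alpha}$ in the interior, $\frac1{1+\alpha}$ at the two ends); (ii) plug in the explicit entries of $B^{-1}$ from Lemma~\ref{B inverse lemma}, which are products of the $\delta$'s and $\phi$'s times powers of $\alpha\kappa$ or $\alpha(\kappa-1)$; (iii) perform the matrix–vector product, splitting the sum over $j$ at $j=i$ to account for the two cases in \eqref{B inverse eq}, and collect the boundary terms $j=1$ and $j=N+1$ (which carry the anomalous factor $\frac1{1+\alpha}$) separately from the interior terms. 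After simplification — using $\frac1{1-\alpha^2}\cdot\frac{1-\alpha}{1+\alpha}\cdot(1-\alpha^2)=\dots$, i.e.\ tracking the scalar prefactors carefully — the three stated formulas for $\nu_1$, $\nu_{N+1}$, and $\nu_i$ ($2\le i\le N$) should drop out, with $\delta_0=1$ absorbing the $j=1$ term and $\phi_{N+2}=1$ absorbing the $j=N+1$ term.

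The main obstacle is purely bookkeeping: the boundary rows of $\Gamma^{-1}$ differ from the interior rows, and the inverse $B^{-1}$ has a two-case structure, so the product $B^{-1}\Gamma^{-1}\mathbf1$ naturally breaks into four regimes ($j<i$ vs.\ $j\ge i$, crossed with $j$ interior vs.\ $j$ boundary), and one must verify that the scalar prefactors $\frac1{1-\alpha^2}$, the $(1-\alpha)$ from the interior row sums, and the lone $(1-\alpha)$ (rather than $(1-\alpha)^2$) from the boundary rows all combine to leave exactly the single overall factor $\frac{1-\alpha}{\delta_{N+1}}$ displayed in the theorem, with $\delta_N$ and $(\alpha(\kappa-1))^N$ (resp.\ $\phi_2$ and $(\alpha\kappa)^N$) emerging as the clean boundary contributions. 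I would organize this by first handling $\nu_i$ for generic interior $i$ and then treating $i=1$ and $i=N+1$ as degenerate cases where one of the two $j$-sums is empty. No conceptual difficulty is expected beyond keeping the indices and the geometric factors straight; the recursions defining $\delta_k$ and $\phi_k$ are never needed in this proof beyond their role in Lemma~\ref{B inverse lemma}, so the closed forms from Lemmas~\ref{lemma1} and~\ref{lemma2} are not actually invoked here — they will be needed later, for the asymptotics.
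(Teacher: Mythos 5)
Your plan for $\bm\nu$ is exactly the paper's proof: write $\bm\nu=(\Gamma+\tilde\Gamma+2\theta\Id)^{-1}\mathbf 1$ in terms of $B^{-1}$ and $\Gamma^{-1}\mathbf 1$, read off $B^{-1}$ from Lemma~\ref{B inverse lemma}, and use the fact that $(1-\alpha^2)\Gamma^{-1}\mathbf 1$ is explicit from \eqref{KMS inverse eq}; the paper compresses this into two sentences, and your four-regime bookkeeping (with $\delta_0=1$ and $\phi_{N+2}=1$ absorbing the boundary columns) is precisely what those two sentences hide. Your treatment of $\bm\omega$, by contrast, is genuinely different from the paper's: the paper simply cites Eq.~(16) of~\cite{SchiedZhangHotPotato}, whereas you re-derive it from the observation that $\Gamma-\tilde\Gamma+2\theta\Id=\tilde\Gamma^\top+2\theta\Id$ is upper triangular with diagonal $\kappa$ and entries $\alpha^{j-i}$ above it. That derivation works: back-substitution gives $\omega_{N+1}=1/\kappa$, subtracting $\alpha$ times row $i+1$ from row $i$ yields the first-order recursion $\kappa\omega_i=(1-\alpha)+\alpha(\kappa-1)\omega_{i+1}$, and solving it with that terminal value produces \eqref{omi formula}. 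This makes your argument more self-contained than the paper's, at the cost of redoing a computation the authors could cite.

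One concrete correction is needed in the $\bm\nu$ part. From $B=(1-\alpha^2)\Gamma^{-1}(\Gamma+\tilde\Gamma+2\theta\Id)$ one gets
\begin{equation*}
(\Gamma+\tilde\Gamma+2\theta\Id)^{-1}=(1-\alpha^2)\,B^{-1}\Gamma^{-1},\qquad\text{hence}\qquad \bm\nu=B^{-1}\bigl[(1-\alpha^2)\Gamma^{-1}\mathbf 1\bigr],
\end{equation*}
not $\frac1{1-\alpha^2}B^{-1}\Gamma^{-1}\mathbf 1$ as you wrote: you inverted the scalar. With the correct factor, $(1-\alpha^2)\Gamma^{-1}\mathbf 1=(1-\alpha)\,(1,1-\alpha,\dots,1-\alpha,1)^\top$ (row sums of the tridiagonal matrix in \eqref{KMS inverse eq} are $1-\alpha$ at the ends and $(1-\alpha)^2$ in the interior), and the single prefactor $\frac{1-\alpha}{\delta_{N+1}}$ together with the extra $(1-\alpha)$ on the interior sums drops out exactly as in the theorem. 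As written, your prefactor sketch would leave a spurious $\frac1{(1+\alpha)^2(1-\alpha)}$; since you explicitly defer the scalar bookkeeping, this is a repairable slip rather than a conceptual gap, but it must be fixed for the computation to close.
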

\begin{proof} The representation \eqref{omi formula}
 was proved in \cite[Eq. (16)]{SchiedZhangHotPotato} (note that our vector $\bm\omega$ is denoted by $\bm u$ in~\cite{SchiedZhangHotPotato}, that our $\alpha$ corresponds to $a^{1/N}$ in~\cite{SchiedZhangHotPotato}, and that $\lambda=1$ here).  
 
 To prove the formulas for $\bm\nu$, note that we have $(\Gamma + \tilde{\Gamma} + 2\theta \Id)^{-1}\mathbf{1} = ( 1-\alpha^2)B^{-1}\Gamma^{-1} \mathbf{1}$.	The result thus follows from  Lemma~\ref{B inverse lemma}
 together with the fact that $(1-\alpha^2)\Gamma^{-1}\bm1=(1,1-\alpha,\dots,1-\alpha,1)^\top$, which in turn follows from \eqref{KMS inverse eq}.
\end{proof}

\section{Conclusion}We have studied the high-frequency limits of strategies and costs in a Nash equilibrium for two agents that are competing to minimize liquidation costs in a discrete-time market impact model with exponentially decaying price impact and quadratic transaction costs of size $\theta\ge0$. Our results have permitted us to give mathematically rigorous proofs of numerical observations made in~\cite{SchiedZhangHotPotato}.  In particular, we have shown that, for $\theta=0$, equilibrium strategies and costs will oscillate indefinitely between two accumulation points, which were computed explicitly. For $\theta>0$,  strategies,  costs, and total tax revenues were shown to converge toward limits that are independent of $\theta$. We have considered Nash equilibria in continuous time and  shown that for $\theta>0$ the limiting strategies converge  to the unique continuous-time Nash equilibrium for the critical value $\theta^*$  and that the  high-frequency limits of the discrete-time equilibrium costs converge to the expected costs in the continuous-time Nash equilibrium. For $\theta\neq\theta^*$, however, it was shown that continuous-time Nash equilibria do not exist unless both inventories are zero.  Moreover, we have provided a range of model parameters for which the limiting expected costs of both agents are decreasing functions of $\theta$ so that raising additional transaction costs can reduce the expected costs of all agents. 

\appendix
\section{Proofs of  Theorems~\ref{limit strategies kappa thm} and~\ref{costs asymptotics thm} and  Corollaries~\ref{cost comparison cor} and~\ref{tax corollary}}

Quantities such as $\alpha$, $\nu$, or $\omega$ introduced in Section~\ref{closed form section} depend on the  parameter $N$ of the trading frequency. For the proofs of our asymptotic results, we need to send $N$ to infinity, but for the sake of reducing formula length, we will not always make the $N$-dependence of  quantities explicit. For instance, we will write
$$\lim_{N\uparrow\infty}\alpha=\lim_{N\uparrow\infty}e^{-\rho T/N}=0.
$$

\subsection{Proof of Theorem~\ref{limit strategies kappa thm} }

We first prove parts (c) and (d) of Theorem~\ref{limit strategies kappa thm}. The proofs of these parts are relatively easy. 

\begin{proof}[Proof of Theorem~\ref{limit strategies kappa thm} {\rm (c)}] Let $\theta>0$, which is equivalent to $\kappa>1/2$. 
We first sum over  \eqref{omi formula} to obtain that, for all $\kappa\ge1/2$ and $n=1,\dots, N+1$,
\begin{align}\label{partial sum omk kappa>1/2 eq}
\sum_{k=1}^n\omega_k=\frac1\kappa \bigg[n\Big(1-\frac\alpha{(1-\alpha)\kappa+\alpha}\Big)+\frac\alpha{(1-\alpha)\kappa+\alpha}\Big(\frac{\alpha(\kappa-1)}{\kappa}\Big)^{N+1-n}\frac{\big(\frac{\alpha(\kappa-1)}{\kappa}\big)^n-1}{\frac{\alpha(\kappa-1)}{\kappa}-1}\bigg];
\end{align}
Here we explicitly include the case $\kappa=1/2$ for later use. 
By taking $n=N+1$, formula \eqref{partial sum omk kappa>1/2 eq} gives
\begin{align*}
\bm\omega^\top\bm1=\frac1\kappa \bigg[(N+1)\Big(1-\frac\alpha{(1-\alpha)\kappa+\alpha}\Big)+\frac\alpha{(1-\alpha)\kappa+\alpha}\frac{\big(\frac{\alpha(\kappa-1)}{\kappa}\big)^{N+1}-1}{\frac{\alpha(\kappa-1)}{\kappa}-1}\bigg].
\end{align*}
Recalling that $\alpha = e^{-\rho T/N}$, we have that 
\begin{equation}\label{Thm 3.1 c proof aux eq1}
\lim_{N\uparrow\infty}(N+1)\Big(1-\frac\alpha{(1-\alpha)\kappa+\alpha}\Big)=\kappa\rho T.
\end{equation}
Since for    $\kappa>1/2$ we have $|\kappa-1|/\kappa<1$, this gives 
\begin{equation}\label{omega.1 asympt for kappa>1/2 eq}
\lim_{N\uparrow\infty}\bm\omega^\top\bm1=\rho T+1.
\end{equation}
Now, with $n_t:=\lceil Nt/T\rceil$,
$$W^{(N)}_t=1-\frac1{\bm\omega^\top\bm1}\sum_{k=1}^{n_t}\omega_k.
$$
Since for $t<T$ we have $ (\frac{\alpha(\kappa-1)}{\kappa})^{N+1-n_t}\to0$ as $N\uparrow\infty$, formula \eqref{partial sum omk kappa>1/2 eq} gives in this case that 
$$\sum_{k=1}^{n_t}\omega_k\longrightarrow \rho t\qquad\text{as $N\uparrow\infty$.}
$$
Putting everything together now yields the assertion. 
\end{proof}

\begin{proof}[Proof of Theorem~\ref{limit strategies kappa thm} {\rm (d)}] For $\kappa=1/2$, the formula \eqref{partial sum omk kappa>1/2 eq} simplifies as follows,
\begin{align}\label{partial sum omk kappa=1/2 eq}
\sum_{k=1}^n\omega_k=2 \bigg[n\Big(1-\frac{2\alpha}{1+\alpha}\Big)+(-1)^{N+1}\frac{2\alpha^{N+2-n}\big((-1)^n-\alpha^n\big)}{(1+\alpha)^2}\bigg].
\end{align}
Thus, for $\kappa=1/2$, we get with \eqref{Thm 3.1 c proof aux eq1}
 that
\begin{align}\label{sum omi limit formula kappa=1/2}
\lim_{\substack{N\uparrow\infty\\
	N\,\text{\rm even}}}\bm\omega^\top\bm1=\lim_{\substack{N\uparrow\infty\\
	N\,\text{\rm even}}}	\sum_{i=1}^{ N+1 } \omega_i = e^{-\rho T} +\rho T+1 \hspace{1cm}\text{ and }\hspace{1cm} \lim_{\substack{N\uparrow\infty\\
	N\,\text{\rm odd}}}	\bm\omega^\top\bm1 = -e^{-\rho T}+\rho T+1.
	\end{align}
If taking $n=n_t=\lceil Nt/T\rceil$, one easily shows that as $N\uparrow\infty$,
$$n\Big(1-\frac{2\alpha}{1+\alpha}\Big)\lra\frac{\rho t}2\qquad \text{and}\qquad \frac{2\alpha^{N+2-n}\big(\pm1-\alpha^n\big)}{(1+\alpha)^2}\lra e^{-\rho(T-t)}(\pm1-e^{-\rho t}).
$$
Plugging this into \eqref{partial sum omk kappa=1/2 eq} and using the definition of $W^{(N)}$ yields the result of Theorem~\ref{limit strategies kappa thm} (d)  after a short computation. 
\end{proof}

Now we prepare for the proofs of parts (a) and (b) of Theorem~\ref{limit strategies kappa thm}.  We first consider the  case ${\kappa=1}$; the corresponding proof of Theorem~\ref{limit strategies kappa thm} (a) will be given after the following lemma.

\begin{lemma}
	Let $\kappa =1$.
	Then
	\begin{align}\label{aa}
		\sum_{i=1}^n \nu_i &= \frac{1}{2+\alpha}\left(\left(1-\alpha\right)n+\alpha+\frac{\alpha\left(\alpha^2-2\right)}{2\left(2+\alpha\right)}\left(\frac{\alpha}{2-\alpha^2}\right)^{N+1}+\frac{\alpha\left(1+\alpha\right)}{2+\alpha}\left(\frac{\alpha}{2-\alpha^2}\right)^{N+1-n}\right)
	\end{align}
	for $n \in \left\{1,\dots,N+1\right\}$.
\end{lemma}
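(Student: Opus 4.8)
The plan is to specialize the closed-form expressions of Theorem~\ref{omega and nu closed form thm} to $\kappa=1$ and then carry out the resulting geometric summation. First I would record what Lemmas~\ref{lemma1} and~\ref{lemma2} yield in this case. A direct substitution of $\kappa=1$ shows that the radicand in the definition of $R$ collapses to $(\alpha^2-2)^2$, so $R=2-\alpha^2$, and hence $m_+=2-\alpha^2$, $m_-=0$, $c_+=2(1-\alpha^2)/(2-\alpha^2)$, $d_+=1$, $d_-=0$. (Equivalently, one notes that for $\kappa=1$ the subdiagonal entry $-\alpha(\kappa-1)$ of $B$ vanishes, so $B$ is upper bidiagonal and the recursions in those lemmas are first order.) Consequently
$$\delta_k = 2(1-\alpha^2)(2-\alpha^2)^{k-1}\quad(1\le k\le N+1),\qquad \delta_0=1,\qquad \phi_k=(2-\alpha^2)^{N+2-k}\quad(2\le k\le N+2).$$

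Next I would simplify the three formulas for the components of $\bm\nu$. Since $\alpha(\kappa-1)=0$, every summand in Theorem~\ref{omega and nu closed form thm} carrying a positive power of $\alpha(\kappa-1)$ vanishes, so
$$\nu_{N+1}=\frac{(1-\alpha)\delta_N}{\delta_{N+1}},\qquad \nu_i=\frac{(1-\alpha)\delta_{i-1}}{\delta_{N+1}}\Big((1-\alpha)\sum_{j=i}^N\alpha^{j-i}\phi_{j+1}+\alpha^{N+1-i}\Big)\ \ (2\le i\le N),$$
and $\nu_1=\frac{1-\alpha}{\delta_{N+1}}\big(\phi_2+(1-\alpha)\sum_{j=2}^N\alpha^{j-1}\phi_{j+1}+\alpha^N\big)$. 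Plugging in the explicit $\delta$'s and $\phi$'s, writing $\beta:=\alpha/(2-\alpha^2)$ and using $1-\beta=(1-\alpha)(2+\alpha)/(2-\alpha^2)$, each inner sum is geometric with ratio $\beta$; after cancelling common powers of $2-\alpha^2$ one obtains the compact forms
$$\nu_i=\frac{1-\alpha}{2+\alpha}\big(1+(1+\alpha)\beta^{\,N+2-i}\big)\quad(2\le i\le N+1),\qquad \nu_1=\frac{1}{2+\alpha}+\frac{\alpha}{2(2+\alpha)}\beta^{\,N}.$$
(The $i=N+1$ instance of the first formula reproduces $\nu_{N+1}=(1-\alpha)/(2-\alpha^2)$, a convenient internal check.)

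Finally I would sum. Splitting $\sum_{i=1}^n\nu_i=\nu_1+\sum_{i=2}^n\nu_i$, the middle sum contributes a linear term $(n-1)(1-\alpha)/(2+\alpha)$ plus the geometric sum $\sum_{i=2}^n\beta^{N+2-i}=\beta^{N+2-n}(1-\beta^{n-1})/(1-\beta)$; combining the constant pieces produces $((1-\alpha)n+\alpha)/(2+\alpha)$, while collecting the exponential pieces and rewriting $\beta^{N}$ and $\beta^{N+2-n}$ as $\beta^{N+1}$ and $\beta^{N+1-n}$ by extracting one factor $\beta$ yields exactly the asserted right-hand side. I expect the only genuine obstacle to be the bookkeeping in this last step: there are several near-cancellations — in particular the separate $\beta^{N}$- and $\beta^{N+1}$-contributions must collapse into the single coefficient $\alpha(\alpha^2-2)/(2(2+\alpha))$ — and the boundary cases $n=1$ and $n=N+1$, where some geometric sums are empty, should be verified directly.
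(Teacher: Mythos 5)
Your proposal is correct and follows essentially the same route as the paper: specialize $\kappa=1$ in Lemmas~\ref{lemma1} and~\ref{lemma2} to get $\delta_k = 2(1-\alpha^2)(2-\alpha^2)^{k-1}$ and $\phi_k=(2-\alpha^2)^{N+2-k}$, deduce the explicit component formulas for $\nu_i$ from Theorem~\ref{omega and nu closed form thm} (your compact forms agree with the paper's $\nu_1$ and $\nu_i$ after rewriting $\beta^{N}$ as $(2-\alpha^2)\beta^{N+1}/\alpha$), and sum the resulting geometric series. All intermediate values ($R=2-\alpha^2$, $m_+=2-\alpha^2$, $m_-=0$, $d_-=0$) and the final coefficient collapse to $\alpha(\alpha^2-2)/(2(2+\alpha))$ check out.
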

\begin{proof}
	Plugging in $\kappa=1$ yields that $\delta_k = 2\left(1-\alpha^2\right)\left(2-\alpha^2\right)^{k-1}$ for $k \in \left\{1,\dots,N+1\right\}$, as well as $\phi_k = \left(2-\alpha^2\right)^{N+2-k}$ for $k \in \left\{2,\dots,N+1\right\}$.
	Therefore,
	\begin{align*}
		\nu_1 = \frac{1}{2+\alpha}\left(1+\frac{2-\alpha^2}{2}\left(\frac{\alpha}{2-\alpha^2}\right)^{N+1}\right)\qquad\text{and}\qquad		\nu_i = \frac{1}{2+\alpha}\left(1-\alpha+\left(1-\alpha^2\right)\left(\frac{\alpha}{2-\alpha^2}\right)^{N+2-i}\right)
	\end{align*}
	for $i\in\left\{2,\dots,N+1\right\}$.
	Summing over $i = 1,\dots,n$  yields the result.	 
\end{proof}
 
 \begin{proof}[Proof of Theorem~\ref{limit strategies kappa thm} {\rm(a)} for $\kappa=1$]
	Recall that $\alpha = e^{-\rho T/N}$.
	Therefore, $\left(1-\alpha\right)n_t \to \rho t$ and $\left(2-\alpha^2\right)^{n_t} \to e^{2\rho t}$ for all $t \in \left(0,T\right]$.
	Taking limits in \eqref{aa} yields
	\begin{align}\label{(partial) sum nui kappa=1 eq}
		\sum_{i=1}^{n_t} \nu_i \longrightarrow \frac{e^{-3\rho T}\left(4e^{3\rho t}-1\right)+6\left(\rho t+1\right)}{18}.
	\end{align}
	Plugging this into the definition of $V^{(N)}_t$ yields the result. 	 
\end{proof}
 
Now we prepare for the proof of parts  (a) and (b)  of Theorem~\ref{limit strategies kappa thm} in case $\kappa\neq1$. For the remainder of this paper, we define the shorthand notation for $x\in\mathbb{R}$ and $m\in\mathbb{N}$,
\begin{align*}
	\left[x\right]^m := \frac{1-\alpha}{\delta_{N+1}} x^m.
\end{align*}
This notation will be convenient when computing limits of expressions like $[x]^N$.

\begin{lemma}
	Let $\kappa \ge 1/2$ and $\kappa \neq 1$.
	Define $C_1 := \alpha\left(1+\alpha\right)/\left(\kappa+1-\alpha\left(\kappa-2\right)\right)$.
	Then
	\begin{align}\label{ae}
		\sum_{i=1}^n \nu_i
		&= \sums \frac{d_\sigma\left(m_\sigma-\alpha^2\kappa \right)}{m_\sigma- \alpha\kappa} \left[m_\sigma\right]^N \\
			&\hspace{.5cm}{}+ \left(1-\alpha\right)\left(n-1\right)\sums c_\sigma d_\sigma \left(\frac{\alpha\left(\kappa-1\right)}{m_\sigma-\alpha\left(\kappa-1\right)}+\frac{m_\sigma}{m_\sigma-\alpha\kappa}\right) \left[m_\sigma\right]^N\nonumber\\
			&\hspace{.5cm}{}+C_1\left(1+\sums \frac{c_\sigma m_\sigma\left(\left(\frac{m_\sigma}{\alpha\kappa}\right)^{n-1}-1\right)}{m_\sigma-\alpha\kappa}\right)\alpha^N \left[\kappa\right]^N\nonumber\\
			&\hspace{.5cm}{}+2C_1 \sums \frac{d_\sigma m_\sigma\left(\frac{\alpha\left(\kappa-1\right)}{m_\sigma}-\left(\frac{\alpha\left(\kappa-1\right)}{m_\sigma}\right)^n\right)}{m_\sigma-\alpha\left(\kappa-1\right)}  \left[m_\sigma\right]^N,\nonumber
	\end{align}
	for $n \in \left\{1,\dots,N\right\}$, and
	\begin{equation}\label{af}
	\begin{aligned}
		\nu_{N+1}&= \sums \frac{c_\sigma\left(m_\sigma-\alpha^2\left(\kappa-1\right)\right)}{m_\sigma-\alpha\left(\kappa-1\right)}  \left[m_\sigma\right]^N+2C_1\alpha^N  \left[\kappa-1\right]^N.
	\end{aligned}
	\end{equation}
\end{lemma}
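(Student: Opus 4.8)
The plan is to compute $\sum_{i=1}^n\nu_i$ by directly summing the explicit formulas for $\nu_i$ from Theorem~\ref{omega and nu closed form thm}, substituting the closed forms for $\delta_k$ and $\phi_k$ from Lemmas~\ref{lemma1} and~\ref{lemma2}, and then collecting terms as geometric series in the ratios $m_\sigma/(\alpha\kappa)$ and $\alpha(\kappa-1)/m_\sigma$. Since $\kappa\neq1$, we have $\alpha(\kappa-1)\neq0$ and the three different regimes in the formula for $\nu_i$ (namely $i=1$, $2\le i\le N$, and $i=N+1$) must be handled, with $\nu_{N+1}$ treated separately to obtain \eqref{af}. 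The key algebraic input is that $\delta_{i-1}=c_+m_+^{i-1}+c_-m_-^{i-1}$ and $\phi_{i+1}=d_+m_+^{N+1-i}+d_-m_-^{N+1-i}$, so that every product $\delta_{i-1}\phi_{j+1}$ or $\delta_{j-1}\phi_{i+1}$ appearing under a sum over $j$ becomes a sum of four geometric terms indexed by $(\sigma,\tau)\in\{+,-\}^2$; one then pulls out the $[m_\sigma]^N=\frac{1-\alpha}{\delta_{N+1}}m_\sigma^N$ bookkeeping factor so that the surviving ratios, such as $m_\sigma/(\alpha\kappa)$ raised to powers up to $n$, stay uniformly bounded in $N$ later on.

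The main steps, in order, would be: (i) write $\sum_{i=1}^n\nu_i=\nu_1+\sum_{i=2}^{n}\nu_i$ (for $n\le N$) and substitute the Theorem~\ref{omega and nu closed form thm} formula for the general $\nu_i$, $2\le i\le N$, which already contains two inner sums over $j$ plus two boundary terms; (ii) insert $\delta_{i-1}=\sum_\sigma c_\sigma m_\sigma^{i-1}$ and $\phi_{i+1}=\sum_\sigma d_\sigma m_\sigma^{N+1-i}$, and likewise for the $j$-indexed factors, and interchange the order of the finite summations over $i$ and $j$; (iii) evaluate the resulting geometric sums in $j$ (ratios $\alpha(\kappa-1)/m_\sigma$ for the lower sum $j=2,\dots,i-1$ and $\alpha\kappa/m_\sigma$ for the upper sum $j=i,\dots,N$), then the geometric sum in $i$, producing exactly the linear-in-$n$ term (from the constant part of the geometric series) and the $(m_\sigma/(\alpha\kappa))^{n-1}$ and $(\alpha(\kappa-1)/m_\sigma)^n$ terms; (iv) identify the coefficient $C_1=\alpha(1+\alpha)/(\kappa+1-\alpha(\kappa-2))$ — this will emerge as the value of a specific combination of $\delta_1,\delta_2$ data, equivalently as $(1-\alpha^2)/\delta_1$ up to the right normalization, so it is worth verifying that $C_1$ matches the $\alpha^N[\kappa]^N$ and $\alpha^N[\kappa-1]^N$ prefactors coming from the boundary terms $(\alpha\kappa)^{N+1-i}\delta_{i-1}$ and $(\alpha(\kappa-1))^{i-1}\phi_{i+1}$; (v) use the characteristic equation \eqref{c}, i.e. $m_\sigma^2-(1+\alpha^2(\kappa-2)+\kappa)m_\sigma+\alpha^2\kappa(\kappa-1)=0$, together with Vieta's relations $m_+m_-=\alpha^2\kappa(\kappa-1)$ and $m_++m_-=1+\alpha^2(\kappa-2)+\kappa$, to simplify the various quotients $m_\sigma-\alpha^2\kappa$, $m_\sigma-\alpha\kappa$, $m_\sigma-\alpha(\kappa-1)$, $m_\sigma-\alpha^2(\kappa-1)$ into the compact shapes shown in \eqref{ae} and \eqref{af}; (vi) finally treat $\nu_{N+1}$ directly from its Theorem~\ref{omega and nu closed form thm} formula by the same substitution and geometric summation, which is shorter because only $\delta_{j-1}$ (not products with $\phi$) appears.

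The main obstacle I expect is step (v): after the raw summation one obtains a correct but unwieldy expression, and coaxing it into precisely the stated form — in particular getting the coefficient of the linear-in-$n$ term to be exactly $(1-\alpha)(n-1)\sum_\sigma c_\sigma d_\sigma\big(\tfrac{\alpha(\kappa-1)}{m_\sigma-\alpha(\kappa-1)}+\tfrac{m_\sigma}{m_\sigma-\alpha\kappa}\big)[m_\sigma]^N$ rather than some algebraically equivalent but differently-grouped expression — requires careful use of the two Vieta relations and of the identities $c_++c_-=1$, $d_++d_-=1$ that follow from Lemmas~\ref{lemma1} and~\ref{lemma2}. A secondary bookkeeping hazard is the index shift between the $i=1$ term, the bulk $2\le i\le N$ terms, and the endpoint $\nu_{N+1}$: the formula for $\nu_1$ uses $\phi_2,\dots,\phi_{N+1}$ and $(\alpha\kappa)^N$ as a boundary contribution, and one must check that, once $\delta_0=1$ is adopted, the $i=1$ term is the $i=1$ specialization of the bulk formula so that it can be folded into the general sum without a separate additive correction — this is what makes the $(n-1)$ (rather than $n$) appear in the linear term. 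No hard analysis is involved here; the difficulty is purely one of disciplined symbolic manipulation, and once \eqref{ae}–\eqref{af} are in hand the limits $N\uparrow\infty$ (using $\alpha=e^{-\rho T/N}$, $[m_\sigma]^N$ having an explicit limit, and $|\alpha(\kappa-1)/m_\sigma|<1$, $|\,m_\sigma/(\alpha\kappa)\,|$ bounded) are routine and feed directly into the proof of Theorem~\ref{limit strategies kappa thm}(a),(b).
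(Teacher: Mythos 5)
Your proposal is correct and follows essentially the same route as the paper: substitute the closed forms $\delta_{k}=c_+m_+^{k}+c_-m_-^{k}$ and $\phi_{k}=d_+m_+^{N+2-k}+d_-m_-^{N+2-k}$ into the formulas of Theorem~\ref{omega and nu closed form thm}, evaluate the inner geometric sums in $j$, observe that the $(\sigma,\overline\sigma)$ cross terms cancel via the Vieta relation $m_+m_-=\alpha^2\kappa(\kappa-1)$, then sum the resulting clean expression for $\nu_i$ over $i=2,\dots,n$ (with $\nu_1$ and $\nu_{N+1}$ kept as separate boundary terms, which is exactly where the $(n-1)$ comes from).
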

\begin{proof}
	For $i \in \left\{3,\dots,N\right\}$,
	\begin{align}\label{ab}
		&\hphantom{=\,}\sum_{j=2}^{i-1} \left(\alpha\left(\kappa-1\right)\right)^{i-j} \delta_{j-1} \phi_{i+1}\\ 
		&=\alpha\left(\kappa-1\right)
		\Bigg(
			\sums\frac{c_\sigma d_\sigma}{m_\sigma-\alpha\left(\kappa-1\right)}\left(m_\sigma\right)^N\nonumber\\
			&\hspace{2.5cm}{}+ \frac{c_+ d_- \left(m_-\right)^{N+1}}{m_+\left(m_+-\alpha\left(\kappa-1\right)\right)} \left(\frac{m_+}{m_-}\right)^i + \frac{c_- d_+ \left(m_+\right)^{N+1}}{m_-\left(m_--\alpha\left(\kappa-1\right)\right)}\left(\frac{m_-}{m_+}\right)^i\nonumber\\
			&\hspace{2.5cm}{}- \sums \frac{c_\sigma m_\sigma}{m_\sigma-\alpha\left(\kappa-1\right)}\sumt \frac{d_\tau\left(m_\tau\right)^{N+1}}{\left(\alpha\left(\kappa-1\right)\right)^2}\left(\frac{\alpha\left(\kappa-1\right)}{m_\tau}\right)^i\nonumber
		\Bigg)
	\end{align}
	and
	\begin{align}\label{ac}
		&\hphantom{=\,}\sum_{j=i}^{N} \left(\alpha\kappa\right)^{j-i} \delta_{i-1} \phi_{j+1}\\ 
		&=\sums\frac{c_\sigma d_\sigma}{m_\sigma-\alpha\kappa}\left(m_\sigma\right)^{N+1}
			+ \frac{c_+ d_-\left(m_-\right)^{N+2}}{m_+\left(m_--\alpha\kappa\right)}\left(\frac{m_+}{m_-}\right)^i +\,\frac{c_- d_+\left(m_+\right)^{N+2}}{m_-\left(m_+-\alpha\kappa\right)}\left(\frac{m_-}{m_+}\right)^i\nonumber\\
		&\hspace{.5cm}{}-\sums\frac{d_\sigma m_\sigma}{m_\sigma-\alpha\kappa}\sumt\frac{c_\tau\left(\alpha\kappa\right)^{N+1}}{m_\tau}\left(\frac{m_\tau}{\alpha\kappa}\right)^i.\nonumber
	\end{align}
	Since
	\begin{equation*}
	\begin{aligned}
		\alpha\left(\kappa-1\right)\left(m_--\alpha\kappa\right)+m_-\left(m_+-\alpha\left(\kappa-1\right)\right) 
		&=\alpha\left(\kappa-1\right)\left(m_+-\alpha\kappa\right)+m_+\left(m_--\alpha\left(\kappa-1\right)\right)\\
		&=m_+m_--\alpha^2\kappa\left(\kappa-1\right)=0,
	\end{aligned}
	\end{equation*}
	the second and third summands in \eqref{ab} and \eqref{ac} cancel each other out.
	Simplifying further, we arrive at
	\begin{align*}
		\nu_i &= \left(1-\alpha\right)\sums c_\sigma d_\sigma\left(\frac{\alpha\left(\kappa-1\right)}{m_\sigma-\alpha\left(\kappa-1\right)}+\frac{m_\sigma}{m_\sigma-\alpha\kappa}\right)  \left[m_\sigma\right]^N\\
		&\hspace{.5cm}{}+2C_1\sums \frac{d_\sigma m_\sigma   \left[m_\sigma\right]^N}{\alpha\left(\kappa-1\right)}\left(\frac{\alpha\left(\kappa-1\right)}{m_\sigma}\right)^i+C_1\sums \frac{c_\sigma \alpha^{N+1}\kappa   \left[\kappa\right]^N}{m_\sigma}\left(\frac{m_\sigma}{\alpha\kappa}\right)^i,\nonumber
	\end{align*}
	for $i \in \left\{2,\dots,N\right\}$.
	Similar calculations yield
	\begin{align*}
		\nu_1 &= \sums\frac{d_\sigma\left(m_\sigma-\alpha^2\kappa\right)}{m_\sigma-\alpha\kappa}  \left[m_\sigma\right]^N+C_1 \alpha^N   \left[\kappa\right]^N,\\
		\nu_{N+1} &= \sums\frac{c_\sigma\left(m_\sigma-\alpha^2\left(\kappa-1\right)\right)}{m_\sigma-\alpha\left(\kappa-1\right)}  \left[m_\sigma\right]^N+2C_1\alpha^N  \left[\kappa-1\right]^N.
	\end{align*}
	Noting that
	\begin{equation*}
	\begin{aligned}
		\sum_{i=2}^n \sums \frac{d_\sigma m_\sigma   \left[m_\sigma\right]^N}{\alpha\left(\kappa-1\right)}\left(\frac{\alpha\left(\kappa-1\right)}{m_\sigma}\right)^i= \sums \frac{d_\sigma m_\sigma\left(\left(\frac{m_\sigma}{\alpha\left(\kappa-1\right)}\right)^{N-1}-\left(\frac{m_\sigma}{\alpha\left(\kappa-1\right)}\right)^{N-n}\right)}{m_\sigma-\alpha\left(\kappa-1\right)}\alpha^N  \left[\kappa\right]^N
	\end{aligned}
	\end{equation*}
	and
	\begin{equation*}
	\begin{aligned}
		\sum_{i=2}^n \sums \frac{c_\sigma\alpha^{N+1}\kappa   \left[\kappa\right]^N}{m_\sigma}\left(\frac{m_\sigma}{\alpha\kappa}\right)^i
		= \sums \frac{c_\sigma m_\sigma\left(\left(\frac{m_\sigma}{\alpha\kappa}\right)^{n-1}-1\right)}{m_\sigma-\alpha\kappa}\alpha^N   \left[\kappa\right]^N
	\end{aligned}
	\end{equation*}
	 for all $n \in \left\{2,\dots,N\right\}$ completes the proof.	 
\end{proof}

The following lemma summarizes the limit behaviour of all objects that we will encounter while obtaining the limiting strategy and, later, the limiting costs. Recall that  $n_t := \lceil N t/T\rceil$.
For a sequence of real numbers $\left(a_N\right)_{N \in \mathbb{N}}$ and a real number $a$, we use the shorthand notation $\left(a_N\right)^{n_t} \to \pm a$ to state that $\left(a_N\right)^{n_t} = \left(-1\right)^{n_t} \left| a_N\right|^{n_t}$ and $\lim_{N \to \infty} \left|a_N\right|^{n_t} = a$.

\begin{lemma}\label{lemma4}

	For $\kappa\ge1/2$ and  $\kappa \neq 1$, we have the following limits for $N\uparrow\infty$.
	\begin{enumerate}
	\itemsep1em
		\item $\alpha \to 1$ and $\alpha^{n_t} \to e^{-\rho t}$;
		\item $R \to 1,\quad  c_+ \to 0,\quad  c_- \to 1,\quad  d_+ \to 1,\quad  d_- \to 0,\quad  m_+ \to \kappa$, \  and \  $m_- \to \kappa-1$;
		\item $\frac{c_+}{m_+-\kappa} \to 2, \quad \frac{c_+}{m_+-\alpha\kappa} \to \frac{4}{3}, \quad  \frac{c_+}{m_+-\alpha^2\kappa} \to 1$, \ and \ $\frac{c_+}{1-\alpha^2} \to 2\kappa$;
		\item $\frac{d_-}{m_--\left(\kappa-1\right)} \to -\frac{1}{2},\quad  \frac{d_-}{m_--\alpha\left(\kappa-1\right)} \to -\frac{2}{3}, \quad \frac{d_-}{m_--\alpha^2\left(\kappa-1\right)} \to -1$, \ and \ $\frac{d_-}{1-\alpha^2} \to \kappa-1$;
		\item $\left(1-\alpha\right)n_t \to \rho t$. 
	\end{enumerate}
\noindent	If additionally $\kappa > 1/2$, then also the following limits are true.
	\begin{enumerate}
		\itemsep1em
	\setcounter{enumi}{5}
		\item $\left(\frac{\kappa-1}{\kappa}\right)^{n_t}\to 0,\quad  \left(\frac{m_+}{\kappa}\right)^{n_t} \to e^{2\rho t}, \quad \left(\frac{\kappa-1}{m_+}\right)^{n_t} \to 0, \quad  \left(\frac{m_-}{\kappa}\right)^{n_t} \to 0$, \ and \ $\left(\frac{\kappa-1}{m_-}\right)^{n_t} \to e^{4\rho t}$;
		\item $  \left[m_+\right]^N \to \frac{1}{4\kappa}, \quad  \left[m_-\right]^N \to 0, \quad   \left[\kappa\right]^N \to \frac{e^{-2\rho T}}{4\kappa}$, \ and \ $  \left[\kappa-1\right]^N\to 0$;
		\item $\frac{\left(\left(\kappa-1\right)/\kappa\right)^N}{1-\alpha^2} \to 0, \quad  \frac{  \left[m_-\right]^N}{1-\alpha^2} \to 0$, \ and \ $\frac{  \left[\kappa-1\right]^N}{1-\alpha^2} \to 0$.
			\end{enumerate}
\noindent	If, on the other hand,  $\kappa = 1/2$ then the preceding limits no longer hold. Instead, we have the following.	\begin{enumerate}[{\normalfont (a')}]
	\itemsep1em
	\setcounter{enumi}{5}
		\item $\left(\frac{\kappa-1}{\kappa}\right)^{n_t} \to \pm1, \quad \left(\frac{m_+}{\kappa}\right)^{n_t} \to e^{2\rho t}, \quad \left(\frac{\kappa-1}{m_+}\right)^{n_t} \to \pm e^{-2\rho t}, \quad \left(\frac{m_-}{\kappa}\right)^{n_t} \to \pm e^{-4\rho t}$,  and $\left(\frac{\kappa-1}{m_-}\right)^{n_t} \to e^{4\rho t}$;
	\item $ \left[m_+\right]^{2N} \to \frac{1}{e^{-6\rho T}+2},  \quad \left[m_-\right]^{2N}\to \frac{1}{2e^{6\rho T}+1},  \quad \left[\kappa\right]^{2N}\to \frac{e^{4\rho T}}{2e^{6\rho T}+1}, \quad  \left[\kappa-1\right]^{2N} \to \frac{e^{4\rho T}}{2e^{6\rho T}+1}$,\\[3pt]
		$ \left[m_+\right]^{2N+1} \to \frac{1}{-e^{-6\rho T}+2},  \quad \left[m_-\right]^{2N+1} \to \frac{1}{-2e^{6\rho T}+1},  \quad \left[\kappa\right]^{2N+1} \to \frac{e^{4\rho T}}{2e^{6\rho T}-1}$, and $ \left[\kappa-1\right]^{2N+1} \to \frac{e^{4\rho T}}{-2e^{6\rho T}+1}$;
	\item $\frac{m_++\kappa-1}{m_++\alpha\left(\kappa-1\right)} \to \frac{2}{3}, \quad  \frac{m_-+\alpha^2\kappa}{m_-+\alpha\kappa} \to \frac{2}{3}$, \ and \ $\frac{\kappa+\alpha\left(\kappa-1\right)}{1-\alpha^2} \to \frac{1}{4}$.\\[5pt]
	\end{enumerate}
\end{lemma}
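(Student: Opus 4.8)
The plan is to reduce everything to a first-order asymptotic expansion in the small parameter $s := 1-\alpha^2 = 1-e^{-2\rho T/N}$, for which $s = \tfrac{2\rho T}{N}+O(N^{-2})$, $1-\alpha = \tfrac{s}{1+\alpha} = \tfrac s2+O(s^2)$, and $n_t s\to 2\rho t$ (using $n_t/N\to t/T$). Expanding the ``static'' quantities around $\alpha=1$, one first computes from the formula for $R$ that $R^2 = 1+(6\kappa-4)s+(\kappa-2)^2s^2$, hence $R = 1+(3\kappa-2)s+O(s^2)$, and then, using the identities $c_++c_-=1$ and $d_++d_-=1$,
\begin{equation*}
m_+=\kappa+\kappa s+O(s^2),\qquad m_-=(\kappa-1)-2(\kappa-1)s+O(s^2),
\end{equation*}
\begin{equation*}
c_+=2\kappa s+O(s^2),\qquad c_-=1-2\kappa s+O(s^2),\qquad d_-=(\kappa-1)s+O(s^2),\qquad d_+=1-(\kappa-1)s+O(s^2).
\end{equation*}
Parts (a), (b), (e) are then immediate ((a): $\alpha=e^{-\rho T/N}\to1$ and $\alpha^{n_t}=e^{-\rho Tn_t/N}\to e^{-\rho t}$; (b): read off the constant terms; (e): $(1-\alpha)n_t=\tfrac{1-e^{-\rho T/N}}{\rho T/N}\cdot\tfrac{\rho Tn_t}{N}\to\rho t$). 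The indeterminate forms in (c) and (d) are settled by comparing leading terms: e.g.\ $m_+-\alpha\kappa=(m_+-\kappa)+\kappa(1-\alpha)\sim\tfrac32\kappa s$, so $c_+/(m_+-\alpha\kappa)\to\tfrac43$; and similarly $m_+-\kappa\sim\kappa s$, $m_+-\alpha^2\kappa\sim2\kappa s$, $m_--(\kappa-1)\sim-2(\kappa-1)s$, $m_--\alpha(\kappa-1)\sim-\tfrac32(\kappa-1)s$, $m_--\alpha^2(\kappa-1)\sim-(\kappa-1)s$, which produce the stated limits $2,\tfrac43,1,2\kappa$ and $-\tfrac12,-\tfrac23,-1,\kappa-1$.

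For (f)--(h) (the case $\kappa>1/2$) the decisive structural fact is $|m_-|\to|\kappa-1|<\kappa=|m_+|$, so $m_+$ is the dominant root. In (f), $\tfrac{m_+}{\kappa}=1+s+O(s^2)=1+\tfrac{2\rho T}{N}+O(N^{-2})$ forces $(\tfrac{m_+}{\kappa})^{n_t}=\exp(n_t\log(1+s+\cdots))\to e^{2\rho t}$, and likewise $\tfrac{\kappa-1}{m_-}=1+2s+O(s^2)$ gives $e^{4\rho t}$; the remaining three quotients have limiting modulus $|\kappa-1|/\kappa<1$, so their $n_t$-th powers tend to $0$. For (g) I would first pin down $\delta_{N+1}$: since the last row of $B$ carries the diagonal entry $1-\alpha^2+\kappa$, which differs from the bulk value, expanding $\det B$ along that row gives $\delta_{N+1}=(1-\alpha^2+\kappa)\delta_N-\alpha^2\kappa(\kappa-1)\delta_{N-1}$; substituting $\delta_N,\delta_{N-1}$ from Lemma~\ref{lemma1} and using that $m_\pm$ solve \eqref{c} collapses this to
\begin{equation*}
\delta_{N+1}=c_+m_+^N\big(m_+-\alpha^2(\kappa-1)\big)+c_-m_-^N\big(m_--\alpha^2(\kappa-1)\big).
\end{equation*}
Dividing $(1-\alpha)m_+^N$ by this and using $1-\alpha\sim\tfrac s2$, $c_+\sim2\kappa s$, $m_+-\alpha^2(\kappa-1)\to1$, $(m_-/m_+)^N\to0$ yields $[m_+]^N\to\tfrac1{4\kappa}$; the other three limits in (g) follow from the algebraic identities $[m_-]^N=[m_+]^N(m_-/m_+)^N$, $[\kappa]^N=[m_+]^N(\kappa/m_+)^N$, $[\kappa-1]^N=[m_+]^N((\kappa-1)/m_+)^N$ together with (f) evaluated at $t=T$. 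Part (h) is then a one-line estimate: each numerator decays geometrically while $1-\alpha^2=s\sim\tfrac{2\rho T}{N}$, so the quotients tend to $0$.

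The case $\kappa=1/2$ (parts (a')--(c')) is where the limiting moduli coincide, $m_+\to\tfrac12$ and $m_-\to-\tfrac12$, so both roots contribute and, from $\tfrac{m_-}{m_+}=-(1-3s+O(s^2))$, one has $(m_-/m_+)^N\to(-1)^Ne^{-6\rho T}$; this alternation is exactly what produces the two cluster points and the even/odd split of the trading frequency. Part (a') is derived verbatim as (f), now keeping track of signs: $\tfrac{m_+}{\kappa}=1+s+O(s^2)$, $\tfrac{\kappa-1}{m_-}=1+2s+O(s^2)$, $\tfrac{\kappa-1}{m_+}=-(1-s+O(s^2))$, $\tfrac{m_-}{\kappa}=-(1-2s+O(s^2))$, and $\tfrac{\kappa-1}{\kappa}=-1$. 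For (b') the same closed form for $\delta_{N+1}$ is used, but now $m_--\alpha^2(\kappa-1)=m_-+\tfrac{\alpha^2}2\sim\tfrac s2$ vanishes at the same rate as $1-\alpha$ and must not be discarded, which gives
\begin{equation*}
\frac{\delta_{N+1}}{(1-\alpha)m_+^N}=\frac{c_+\big(m_++\tfrac{\alpha^2}2\big)}{1-\alpha}+\frac{c_-\big(m_-+\tfrac{\alpha^2}2\big)}{1-\alpha}\Big(\frac{m_-}{m_+}\Big)^N\longrightarrow 2\pm e^{-6\rho T},
\end{equation*}
the sign being $+$ for even $N$ and $-$ for odd $N$, so $[m_+]^N\to\tfrac1{2+e^{-6\rho T}}$ along even $N$ and $[m_+]^N\to\tfrac1{2-e^{-6\rho T}}$ along odd $N$; the remaining entries of (b') follow as before via $[m_-]^N=[m_+]^N(m_-/m_+)^N$, etc., and (a') at $t=T$, after rewriting fractions such as $\tfrac{e^{-2\rho T}}{2+e^{-6\rho T}}=\tfrac{e^{4\rho T}}{2e^{6\rho T}+1}$. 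Finally (c') is a direct first-order expansion: $m_++\kappa-1=m_+-\tfrac12\sim\tfrac s2$ while $m_++\alpha(\kappa-1)=m_+-\tfrac\alpha2\sim\tfrac{3s}4$ give $\tfrac23$, analogously for the second ratio, and $\tfrac{\kappa+\alpha(\kappa-1)}{1-\alpha^2}=\tfrac{(1-\alpha)/2}{(1-\alpha)(1+\alpha)}=\tfrac1{2(1+\alpha)}\to\tfrac14$.

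The individual computations are elementary, so the main obstacle is organizational: one must track the exact orders of vanishing (distinguishing $O(s)$ from $O(s^2)$) of the many quantities entering the indeterminate forms, and in the $\kappa=1/2$ regime recognize that $m_+$ and $m_-$ have equal limiting modulus, so that neither term of $\delta_{N+1}$ may be dropped and $(m_-/m_+)^N$ genuinely oscillates. The one delicate point beyond bookkeeping is the asymptotics of $\delta_{N+1}$: because $B$'s last row deviates from the bulk, $\delta_{N+1}\neq c_+m_+^{N+1}+c_-m_-^{N+1}$, and only the correct modified recursion produces the representation above, and hence the factor $1/(4\kappa)$ for $\kappa>1/2$ and the parity-dependent limits for $\kappa=1/2$.
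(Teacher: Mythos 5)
Your proof is correct and follows essentially the same route as the paper's: your first-order expansions in $s=1-\alpha^2$ are just the explicit form of the L'H\^opital computations the paper invokes, and your closed form $\delta_{N+1}=c_+m_+^N\bigl(m_+-\alpha^2(\kappa-1)\bigr)+c_-m_-^N\bigl(m_--\alpha^2(\kappa-1)\bigr)$ is exactly the representation of $(1-\alpha)/\delta_{N+1}$ that the paper uses (simplified via $m_+m_-=\alpha^2\kappa(\kappa-1)$), with the same dominance of $m_+$ for $\kappa>1/2$ and the same oscillating factor $(m_-/m_+)^N\to(-1)^Ne^{-6\rho T}$ driving the parity split at $\kappa=1/2$.
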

\begin{proof}
	(a) and (b) are obvious, (c)--(e) follow by applying L'H\^opital's rule.
	The first statement in (f) follows from the fact that $\kappa > 1/2$.
	To prove the second, write 
	\begin{align*}
		\left(m_+/\kappa\right)^N = \exp\left(N\log\left(m_+/\kappa\right)\right)
	\end{align*}	
	and apply L'H\^opital's rule.
	The third statement follows directly, since
	\begin{align*}
		\left(\left(\kappa-1\right)/\kappa\right)^{n_t}=\left(m_+/\kappa\right)^{n_t}\left(\left(\kappa-1\right)/m_+\right)^{n_t}\to 0.
	\end{align*} 
	The fourth and fifth statements can be proved in a similar fashion.
	With regard to (g) and (h), recall that
	\begin{equation*} 
	\begin{aligned}
		&\hphantom{{}=}\frac{1-\alpha}{\delta_{N+1}}\\
		&=\left(\frac{c_+\left(1-\alpha^2+\kappa-\frac{\alpha^2\kappa\left(\kappa-1\right)}{m_+}\right)}{1-\alpha}\left(m_+\right)^N+ \frac{c_-\left(\left(1-\alpha^2+\kappa\right)m_--\alpha^2\kappa\left(\kappa-1\right)\right)}{m_-\left(1-\alpha\right)}\left(m_-\right)^N\right)^{-1}\hspace{-.4cm}.
	\end{aligned}
	\end{equation*}
	Applying L'H\^opital's rule: 
	\begin{align*}
		\frac{c_+\left(1-\alpha^2+\kappa-\frac{\alpha^2\kappa\left(\kappa-1\right)}{m_+}\right)}{1-\alpha} &\to 4\kappa,\hspace{1cm}\text{ and }\\
		\frac{c_-\left(\left(1-\alpha^2+\kappa\right)m_--\alpha^2\kappa\left(\kappa-1\right)\right)}{m_-\left(1-\alpha\right)} &\to -2\left(\kappa-1\right)^2.
	\end{align*}
	It follows from (vi) that $\left(m_-/m_+\right)^N \to 0$ and, using L'H\^opital's rule again, 
	\begin{align*}
		\frac{\left(m_-/m_+\right)^N}{1-\alpha^2}\to 0, &&\frac{\left(m_-/\left(\kappa-1\right)\right)^N}{1-\alpha^2} \to 0, &&\text{ and }  &&\frac{\left(\left(\kappa-1\right)/\kappa\right)^N}{1-\alpha^2} \to 0.&
	\end{align*}
	Plugging in and taking limits yields the results.
	
	If $\kappa = 1/2$, observe that
	\begin{align*}
		m_- = \frac{3}{4} \left(1-\alpha^2-\sqrt{\left(1-\alpha^2\right)^2+4/9\alpha^2}\right) < 0 < m_+.
	\end{align*}
	With this in mind, statements (f') and (g') can be proved in the same way as statements (f) and (g).
	(h') is another application of L'H\^opital's rule.
\end{proof}

\begin{proof}[Proof of Theorem~\ref{limit strategies kappa thm} {\rm (a)} for $\kappa\neq1$]
	Let $\kappa>1/2$ and $\kappa \neq 1$.
	The limits of \eqref{ae} and \eqref{af} are easily calculated using Lemma~\ref{lemma4}.
	In total, we see that, as $N\uparrow\infty$,
	\begin{align}\label{sum nuit kappa>1/2, kappa neq 1 eq}
		\sum_{i=1}^{n_t} \nu_i &\lra \frac{e^{-3\rho T}\left(6e^{3\rho T}\left(\rho t+1\right)+4e^{3\rho t}-1\right)}{18},
	\end{align}
	for $t \in \left(0,T\right)$, and
	\begin{align}\label{sum nui kappa>1/2, kappa neq 1 eq}
		\sum_{i=1}^{N+1} \nu_i &\lra \frac{e^{-3\rho T}\left(2e^{3\rho T}\left(3\rho T+5\right)-1\right)}{18}.
	\end{align}
	Note that \eqref{sum nui kappa>1/2, kappa neq 1 eq} coincides with the value of the right-hand side of \eqref{sum nuit kappa>1/2, kappa neq 1 eq} for $t=T$ and that  \eqref{sum nuit kappa>1/2, kappa neq 1 eq} coincides with the limit from \eqref{(partial) sum nui kappa=1 eq}, which was obtained for $\kappa=1$. Plugging this result into the definition of $V^{(N)}$ yields the result.	 
\end{proof}

\begin{proof}[Proof of Theorem~\ref{limit strategies kappa thm} {\rm (b)}]	For $\kappa=1/2$, the limits of \eqref{ae} and \eqref{af} can be obtained using Lemma~\ref{lemma4}.
	We find that $\sum_{i=1}^{n_t}\nu_i$ has two cluster points for every $t \in \left(0,T\right]$. One is approached if $n_t$ is even, the other one if $n_t$ is odd.
	The same is true for $\sum_{i=1}^{N+1}\nu_i$, depending on whether $N$ is even or odd.
	For future reference, we now state the limits of $\mathbf{1}^\top {\bm\nu}$ as $N\uparrow\infty$;
	\begin{equation}\label{limits sum nui kappa=1/2 eq}
	\begin{split}
	\lim_{\substack{N\uparrow\infty\\
	N\,\text{\rm even}}}	\mathbf{1}^\top {\bm\nu} &=\frac{2e^{6\rho T}\left(3\rho T+5\right)+e^{3\rho T}+3\rho T+7}{18e^{6\rho T}+9}, \\
	\lim_{\substack{N\uparrow\infty\\
	N\,\text{\rm odd}}}	\mathbf{1}^\top {\bm\nu} &= \frac{2e^{6\rho T}\left(3\rho T+5\right)-3e^{3\rho T}-3\rho T-7}{18e^{6\rho T}-9}.
	\end{split}	 
	\end{equation}
\end{proof}

\subsection{Proof of Theorem~\ref{costs asymptotics thm} (a)}

We  start preparing the proof with the following simple lemma, which holds for all $\kappa\ge1/2$. 

\begin{lemma}
	We have	\begin{align}\label{am}
		\mathbb{E}\left[\mathscr{C}_\mathbb{T}\left(\bm\xi\mid\bm\eta\right)\right] &= \frac{1}{8}
		\Bigg(
			\frac{\left(x+y\right)^2}{\mathbf{1}^\top {\bm\nu}}+\frac{\left(x^2-y^2\right)\left(\mathbf{1}^\top {\bm\nu}+\mathbf{1}^\top {\bm\omega}\right)}{\left(\mathbf{1}^\top {\bm\nu}\right)\left(\mathbf{1}^\top {\bm\omega}\right)}+\frac{\left(x-y\right)^2}{\mathbf{1}^\top {\bm\omega}}
			\\
			&\hspace{1cm}{}+\left(\frac{x+y}{\mathbf{1}^\top {\bm\nu}}\right)^2{\bm\nu}^\top \tilde{\Gamma}\,{\bm\nu} + \frac{x^2-y^2}{\left(\mathbf{1}^\top {\bm\nu}\right)\left(\mathbf{1}^\top {\bm\omega}\right)}{\bm\omega}^\top \left(\tilde{\Gamma}-\tilde{\Gamma}^\top \right){\bm\nu}-\left(\frac{x-y}{\mathbf{1}^\top {\bm\omega}}\right)^2 {\bm\omega}^\top \tilde{\Gamma}\,{\bm\omega}\nonumber
		\Bigg).
	\end{align}
\end{lemma}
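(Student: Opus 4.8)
In the statement $\bm\xi=\bm\xi^*$ and $\bm\eta=\bm\eta^*$ denote the (deterministic) equilibrium strategies of Theorem~\ref{th1}, and the plan is to first reduce the expected cost to a pure quadratic form in $\bm\xi,\bm\eta$ and then to use the defining equations of $\bm\nu,\bm\omega$ to bring it into the asserted shape.

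\textbf{Step 1 (quadratic form).} First I would argue, as in~\cite{SchiedZhangHotPotato}, that since $S^0$ is a martingale with $\mathbb{P}$-trivial $\mathscr{F}_0$ and the equilibrium strategies are deterministic, $\mathbb{E}[S^0_{t_k}\xi_k]=S^0_0\xi_k$, whence $\mathbb{E}\big[-\sum_kS^{\bm\xi,\bm\eta}_{t_k}\xi_k\big]=-xS^0_0+\sum_{k>j}e^{-\rho(k-j)T/N}(\xi_j+\eta_j)\xi_k$, the term $-xS^0_0$ cancelling the leading $xS^0_0$ in~\eqref{costs xi def eq}; moreover $\mathbb{E}[\eps_kG(0)\xi_k\eta_k]=\tfrac12\xi_k\eta_k$ by independence. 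Reading off the entries of $\tilde\Gamma$ and using $G(0)=1$, this yields
\begin{equation*}
\mathbb{E}[\mathscr{C}_{\mathbb{T}}(\bm\xi\mid\bm\eta)]=\bm\xi^\top(\tilde\Gamma+\theta\,\Id)\bm\xi+\bm\xi^\top\tilde\Gamma\bm\eta=\tfrac12\bm\xi^\top(\Gamma+2\theta\,\Id)\bm\xi+\bm\xi^\top\tilde\Gamma\bm\eta,
\end{equation*}
where the second identity uses $\bm\xi^\top\tilde\Gamma\bm\xi=\tfrac12\bm\xi^\top(\tilde\Gamma+\tilde\Gamma^\top)\bm\xi=\tfrac12\bm\xi^\top\Gamma\bm\xi$.

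\textbf{Step 2 (first-order identity).} By Theorem~\ref{th1} and~\eqref{v and w}, $\bm\xi=a\bm v+b\bm w$, $\bm\eta=a\bm v-b\bm w$ with $a=\tfrac12(x+y)$, $b=\tfrac12(x-y)$, $\bm v=\bm\nu/(\bm1^\top\bm\nu)$, $\bm w=\bm\omega/(\bm1^\top\bm\omega)$. Since $(\Gamma+\tilde\Gamma+2\theta\,\Id)\bm\nu=\bm1$ and $(\Gamma-\tilde\Gamma+2\theta\,\Id)\bm\omega=\bm1$, I would then compute
\begin{equation*}
(\Gamma+2\theta\,\Id)\bm\xi+\tilde\Gamma\bm\eta=a(\Gamma+\tilde\Gamma+2\theta\,\Id)\bm v+b(\Gamma-\tilde\Gamma+2\theta\,\Id)\bm w=\mu\,\bm1,\qquad\mu:=\frac{a}{\bm1^\top\bm\nu}+\frac{b}{\bm1^\top\bm\omega}.
\end{equation*}
Multiplying by $\bm\xi^\top$ and using $\bm1^\top\bm\xi=x$ gives $\bm\xi^\top(\Gamma+2\theta\,\Id)\bm\xi=\mu x-\bm\xi^\top\tilde\Gamma\bm\eta$, so Step 1 collapses to $\mathbb{E}[\mathscr{C}_{\mathbb{T}}(\bm\xi\mid\bm\eta)]=\tfrac12\mu x+\tfrac12\bm\xi^\top\tilde\Gamma\bm\eta$.

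\textbf{Step 3 (expansion).} The elementary identities $x(x\pm y)=\tfrac12\big((x\pm y)^2+(x^2-y^2)\big)$ turn $\tfrac12\mu x$ into $\tfrac18$ times the first three terms of~\eqref{am}. Expanding $\bm\xi^\top\tilde\Gamma\bm\eta=a^2\bm v^\top\tilde\Gamma\bm v+ab\,\bm w^\top(\tilde\Gamma-\tilde\Gamma^\top)\bm v-b^2\bm w^\top\tilde\Gamma\bm w$ (using that the scalar $\bm v^\top\tilde\Gamma^\top\bm w$ equals $\bm w^\top\tilde\Gamma\bm v$), then inserting $\bm v=\bm\nu/(\bm1^\top\bm\nu)$, $\bm w=\bm\omega/(\bm1^\top\bm\omega)$ and $a^2=\tfrac14(x+y)^2$, $ab=\tfrac14(x^2-y^2)$, $b^2=\tfrac14(x-y)^2$, turns $\tfrac12\bm\xi^\top\tilde\Gamma\bm\eta$ into $\tfrac18$ times the last three terms of~\eqref{am}. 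Summing gives~\eqref{am}.

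The computation is routine and there is no serious obstacle; the only steps that need care are the reduction in Step 1 (handling the random $\eps_k$, the martingale property, and the cancellation of $xS^0_0$) and, in Step 3, keeping track of the asymmetry of $\tilde\Gamma$: only its symmetric part feeds the pure quadratic terms, while the antisymmetric part $\tilde\Gamma-\tilde\Gamma^\top$ is exactly what survives in the mixed term $\bm\omega^\top(\tilde\Gamma-\tilde\Gamma^\top)\bm\nu$.
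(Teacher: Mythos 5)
Your proposal is correct and follows essentially the same route as the paper: both reduce the expected cost to the quadratic form $\tfrac12\bm\xi^\top(\Gamma+2\theta\Id)\bm\xi+\bm\xi^\top\tilde\Gamma\bm\eta$, use the defining relations $(\Gamma+\tilde\Gamma+2\theta\Id)\bm\nu=(\Gamma-\tilde\Gamma+2\theta\Id)\bm\omega=\bm1$ to produce the first three terms of \eqref{am}, and expand the remaining $\tfrac12\bm\xi^\top\tilde\Gamma\bm\eta$ into the last three. Your Step 2 (contracting the first-order identity $(\Gamma+2\theta\Id)\bm\xi+\tilde\Gamma\bm\eta=\mu\bm1$ against $\bm\xi^\top$ and using $\bm1^\top\bm\xi=x$) is only a cosmetic repackaging of the paper's symmetric expansion of the same quadratic form.
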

\begin{proof}
	We have
	\begin{align*}
		\mathbb{E}\left[\mathscr{C}_\mathbb{T}\left(\bm\xi\mid\bm\eta\right)\right] &= \frac{1}{2} \bm\xi^\top \left({\Gamma}+2\theta\Id\right)\bm\xi+\bm\xi^\top \tilde{\Gamma}\,\bm\eta\\
		&=\frac{1}{2}
		\Bigg(
			\left(\frac{x+y}{2\left(\mathbf{1}^\top {\bm\nu}\right)}\right)^2 \left({\bm\nu}^\top \left({\Gamma}+\tilde{\Gamma}+2\theta\Id\right){\bm\nu}\right)\\
			&\hspace{1cm}{}+ \frac{\left(x+y\right)\left(x-y\right)}{4\left(\mathbf{1}^\top {\bm\nu}\right)\left(\mathbf{1}^\top {\bm\omega}\right)}\left({\bm\nu}^\top \left({\Gamma}-\tilde{\Gamma}+2\theta\Id\right){\bm\omega}+{\bm\omega}^\top \left({\Gamma}+\tilde{\Gamma}+2\theta\Id\right){\bm\nu}\right)\nonumber\\
			&\hspace{1cm}{}+ \left(\frac{x-y}{2\left(\mathbf{1}^\top {\bm\omega}\right)}\right)^2 \left({\bm\omega}^\top \left({\Gamma}-\tilde{\Gamma}+2\theta\Id\right){\bm\omega}\right)\nonumber
		\Bigg)
		+\frac{1}{2}\bm\xi^\top \tilde{\Gamma}\,\bm\eta.
	\end{align*}	
	By definition, $(\left.{\Gamma}+\tilde{\Gamma}+2\theta \Id\right.){\bm\nu} = (\left.{\Gamma}-\tilde{\Gamma}+2\theta\Id\right.){\bm\omega} = \mathbf{1}$.
	Since also ${\bm\nu}^\top \mathbf{1} = \mathbf{1}^\top {\bm\nu}$, ${\bm\omega}^\top \mathbf{1} = \mathbf{1}^\top {\bm\omega}$, and ${\bm\nu}^\top \tilde{\Gamma}\,{\bm\omega} = {\bm\omega}^\top \tilde{\Gamma}^\top {\bm\nu}$, representation \eqref{am} follows.	 
\end{proof}

\begin{lemma}\label{lemma7}
	For $\kappa > 1/2$, as $N\uparrow\infty$,
		\begin{align*}
		{\bm\nu}^\top \tilde{\Gamma}\,{\bm\nu} &\lra (\left.-e^{-6\rho T}-8e^{-3\rho T}+24\rho T+36\right.)/216,\\
		{\bm\omega}^\top (\left.\tilde{\Gamma}-\tilde{\Gamma}^\top \right.){\bm\nu} &\lra (\left.-e^{-3\rho T}+4\right.)/6,\text{ and}\\
		{\bm\omega}^\top \tilde{\Gamma}\,{\bm\omega} &\lra \left(2\rho T+1\right)/2.
	\end{align*}
\end{lemma}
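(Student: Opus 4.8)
The plan is to compute each of the three quadratic forms by exploiting the explicit closed-form expressions for $\bm\nu$ and $\bm\omega$ already established in Theorem~\ref{omega and nu closed form thm}, together with the asymptotic dictionary collected in Lemma~\ref{lemma4}. The key structural observation is that $\tilde\Gamma$ has entries $(\tilde\Gamma)_{ij}=\alpha^{i-j}$ for $i>j$, $1/2$ on the diagonal, and $0$ above; hence for any vectors $\bm a,\bm b$ we can write $\bm a^\top\tilde\Gamma\,\bm b=\frac12\sum_i a_ib_i+\sum_{i>j}\alpha^{i-j}a_ib_j$. The strategy is therefore to (i) substitute the formulas for $\omega_i$ (and later $\nu_i$) into these sums, (ii) interchange the order of summation so the geometric structure in $\alpha$ becomes visible, (iii) evaluate the resulting finite geometric sums in closed form, and (iv) pass to the limit $N\uparrow\infty$ using parts (a)--(h) of Lemma~\ref{lemma4}, in particular $\alpha\to1$, $(1-\alpha)n_t\to\rho t$, $\alpha^{n_t}\to e^{-\rho t}$, and the limits $[m_\pm]^N$, $[\kappa]^N$.

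I would start with $\bm\omega^\top\tilde\Gamma\,\bm\omega$, since \eqref{omi formula} gives $\omega_i$ in a very simple form: $\omega_i=A+B\beta^{N+1-i}$ with $A=\frac{(1-\alpha)\kappa}{\kappa(\kappa-\alpha(\kappa-1))}$, $B=\frac{\alpha}{\kappa(\kappa-\alpha(\kappa-1))}$, and $\beta=\frac{\alpha(\kappa-1)}{\kappa}$. For $\kappa>1/2$ we have $|\beta|<1$, so all $\beta$-dependent contributions are lower order, and to leading order $\bm\omega\approx A\mathbf 1$ with $A\sim(1-\alpha)$. Using $\mathbf 1^\top\tilde\Gamma\,\mathbf 1=\frac{N+1}{2}+\sum_{k=1}^{N}(N+1-k)\alpha^k$ and the fact that $(1-\alpha)(N+1)\to\rho T$, one computes $A^2\cdot\mathbf 1^\top\tilde\Gamma\,\mathbf 1\to\frac{2\rho T+1}{2}$ after recalling from \eqref{omega.1 asympt for kappa>1/2 eq} that $\mathbf 1^\top\bm\omega\to\rho T+1$ and that $A=(\mathbf 1^\top\bm\omega)/(N+1)$ up to vanishing terms; careful bookkeeping of the $\beta$-terms (which contribute $O((1-\alpha))$ after summation against the $O(N)$ range) confirms they vanish. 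The mixed term $\bm\omega^\top(\tilde\Gamma-\tilde\Gamma^\top)\bm\nu$ is antisymmetric, so the diagonal of $\tilde\Gamma$ drops out entirely and one is left with $\sum_{i>j}\alpha^{i-j}(\omega_i\nu_j-\omega_j\nu_i)$; here I would again replace $\bm\omega$ by its leading part $A\mathbf 1$ and use the limit \eqref{sum nuit kappa>1/2, kappa neq 1 eq} (equivalently \eqref{(partial) sum nui kappa=1 eq}) for the partial sums of $\bm\nu$, reorganizing $\sum_{i>j}\alpha^{i-j}\nu_j=\sum_j\nu_j\sum_{i>j}\alpha^{i-j}$ and its transpose so that the known limiting profile of $\sum_{k\le n_t}\nu_k$ enters.

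The genuinely laborious case is $\bm\nu^\top\tilde\Gamma\,\bm\nu$, since the $\nu_i$ from Theorem~\ref{omega and nu closed form thm} are considerably more intricate (sums over $\delta$'s and $\phi$'s). Rather than expand everything, I would use the identity $\bm\nu^\top\tilde\Gamma\,\bm\nu=\frac12\bm\nu^\top\Gamma\,\bm\nu=\frac12\bm\nu^\top(\mathbf 1-(\tilde\Gamma+2\theta\Id)\bm\nu)=\frac12\mathbf 1^\top\bm\nu-\frac12\bm\nu^\top\tilde\Gamma\,\bm\nu-\theta\bm\nu^\top\bm\nu$, which rearranges to $\bm\nu^\top\tilde\Gamma\,\bm\nu=\frac14\mathbf 1^\top\bm\nu-\frac{\theta}{2}\bm\nu^\top\bm\nu$; wait — this uses $(\Gamma+\tilde\Gamma+2\theta\Id)\bm\nu=\mathbf 1$ and $\Gamma=\tilde\Gamma+\tilde\Gamma^\top$, giving $\bm\nu^\top\tilde\Gamma^\top\bm\nu=\bm\nu^\top\tilde\Gamma\bm\nu$, so indeed $2\bm\nu^\top\tilde\Gamma\bm\nu+2\theta\bm\nu^\top\bm\nu=\mathbf 1^\top\bm\nu$, i.e. $\bm\nu^\top\tilde\Gamma\,\bm\nu=\tfrac12\mathbf 1^\top\bm\nu-\theta\bm\nu^\top\bm\nu$. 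Thus the whole computation reduces to the limit of $\mathbf 1^\top\bm\nu$, already known from \eqref{sum nui kappa>1/2, kappa neq 1 eq} to be $\frac{e^{-3\rho T}(2e^{3\rho T}(3\rho T+5)-1)}{18}$, plus the limit of $\theta\,\bm\nu^\top\bm\nu$. Since $\bm\xi^*=\frac12(x+y)\bm v+\cdots$ with $\bm v=\bm\nu/(\mathbf 1^\top\bm\nu)$, the quantity $\theta\bm\nu^\top\bm\nu$ is exactly (a piece of) the tax-revenue term analyzed in Corollary~\ref{tax corollary}; its limit can be read off there (or computed directly from Theorem~\ref{omega and nu closed form thm}). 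Substituting both limits and simplifying over the common denominator $216$ yields $(-e^{-6\rho T}-8e^{-3\rho T}+24\rho T+36)/216$, completing the proof. The main obstacle is thus not $\bm\nu^\top\tilde\Gamma\bm\nu$ itself but ensuring the $\theta\bm\nu^\top\bm\nu$ limit is available independently of Lemma~\ref{lemma7}; if circularity with Corollary~\ref{tax corollary} is a concern, one computes $\bm\nu^\top\bm\nu=\sum_i\nu_i^2$ directly from the three-case formula in Theorem~\ref{omega and nu closed form thm}, splitting off the dominant $[m_+]^N$-contributions and invoking Lemma~\ref{lemma4}(g) to see that the $m_-$-, $\kappa$-, and $(\kappa-1)$-terms are lower order.
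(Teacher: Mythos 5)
Your reduction of the hardest limit, $\bm\nu^\top\tilde\Gamma\,\bm\nu$, via the defining relation $(\Gamma+\tilde\Gamma+2\theta\Id)\bm\nu=\bm1$ is a genuinely different and more economical route than the paper's, which computes the double sum $\sum_{i,j}$ by brute force from the closed forms of Theorem~\ref{omega and nu closed form thm}. However, you state the resulting identity incorrectly: from $\bm\nu^\top\Gamma\bm\nu=2\bm\nu^\top\tilde\Gamma\bm\nu$ one gets $\bm1^\top\bm\nu=3\bm\nu^\top\tilde\Gamma\bm\nu+2\theta\bm\nu^\top\bm\nu$ (which you write correctly at first), hence $\bm\nu^\top\tilde\Gamma\bm\nu=\tfrac13\bm1^\top\bm\nu-\tfrac{2\theta}{3}\bm\nu^\top\bm\nu$, not $\tfrac12\bm1^\top\bm\nu-\theta\bm\nu^\top\bm\nu$; your version produces $3/2$ times the asserted limit. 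With the correct coefficients and the limits $\bm1^\top\bm\nu\to(6\rho T+10-e^{-3\rho T})/18$ and $\theta\bm\nu^\top\bm\nu\to e^{-6\rho T}(1+2e^{3\rho T})^2/144$ the claimed value $(-e^{-6\rho T}-8e^{-3\rho T}+24\rho T+36)/216$ does come out. The circularity you flag is real, not hypothetical: the paper's Lemma~\ref{tax lemma} obtains the limit of $\theta\bm\nu^\top\bm\nu$ \emph{from} Lemma~\ref{lemma7} via this very identity, so you must compute $\bm\nu^\top\bm\nu=\sum_i\nu_i^2$ independently from Theorem~\ref{omega and nu closed form thm}; that is a single geometric-type sum and still easier than the paper's double sum, but it is only sketched in your proposal.

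The treatment of the two $\bm\omega$-limits contains a genuine error. Writing $\omega_i=A+B\beta^{N+1-i}$ with $\beta=\alpha(\kappa-1)/\kappa$, the terms $B\beta^{N+1-i}$ are \emph{not} negligible: they are concentrated near $i=N+1$ with total mass $\sum_iB\beta^{N+1-i}\to1$ (this is exactly the terminal jump $\Delta W_T$, the ``$+1$'' in $\bm1^\top\bm\omega\to\rho T+1$), and they contribute at order $O(1)$ to the quadratic forms. Indeed, $A\sim 1-\alpha$ and a direct computation gives $A^2\,\bm1^\top\tilde\Gamma\,\bm1\to\rho\int_0^T\!\!\int_0^te^{-\rho(t-s)}ds\,dt+o(1)\cdot N=\rho T-1+e^{-\rho T}$, which is not $(2\rho T+1)/2$; the missing $\tfrac32-e^{-\rho T}$ comes precisely from the cross terms $2(\omega^{(A)})^\top\tilde\Gamma\,\omega^{(B)}$ and the square $(\omega^{(B)})^\top\tilde\Gamma\,\omega^{(B)}$ that you discard. (Compare the paper's exact expression \eqref{fd}, whose first summand tends to $1/2$ and carries the $B$-contributions.) The same problem affects the mixed term: replacing $\bm\omega$ by $A\bm1$ drops the contribution $\omega_{N+1}\sum_i\nu_i\alpha^{N+1-i}=O(1)$ from the last row, so the antisymmetrized sum you propose would not converge to $(-e^{-3\rho T}+4)/6$. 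To repair this, you must carry the $\beta$-part through all three sums, at which point the computation is essentially the paper's.
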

\begin{proof}
	First let $\kappa =1$.
	Then
	\begin{align*}
		{\bm\nu}^\top \tilde{\Gamma}\,{\bm\nu} &= 
		\frac{\left(\nu_1\right)^2}{2}+\frac{1}{2}\sum_{i=2}^{N+1}\left(\nu_i\right)^2+ \nu_1 \sum_{i=2}^{N+1}\nu_i\alpha^{i-1} + \sum_{i=3}^{N+1} \sum_{j=2}^{i-1}\nu_i\nu_j\alpha^{i-j}\\
		&= \frac{1}{\left(2+\alpha\right)^2} \Bigg(\frac{\left(1-\alpha^2\right)N}{2}+\frac{-\alpha^4+2\alpha^2+4\alpha+4}{2\left(4-\alpha^2\right)}\nonumber\\
		&\hspace{2.3cm}-\frac{\alpha^2\left(\alpha+1\right)}{2\left(\alpha+2\right)}\left(\frac{\alpha}{2-\alpha^2}\right)^N-\frac{\alpha^4}{8\left(4-\alpha^2\right)}\left(\frac{\alpha}{2-\alpha^2}\right)^{2N}\Bigg)\nonumber\\
		&\to \frac{-e^{-6\rho T}-8e^{-3\rho T}+24\rho T+36}{216},\nonumber
	\end{align*}
	as well as
	\begin{align*}
		{\bm\omega}^\top \left(\tilde{\Gamma}-\tilde{\Gamma}^\top \right){\bm\nu} &= 
		\nu_1 \sum_{i=2}^{N+1}  \omega_i \alpha^{i-1} +  \omega_{N+1} \sum_{i=2}^N \nu_i \alpha^{N+1-i} + \sum_{i=1}^N  \omega_i\left(\sum_{j=2}^{i-1} \nu_j \alpha^{i-j} - \sum_{j=i+1}^{N+1} \nu_j \alpha^{j-i}\right)\\
		&= - \frac{\alpha^2}{(2-\alpha^2)(2+\alpha)} \left( \frac{2-\alpha^2}{2} \left( \frac{\alpha}{2-\alpha^2}\right)^N + \alpha^2 - 3 \right) \nonumber\\
	&\to \frac{-3e^{-3\rho T}+4}{6},\nonumber
	\end{align*}
	and
	\begin{align*}
		{\bm\omega}^\top \tilde{\Gamma}\, {\bm\omega} &= \frac{1}{2} \sum_{i=1}^N \left(1-\alpha\right)^2 + \frac{1}{2} + \sum_{i=2}^N \sum_{j=1}^{i-1} \left(1-\alpha\right)^2 \alpha^{i-j} + \sum_{j=1}^N \left(1-\alpha\right) \alpha^{N+1-j}\\
	&=\frac{N\left(1-\alpha^2\right)+1}{2}\\
	&\to \frac{2\rho T+1}{2}.
	\end{align*}
	
	Now let $\kappa \ge 1/2$ and $\kappa \neq 1$. Note that we explicitly include the case $\kappa=1/2$, because partial results we will obtain in the following computations will also be needed to treat this case.

	The following calculations are standard but tedious.
	We need to find the limits of  ${\bm\nu}^\top \tilde{\Gamma}{\bm\nu}$,  of ${\bm\omega}^\top (\left.\tilde{\Gamma}-\tilde{\Gamma}^\top \right.){\bm\nu}$, and of ${\bm\omega}^\top \tilde{\Gamma}\,{\bm\omega}$.	
	We first compute $\tilde{\Gamma}{\bm\nu}$.
	Define $C_1 := \alpha\left(1+\alpha\right)/\left(1-\alpha\left(\kappa-2\right)+\kappa\right)$ as above and
\begin{align*}
	C_2 &:= \sums c_\sigma d_\sigma \left(\frac{\alpha\left(\kappa-1\right)}{m_\sigma-\alpha\left(\kappa-1\right)}+\frac{m_\sigma}{m_\sigma-\alpha\kappa}\right)  \left[m_\sigma\right]^N,\\
	C_3 &:= -C_2 + \sums d_\sigma \left(\frac{m_\sigma-\alpha^2\kappa}{m_\sigma-\alpha\kappa}+\frac{2C_1\left(\kappa-1\right)}{m_\sigma-\left(\kappa-1\right)}\right)   \left[m_\sigma\right]^N.
\end{align*}
	Also for $\sigma \in \left\{+,-\right\}$, let $\overline{\sigma} = -$ if $\sigma = +$ and $\overline{\sigma} = +$ if $\sigma = -$.

	\begin{align*}
		(\left.\tilde{\Gamma}{\bm\nu}\right.)_1 &= \frac{1}{2}\sums \frac{d_\sigma\left(m_\sigma-\alpha^2\kappa\right)}{m_\sigma-\alpha\kappa}   \left[m_\sigma\right]^N + \frac{C_1 \alpha^N }{2}   \left[\kappa\right]^N,\\
		(\left.\tilde{\Gamma}{\bm\nu}\right.)_2 &= \sums d_\sigma\left(\frac{\alpha\left(m_\sigma-\alpha^2\kappa\right)}{m_\sigma-\alpha\kappa}+\frac{C_1\alpha\left(\kappa-1\right) }{m_\sigma}\right)  \left[m_\sigma\right]^N \\
		&\hspace{1cm}{}+ \frac{C_2\left(1-\alpha\right)}{2} + \frac{C_1\left(1+2\alpha^2\left(\kappa-1\right)+\kappa\right)\alpha^N}{2\alpha\kappa}  \left[\kappa\right]^N,\nonumber
	\end{align*}
	and for $i\in \left\{3,\dots,N\right\}$ we have
	\begin{align*}
		(\left.\tilde{\Gamma}{\bm\nu}\right.)_i &= \frac{C_2\left(1+\alpha\right)}{2}+\frac{C_1}{\alpha\left(\kappa-1\right)} \sums \frac{d_\sigma m_\sigma\left(1-\kappa-m_\sigma\right)   \left[m_\sigma\right]^N}{1-\kappa+m_\sigma}\left(\frac{\left(\kappa-1\right)\alpha}{m_\sigma}\right)^i\\
		&\hspace{1cm}{}+ \frac{C_1\alpha^{N+1}\kappa  \left[\kappa\right]^N}{2} \sums \frac{c_\sigma\left(m_\sigma+\alpha^2\kappa\right)}{m_\sigma\left(m_\sigma-\alpha^2\kappa\right)}\left(\frac{m_\sigma}{\alpha\kappa}\right)^i+C_3\alpha^{i-1}.\nonumber
	\end{align*}
	Also,
	\begin{align*}
		(\left.\tilde{\Gamma}{\bm\nu}\right.)_{N+1} &= \sums c_\sigma\left(\frac{d_\sigma m_\sigma \alpha }{m_\sigma-\alpha\kappa}+\frac{m_\sigma+\left(2d_\sigma-1\right)\alpha^2\left(\kappa-1\right)}{2\left(m_\sigma-\alpha\left(\kappa-1\right)\right)}+\frac{C_1\alpha^2\kappa}{m_\sigma-\alpha^2\kappa}\right)  \left[m_\sigma\right]^N\\
		&\hspace{.5cm} {}+C_3\alpha^N -C_1\alpha^N\kappa  \left[\kappa-1\right]^N.\nonumber
	\end{align*}
	Next we compute $\nu_i (\tilde{\Gamma}\,{\bm\nu})_{i}$ for $i \in \left\{3,\dots,N\right\}$.
	It holds that
	\begin{align*}
		\nu_i (\tilde{\Gamma}\,{\bm\nu})_{i} = D_1^i + D_2^i + D_3^i + D_4^i
	\end{align*}
	with $D_1^i := C_2\left(1+\alpha\right) \nu_i /2$,
	\begin{align*}
		D_2^i := C_2 \left(1-\alpha\right) \Bigg(C_3\alpha^{i-1} + \frac{C_1}{\alpha\left(\kappa-1\right)} \sums \frac{d_\sigma m_\sigma\left(1-\kappa-m_\sigma\right)  \left[m_\sigma\right]^N}{1-\kappa+m_\sigma}\left(\frac{\alpha\left(\kappa-1\right)}{m_\sigma}\right)^i\\
		\hspace{1cm}{}+ \frac{C_1 \alpha^{N+1}\kappa \left[\kappa\right]^N}{2} \sums \frac{c_\sigma\left(m_\sigma+\alpha^2\kappa\right)}{m_\sigma\left(m_\sigma-\alpha^2\kappa\right)}\left(\frac{m_\sigma}{\alpha\kappa}\right)^i\Bigg),\nonumber
	\end{align*}
	\begin{align*}
		D_3^i := \frac{C_1C_3}{\alpha} \left(2\sums\frac{d_\sigma m_\sigma   \left[m_\sigma\right]^N}{\alpha\left(\kappa-1\right)}\left(\frac{\alpha^2\left(\kappa-1\right)}{m_\sigma}\right)^i + \sums \frac{c_\sigma \alpha^{N+1}\kappa    \left[\kappa\right]^N}{m_\sigma}\left(\frac{m_\sigma}{\kappa}\right)^i\right),
	\end{align*}
	and
	\begin{align*}
		D_4^i &:= \left(C_1\right)^2 \Bigg( \frac{2}{\left(\alpha\left(\kappa-1\right)\right)^2}\Bigg(\sums \left(d_\sigma m_\sigma   \left[m_\sigma\right]^N\right)^2 \frac{1-\kappa-m_\sigma}{1-\kappa+m_\sigma}\left(\frac{\alpha\left(\kappa-1\right)}{m_\sigma}\right)^{2i}\\
		&\hspace{3cm}{}+ d_+d_-m_+m_-  \left[m_+\right]^N \left[m_-\right]^N\frac{\left(\kappa-1\right)^2-m_+m_-}{\left(1-\alpha^2\right)\left(1-\kappa\right)}\left(\frac{\left(\alpha\left(\kappa-1\right)\right)^2}{m_+m_-}\right)^i\Bigg)\nonumber\\
		&\hspace{2cm}{}+\frac{\alpha^{2\left(N+1\right)}\kappa^2(\left[\kappa\right]^N)^2}{2} \Bigg(\sums \frac{\left(c_\sigma\right)^2\left(m_\sigma+\alpha^2\kappa\right)}{\left(m_\sigma\right)^2\left(m_\sigma-\alpha^2\kappa\right)}\left(\frac{m_\sigma}{\alpha\kappa}\right)^{2i}\nonumber\\ 
		&\hspace{6cm}{}+ \frac{c_+c_-\left(\left(\alpha^2\kappa\right)^2-m_+m_-\right)}{m_+m_-\alpha^2\left(1-\alpha^2\right)\kappa}\left(\frac{m_+m_-}{\left(\alpha\kappa\right)^2}\right)^i\Bigg)\nonumber\\
		&\hspace{2cm}{}+\frac{\alpha^N\kappa  \left[\kappa\right]^N}{\kappa-1} \Bigg( \frac{1-\left(1-\alpha^2\right)\kappa}{1-\alpha^2}\sums c_\sigma d_\sigma  \left[m_\sigma\right]^N\left(\frac{\kappa-1}{\kappa}\right)^i\nonumber\\ 
		&\hspace{3.5cm}{}+ \sums\frac{c_{\overline{\sigma}}d_\sigma m_\sigma  \left[m_\sigma\right]^N}{m_{\overline{\sigma}}}\left(\frac{1-\kappa-m_\sigma}{1-\kappa+m_\sigma}+\frac{m_{\overline{\sigma}}+\alpha^2\kappa}{m_{\overline{\sigma}}-\alpha^2\kappa}\right)\left(\frac{m_{\overline{\sigma}}\left(\kappa-1\right)}{m_\sigma\kappa}\right)^i\Bigg).\nonumber
	\end{align*}
	Summing over $i$, we find that
	\begin{align*}
		\sum_{i=3}^N D_1^i &= \frac{C_2\left(1+\alpha\right)}{2} \Bigg(C_2\left(1-\alpha\right)\left(N-2\right)+C_1 \sums \frac{c_\sigma m_\sigma\left(\frac{\alpha\kappa}{m_\sigma} \left[m_\sigma\right]^N-\frac{m_\sigma}{\alpha\kappa}\alpha^N  \left[\kappa\right]^N\right)}{m_\sigma-\alpha\kappa}\\ 
		&\hspace{4cm}{}+ 2C_1 \sums \frac{d_\sigma m_\sigma \left(\left(\frac{\alpha\left(\kappa-1\right)}{m_\sigma}\right)^2  \left[m_\sigma\right]^N -\alpha^N  \left[\kappa-1\right]^N\right)}{m_\sigma-\alpha\left(\kappa-1\right)}\Bigg)\nonumber
	\end{align*}
	Moreover,
	\begin{align*}
		\sum_{i=3}^N D_2^i &= C_2 C_3\left(\alpha^2-\alpha^N\right)\\ 
		&\hspace{.3cm}{}+ \frac{C_1C_2\left(1-\alpha^2\right)}{1+\alpha}\sums \frac{\left(d_\sigma\right)^2\left(1-\kappa-m_\sigma\right)\left(\left(m_\sigma\right)^2\alpha^N \left[\kappa-1\right]^N-\left(\alpha\left(\kappa-1\right)\right)^2 \left[m_\sigma\right]^N\right)}{d_\sigma m_\sigma\left(m_\sigma-\left(\kappa-1\right)\right)\left(\alpha\left(\kappa-1\right)-m_\sigma\right)}\nonumber\\
		&\hspace{.3cm}{}+\frac{C_1C_2\left(1-\alpha^2\right)}{2\left(1+\alpha\right)} \sums \frac{\left(c_\sigma\right)^2\left(m_\sigma+\alpha^2\kappa\right)\left(\left(\alpha\kappa\right)^2 \left[m_\sigma\right]^N-\left(m_\sigma\right)^2\alpha^N \left[\kappa\right]^N\right)}{c_\sigma \alpha\kappa\left(m_\sigma-\alpha\kappa\right)\left(m_\sigma-\alpha^2\kappa\right)},\nonumber
	\end{align*}
	\begin{align*}
		\sum_{i=3}^N D_3^i &= C_1C_3 \Bigg(\sums \frac{2d_\sigma\left((\left.m_\sigma \alpha^N\right.)^2 \left[\kappa-1\right]^N-\alpha^N\left(\alpha\left(\kappa-1\right)\right)^2 \left[m_\sigma\right]^N\right)}{m_\sigma\left(\alpha^2\left(\kappa-1\right)-m_\sigma\right)}\\
		&\hspace{5.8cm}{}+\sums \frac{c_\sigma\left(\alpha^N\kappa^2 \left[m_\sigma\right]^N-\left(m_\sigma\right)^2\alpha^N \left[\kappa\right]^N\right)}{\kappa\left(m_\sigma-\kappa\right)}\Bigg),\nonumber
	\end{align*}
	and
	\begin{align}\label{fa}
	\lefteqn{\sum_{i=3}^N D_4^i}\\
		&= \left(C_1\right)^2 \Bigg(2  \sums \frac{\left(d_\sigma m_\sigma\right)^2\left(1-\kappa-m_\sigma\right)}{\left(m_\sigma-\left(\kappa-1\right)\right)\left(\alpha\left(\kappa-1\right)-m_\sigma\right)\left(m_\sigma+\alpha\left(\kappa-1\right)\right)}\left(\alpha^N \left[\kappa-1\right]^N\right)^2\nonumber\\
		&\hspace{.5cm}{}-\frac{1}{2\left(\alpha\kappa\right)^2} \sums \frac{\left(c_\sigma\right)^2\left(m_\sigma\right)^4\left(m_\sigma+\alpha^2\kappa\right)}{\left(m_\sigma-\alpha\kappa\right)\left(m_\sigma-\alpha^2\kappa\right)\left(m_\sigma+\alpha\kappa\right)} \left(\alpha^N \left[\kappa\right]^N\right)^2\nonumber\\
		&\hspace{.5cm}{}+\sums \Bigg(\frac{\left(c_\sigma\right)^2\left(m_\sigma+\alpha^2\kappa\right)\left(\alpha\kappa\right)^2}{2\left(m_\sigma-\alpha\kappa\right)\left(m_\sigma-\alpha^2\kappa\right)\left(m_\sigma+\alpha\kappa\right)}\nonumber\\
		&\hspace{3cm}-\frac{2\left(d_\sigma\right)^2\left(\alpha\left(\kappa-1\right)\right)^4\left(1-\kappa-m_\sigma\right)}{\left(m_\sigma\right)^2\left(m_\sigma-\left(\kappa-1\right)\right)\left(\alpha\left(\kappa-1\right)-m_\sigma\right)\left(m_\sigma+\alpha\left(\kappa-1\right)\right)}\Bigg)\left(\left[m_\sigma\right]^N\right)^2\nonumber\\
		&\hspace{.5cm}{}+\sums \alpha^N\kappa\left(\frac{c_\sigma d_{\overline{\sigma}} m_{\overline{\sigma}}\left(\frac{1-\kappa-m_{\overline{\sigma}}}{1-\kappa+m_{\overline{\sigma}}}+\frac{m_\sigma+\alpha^2\kappa}{m_\sigma-\alpha^2\kappa}\right)}{m_\sigma\left(\kappa-1\right)-m_{\overline{\sigma}}\kappa}-\frac{c_\sigma d_\sigma \left(1-\left(1-\alpha^2\right)\kappa\right)}{1-\alpha^2}\right)\left[m_\sigma\right]^N\left[\kappa-1\right]^N\nonumber\\
		&\hspace{.5cm}{}+\sums\frac{\alpha^N\left(\kappa-1\right)^2}{\kappa}\left(\frac{c_\sigma d_\sigma \left(1-\left(1-\alpha^2\right)\kappa\right)}{1-\alpha^2}-\frac{c_{\overline{\sigma}}d_\sigma \left(m_{\overline{\sigma}}\right)^2\left(\frac{1-\kappa-m_\sigma}{1-\kappa+m_\sigma}+\frac{m_{\overline{\sigma}}+\alpha^2\kappa}{m_{\overline{\sigma}}-\alpha^2\kappa}\right)}{m_\sigma\left(m_{\overline{\sigma}}\left(\kappa-1\right)-m_\sigma\kappa\right)}\right)\left[m_\sigma\right]^N\left[\kappa\right]^N\nonumber\\
		&\hspace{.5cm}{}+\kappa\left(\left(\alpha^2-1\right)\kappa+1\right)\left(\alpha^N\right)^2\left(\frac{c_+c_-\left(\alpha\left(\kappa-1\right)\right)^2}{2\left(\alpha\kappa\right)^2\left(1-\alpha^2\right)} \left(\left[\kappa\right]^N\right)^2-\frac{2d_+d_-}{1-\alpha^2}\left(\left[\kappa-1\right]^N\right)^2\right)\nonumber\\
		&\hspace{.5cm}{}+\frac{\left(1-\alpha^2\right)\kappa-1}{2\kappa}\left(\frac{c_+c_-\kappa^2}{1-\alpha^2}-\frac{4d_+d_-\left(\kappa-1\right)^2}{1-\alpha^2}\right)  \left[m_+\right]^N \left[m_-\right]^N\Bigg).\nonumber
	\end{align}
	Note that 
	\begin{align*}
		\frac{c_+}{m_+\left(\kappa-1\right)-m_-\kappa} = \left(R\left(\frac{d_-}{1-\alpha^2}\frac{1-\alpha^2}{c_+}\kappa+\left(\kappa-1\right)\right)\right)^{-1}.
	\end{align*}
	The limit of ${\bm\nu}^\top \tilde{\Gamma}\,{\bm\nu}$ is a combination of the limits found in Lemma~\ref{lemma4}. 
	Thus,
	\begin{align*}
		{\bm\nu}^\top \tilde{\Gamma}\,{\bm\nu} &= \nu_1 (\left.\tilde{\Gamma}\,{\bm\nu}\right.)_1 + \nu_2 (\left.\tilde{\Gamma}\,{\bm\nu}\right.)_2 + \sum_{k=1}^4 \sum_{i=3}^N D_k^i + \nu_{N+1} (\left.\tilde{\Gamma}\,{\bm\nu}\right.)_{N+1}\\
		&\to \frac{e^{-6\rho T}\left(2e^{3\rho T}+1\right)^2}{72 \kappa^2} + \frac{e^{-6\rho T}\left(2e^{3\rho T}+1\right)^2\left(\kappa-1\right)\left(3\kappa-1\right)}{72\kappa^4}\\ 
		&\hspace{.5cm}{}+ \frac{e^{-6\rho T}}{216\kappa^4} \Big(12e^{6\rho T}\left(\kappa^4\left(2\rho T+3\right)-4\kappa^2+4\kappa-1\right)\nonumber\\
		&\hspace{2.2cm}{}-4e^{3\rho T}\left(2\kappa^4+12\kappa^2-12\kappa+3\right)-\kappa^4-12\kappa^2+12\kappa-3\Big)+ 0\nonumber\\
		&= \frac{-e^{-6\rho T}-8e^{-3\rho T}+24\rho T+36}{216}.
	\end{align*}
	
	Now we turn to the computation of ${\bm\omega}^\top  (\left.\tilde{\Gamma}-\tilde{\Gamma}^\top \right.){\bm\nu}$.
	Define
	\begin{align*}
		C_4 &:= \frac{\left(\alpha^2\left(\kappa-1\right)-\kappa\right)-\alpha\left(\frac{\alpha\left(\kappa-1\right)}{\kappa}\right)^{N+1}}{\left(\kappa-\alpha\left(\kappa-1\right)\right)\left(\alpha^2\left(\kappa-1\right)-\kappa\right)}\qquad \text{and}\qquad 
		C_5 := \frac{\alpha^2\left(\kappa-1\right)\left(\kappa+\alpha\left(\kappa-1\right)\right)}{\kappa^2\left(\alpha^2\left(\kappa-1\right)-\kappa\right)}.
	\end{align*}
	We have
	\begin{align*}
		(\left.{\bm\omega}^\top  (\left.\tilde{\Gamma}-\tilde{\Gamma}^\top \right.)\right.)_1 &= \frac{\alpha}{\kappa-\alpha\left(\kappa-1\right)}\left(1-\left(\frac{\alpha\left(\kappa-1\right)}{\kappa}\right)^N\right),\\
		(\left.{\bm\omega}^\top  (\left.\tilde{\Gamma}-\tilde{\Gamma}^\top \right.)\right.)_i &= \frac{\left(\alpha^2\left(\kappa-1\right)-\kappa\right)-\alpha\left(\frac{\alpha\left(\kappa-1\right)}{\kappa}\right)^{N+1}}{\left(\kappa-\alpha\left(\kappa-1\right)\right)\left(\alpha^2\left(\kappa-1\right)-\kappa\right)}\alpha^i+\frac{\alpha\left(\kappa+\alpha\left(\kappa-1\right)\right)}{\kappa\left(\alpha^2\left(\kappa-1\right)-\kappa\right)}\left(\frac{\alpha\left(\kappa-1\right)}{\kappa}\right)^{N+1-i}\hspace{-.8cm},\nonumber
	\end{align*}
	for $i \in \left\{2,\dots,N\right\}$, and
	\begin{align*}
		(\left.{\bm\omega}^\top  (\left.\tilde{\Gamma}-\tilde{\Gamma}^\top \right.)\right.)_{N+1} &= \frac{\alpha\Big(\alpha^N\left(\kappa-\alpha^2\left(\kappa-1\right)\right)\kappa+\alpha^2\left(\kappa-1\right)\left(\kappa-1+\left(\frac{\alpha^2\left(\kappa-1\right)}{\kappa}\right)^N\right)-\kappa^2\Big)}{\kappa\left(\kappa-\alpha\left(\kappa-1\right)\right)\left(\kappa-\alpha^2\left(\kappa-1\right)\right)}.
	\end{align*}
	For $i \in \left\{2,\dots,N\right\}$:
	\begin{align*}
		\lefteqn{(\left.{\bm\omega}^\top  (\left.\tilde{\Gamma}-\tilde{\Gamma}^\top \right.) {\bm\nu} \right.)_i}\\ 
		&= C_2 \left(1-\alpha\right)\left(C_4\alpha^i+C_5\left(\frac{\alpha\left(\kappa-1\right)}{\kappa}\right)^N\left(\frac{\kappa}{\alpha\left(\kappa-1\right)}\right)^i\right)\nonumber\\
		&\hspace{.5cm}{}+2C_1\sums \frac{d_\sigma m_\sigma}{\alpha\left(\kappa-1\right)}\left(C_4\left(\frac{\alpha^2\left(\kappa-1\right)}{m_\sigma}\right)^i+C_5\left(\frac{\alpha\left(\kappa-1\right)}{\kappa}\right)^N\left(\frac{\kappa}{m_\sigma}\right)^i\right)  \left[m_\sigma\right]^N\nonumber\\
		&\hspace{.5cm}{}+C_1 \sums \frac{c_\sigma \alpha^{N+1}\kappa }{m_\sigma}\left(C_4\left(\frac{m_\sigma}{\kappa}\right)^i+C_5\left(\frac{\alpha\left(\kappa-1\right)}{\kappa}\right)^N\left(\frac{m_\sigma}{\alpha^2\left(\kappa-1\right)}\right)^i\right)  \left[\kappa\right]^N.\nonumber
	\end{align*}
	Summing over $i$, we arrive at
	\begin{align}\label{fc}
	\lefteqn{\sum_{i=2}^N(\left.{\bm\omega}^\top  (\left.\tilde{\Gamma}-\tilde{\Gamma}^\top \right.) {\bm\nu} \right.)_i}\\ 
		&= C_2\left(C_4\left(\alpha^2-\alpha^{N+1}\right)+C_5\left(1-\alpha\right)\frac{\alpha\kappa\left(\kappa-1\right)-\alpha^N\kappa^2\left(\left(\kappa-1\right)/\kappa\right)^N}{\alpha\left(\kappa-1\right)\left(\kappa-\alpha\left(\kappa-1\right)\right)}\right)\nonumber\\
		&\hspace{.5cm}{}+ C_1 \sums \Bigg(C_4\left(\frac{2d_\sigma\alpha^3\left(\kappa-1\right)}{m_\sigma-\alpha^2\left(\kappa-1\right)}-\frac{c_\sigma\alpha^{N+1}\kappa}{\kappa-m_\sigma}\right)\nonumber\\
		&\hspace{3cm}{}+C_5\Bigg(\frac{2d_\sigma \alpha^N\kappa^2 \left(\frac{\left(\kappa-1\right)}{\kappa}\right)^N}{\alpha\left(\kappa-1\right)\left(m_\sigma-\kappa\right)}-\frac{c_\sigma \alpha\kappa}{\alpha^2\left(\kappa-1\right)-m_\sigma}\Bigg)\Bigg)  \left[m_\sigma\right]^N\nonumber\\
		&\hspace{.5cm}{}-C_1\alpha\kappa\left(\frac{C_4\alpha^N\left(\kappa+1\right)}{\kappa}+\frac{C_5\left(1-2\alpha^2\right)\left(\frac{\alpha^2\left(\kappa-1\right)}{\kappa}\right)^N}{\alpha^2\left(1-\alpha^2\right)\left(\kappa-1\right)}\right) \left[\kappa\right]^N\nonumber\\
		&\hspace{.5cm}{}-2C_1\left(C_4\alpha^{2N+1}+\frac{C_5\kappa\left(1-\left(\alpha^2-1\right)\kappa\right)\alpha^N}{\alpha\left(1-\alpha^2\right)\left(\kappa-1\right)}\right) \left[\kappa-1\right]^N\nonumber.
	\end{align}
	Note that
	\begin{align*}
		\frac{\left(\left(\kappa-1\right)/\kappa\right)^N}{\kappa-m_+} &= \frac{\left(\left(\kappa-1\right)/\kappa\right)^N}{1-\alpha^2} \frac{1-\alpha^2}{c_+}\frac{c_+}{\kappa-m_+},\hspace{1cm}\text{ and}\\
		\frac{ \left[m_-\right]^N}{m_--\alpha\left(\kappa-1\right)} &= \frac{ \left[m_-\right]^N}{1-\alpha^2} \frac{1-\alpha^2}{d_-}\frac{d_-}{m_--\alpha\left(\kappa-1\right)}.
	\end{align*}
	Again, Lemma~\ref{lemma4} gives us all the necessary limits; we find that
	\begin{align*}
		{\bm\omega}^\top (\left.\tilde{\Gamma}-\tilde{\Gamma}^\top \right.){\bm\nu} &= (\left.{\bm\omega}^\top (\left.\tilde{\Gamma}-\tilde{\Gamma}^\top \right.)\right.)_1 \nu_1 + \sum_{i=2}^N(\left.{\bm\omega}^\top  (\left.\tilde{\Gamma}-\tilde{\Gamma}^\top \right.) {\bm\nu} \right.)_i + (\left.{\bm\omega}^\top (\left.\tilde{\Gamma}-\tilde{\Gamma}^\top \right.)\right.)_{N+1} \nu_{N+1}\\
		&\to \frac{e^{-3\rho T}+2}{6\kappa}+\frac{2\left(2\kappa-1\right)-\left(\kappa+1\right)e^{-3\rho T}}{6\kappa}+0
		= \frac{-e^{-3\rho T}+4}{6}.
	\end{align*}
	Finally,
	\begin{align}
		{\bm\omega}^\top \tilde{\Gamma}\,{\bm\omega} &= \frac{1}{2}\sum_{i=1}^N \left( \omega_i\right)^2 + \frac{\left( \omega_{N+1}\right)^2}{2}+\sum_{i=1}^{N-1}\sum_{j=i+1}^N  \omega_i \omega_j \alpha^{j-i} +  \omega_{N+1} \sum_{i=1}^N  \omega_i \alpha^{N+1-i}\nonumber\\
		&= \frac{\kappa^2-2\alpha\left(1-\alpha^2\right)\kappa\left(\kappa-1\right)-\alpha^2\left(2-\alpha^2\right)\left(\kappa-1\right)^2}{2\left(\kappa-\alpha\left(\kappa-1\right)\right)^3\left(\kappa+\alpha\left(\kappa-1\right)\right)}\label{fd}\\
		&\hspace{.5cm}{}+\frac{\left(1+\alpha\right)\left(1-\alpha\right)N}{2\left(\kappa-\alpha\left(\kappa-1\right)\right)^2}-\frac{\alpha^{N+2}\left(1-\left(1-\alpha\right)\kappa\right)\left(\kappa-1\right)\left(\left(\kappa-1\right)/\kappa\right)^N}{\kappa\left(\kappa-\alpha\left(\kappa-1\right)\right)^3}\nonumber\\
		&\hspace{.5cm}{}+\frac{\alpha^{2\left(N+2\right)}\left(\kappa-1\right)^2\left(2\kappa-1\right)\left(\left(\kappa-1\right)/\kappa\right)^{2N}}{2\left(\kappa-\alpha\left(\kappa-1\right)\right)^3\left(\kappa+\alpha\left(\kappa-1\right)\right)\kappa^2}.\nonumber
	\end{align}
	For $\kappa > 1/2$, we find ${\bm\omega}^\top \tilde{\Gamma}\,{\bm\omega} \to \left(2\rho T+1\right)/2$.	 
\end{proof}

\begin{proof}[Proof of Theorem~\ref{costs asymptotics thm} {\rm (a)}] It was shown in \eqref{omega.1 asympt for kappa>1/2 eq} that 
\begin{equation}\label{1top omega convergence eq}
\mathbf{1}^\top {\bm\omega}= \sum_{i=1}^{N+1}  \omega_i\lra \rho T+1\quad\text{as $N\uparrow\infty$.}
\end{equation}
	The limit  of $\mathbf{1}^\top {\bm\nu} = \sum_{i=1}^{N+1} \nu_i$ was obtained in \eqref{(partial) sum nui kappa=1 eq} and  \eqref{sum nui kappa>1/2, kappa neq 1 eq} for the respective cases $\kappa=1$ and $\kappa\neq1$, $\kappa>1/2$.
	The limits of ${\bm\nu}^\top \tilde{\Gamma}\,{\bm\nu}, {\bm\omega}^\top (\left.\tilde{\Gamma}-\tilde{\Gamma}^\top \right.){\bm\nu}$, and ${\bm\omega}^\top \tilde{\Gamma}\,{\bm\omega}$ can be found in Lemma~\ref{lemma7}.
	Plugging these results into \eqref{am} yields assertion (a).	 
\end{proof}

\subsection{Proof of Theorem~\ref{costs asymptotics thm} (b)}


\begin{lemma}\label{lemma7 kappa =1/2}
	Let $\kappa = 1/2$.
	It holds that
	\begin{align*}
		\lim_{\substack{N\uparrow\infty\\
	N\,\text{\rm even}}} {\bm\nu}^\top \tilde{\Gamma}\,{\bm\nu}&= \frac{2e^{6\rho T}\left(3\rho T+5\right)+e^{3\rho T}+3\rho T+7}{54e^{6\rho T}+27},\\
	\lim_{\substack{N\uparrow\infty\\
	N\,\text{\rm even}}}	 {\bm\omega}^\top ( \tilde{\Gamma}-\tilde{\Gamma}^\top  ){\bm\nu}&= \frac{4e^{6\rho T}-6e^{5\rho T}+e^{3\rho T}-3e^{-\rho T}+4}{6e^{6\rho T}+3},\\
	\lim_{\substack{N\uparrow\infty\\
	N\,\text{\rm even}}}{\bm\omega}^\top \tilde{\Gamma}\,{\bm\omega}&= e^{-\rho T}+\rho T+1.
	\end{align*}
Moreover,
	\begin{align*}
	\lim_{\substack{N\uparrow\infty\\
	N\,\text{\rm odd}}}{\bm\nu}^\top \tilde{\Gamma}\,{\bm\nu}&= \frac{2e^{6\rho T}\left(3\rho T+5\right)-3e^{3\rho T}-3\rho T-7}{54e^{6\rho T}-27},\\
	\lim_{\substack{N\uparrow\infty\\
	N\,\text{\rm odd}}}{\bm\omega}^\top (\left.\tilde{\Gamma}-\tilde{\Gamma}^\top \right.){\bm\nu}&=\frac{-4e^{6\rho T}-6e^{5\rho T}+3e^{3\rho T}+3e^{-\rho T}+4}{-6e^{6\rho T}+3},\text{ and}\\
		\lim_{\substack{N\uparrow\infty\\
	N\,\text{\rm odd}}}{\bm\omega}^\top \tilde{\Gamma}\,{\bm\omega}&=  -e^{-\rho T}+\rho T+1.
	\end{align*}
\end{lemma}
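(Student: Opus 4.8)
The plan is to mimic the computation carried out in the proof of Lemma~\ref{lemma7} for the case $\kappa>1/2$, but now keeping track of the parity of $N$ because the relevant geometric-type sequences no longer tend to zero. Concretely, the three quantities ${\bm\nu}^\top\tilde\Gamma\,{\bm\nu}$, ${\bm\omega}^\top(\tilde\Gamma-\tilde\Gamma^\top){\bm\nu}$ and ${\bm\omega}^\top\tilde\Gamma\,{\bm\omega}$ were already expanded, for general $\kappa\ge1/2$ with $\kappa\neq1$, into finite combinations of the ``building blocks'' appearing in Lemma~\ref{lemma4}: the quantities $\alpha^{n_t}$, $R$, $c_\pm$, $d_\pm$, $m_\pm$, the bracketed objects $[m_\pm]^N$, $[\kappa]^N$, $[\kappa-1]^N$, the ratios such as $(\tfrac{m_+}{\kappa})^{n_t}$, and the constants $C_1,\dots,C_5$. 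For $\kappa=1/2$ the closed-form sums $\sum_{i=3}^N D^i_k$ in \eqref{fa}, the sum \eqref{fc}, and the expression \eqref{fd} remain valid verbatim, since their derivation used only $\kappa\neq1$. Therefore the only thing that changes is the \emph{limit} of each building block, and those limits — now parity-dependent — are exactly the content of parts (f'), (g'), (h') of Lemma~\ref{lemma4}.

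So the first step is to substitute, into the already-established expansions of ${\bm\nu}^\top\tilde\Gamma\,{\bm\nu}$, ${\bm\omega}^\top(\tilde\Gamma-\tilde\Gamma^\top){\bm\nu}$ and ${\bm\omega}^\top\tilde\Gamma\,{\bm\omega}$, the specialized values $\kappa=1/2$, $m_+\to 1/2$, $m_-\to -1/2$, $R\to1$, $c_+\to0$, $c_-\to1$, $d_+\to1$, $d_-\to0$, $\alpha^{n_t}\to e^{-\rho t}$ (with $n_t=N$, $t=T$, so $\alpha^N\to e^{-\rho T}$), together with the two-sided limits of $[m_\pm]^{2N},[m_\pm]^{2N+1}$, $[\kappa]^{2N},[\kappa]^{2N+1}$, $[\kappa-1]^{2N},[\kappa-1]^{2N+1}$ and of the sign-alternating ratios from (f'). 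The second step is to separate the $N$-even and $N$-odd cases: in each of the finitely many summands, replace every occurrence of a parity-sensitive block by its even-limit (resp.\ odd-limit) value and collect terms. The third step is bookkeeping: assemble the resulting rational expressions in $e^{\rho T}$ and simplify, checking against the three asserted even limits and the three asserted odd limits. As a sanity check one can also note that ${\bm\omega}^\top\tilde\Gamma\,{\bm\omega}$ is governed directly by \eqref{fd} and the partial-sum formulas \eqref{partial sum omk kappa=1/2 eq}, giving $e^{-\rho T}+\rho T+1$ resp.\ $-e^{-\rho T}+\rho T+1$ in agreement with \eqref{sum omi limit formula kappa=1/2}, and that the denominators appearing ($54e^{6\rho T}\pm27$, $6e^{6\rho T}\pm3$) are precisely $3\,\bm1^\top\bm\nu$ and $\bm1^\top\bm\nu\cdot(\text{something})$ up to the limits recorded in \eqref{limits sum nui kappa=1/2 eq}, which is a useful consistency test.

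The main obstacle is purely computational stamina rather than any conceptual difficulty: the expansion \eqref{fa} for $\sum_{i=3}^N D_4^i$ has many terms, several of which are products of two bracketed blocks (e.g.\ $[m_+]^N[m_-]^N$, $[\kappa]^N[\kappa-1]^N$, $([\kappa]^N)^2$) whose even/odd limits must be taken carefully, and a number of them involve the indeterminate-looking factors $1/(1-\alpha^2)$ multiplied against blocks that vanish — here one must use the refined limits in (g')/(h') (the analogues of (viii) for $\kappa=1/2$) rather than naively multiplying limits. A second, subtler point: some intermediate products of sign-alternating sequences, like $(\tfrac{m_-}{\kappa})^{n_t}\cdot(\tfrac{\kappa-1}{m_-})^{n_t}=(\tfrac{\kappa-1}{\kappa})^{n_t}$, produce $(-1)^{n_t}$ factors that must cancel in pairs for the limits to exist; one should verify this cancellation explicitly wherever two odd powers multiply. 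Once those pitfalls are navigated, each of the six limits drops out as a routine rational simplification.
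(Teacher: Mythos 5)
Your proposal follows essentially the same route as the paper: the paper's proof likewise observes that the expansions \eqref{fa}, \eqref{fc}, \eqref{fd} from the $\kappa\neq1$ case remain valid at $\kappa=1/2$, performs exactly the kind of rewritings you flag (multiplying and dividing by $1-\alpha^2$, $c_+$, $d_-$ so that the refined limits in Lemma~\ref{lemma4} (f')--(h') apply to the otherwise indeterminate products), and then substitutes the parity-dependent limits. Your identified pitfalls and consistency checks match what the paper actually does, so the plan is sound.
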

\begin{proof}
	As explained in the proof of Lemma~\ref{lemma7}, the representations of ${\bm\nu}^\top \tilde{\Gamma}\,{\bm\nu}$,  ${\bm\omega}^\top (\left.\tilde{\Gamma}-\tilde{\Gamma}^\top \right.){\bm\nu}$, and ${\bm\omega}^\top \tilde{\Gamma}\,{\bm\omega}$ obtained in the proof of that lemma for $\kappa\neq1$ are also valid for $\kappa = 1/2$.
	In \eqref{fa}, note that for $\sigma \in \left\{+,-\right\}$,
	\begin{align*}
		\frac{c_{\overline{\sigma}} d_\sigma \left(m_{\overline{\sigma}}\right)^2\left(\frac{1-\kappa-m_\sigma}{1-\kappa+m_\sigma}+\frac{m_{\overline{\sigma}}+\alpha^2\kappa}{m_{\overline{\sigma}}-\alpha^2\kappa}\right)}{m_\sigma\left(m_{\overline{\sigma}}\left(\kappa-1\right)-m_\sigma\kappa\right)} = \frac{16\left(m_{\overline{\sigma}}\right)^2\alpha^2}{m_\sigma\left(27\alpha^4-42\alpha^2+27\right)}.
	\end{align*}
	Also in \eqref{fc},
	\begin{align*}
		\frac{C_5}{1-\alpha^2} &= \frac{\alpha^2\left(\kappa-1\right)}{\kappa^2\left(\alpha^2\left(\kappa-1\right)-\kappa\right)} \frac{\kappa+\alpha\left(\kappa-1\right)}{1-\alpha^2},\\
		\frac{C_5}{m_+-\kappa} &= \frac{C_5}{1-\alpha^2} \frac{1-\alpha^2}{c_+}\frac{c_+}{m_+-\kappa},\hspace{2.5cm}\text{ and}\\
		\frac{C_5}{m_--\alpha^2\left(\kappa-1\right)} &= \frac{C_5}{1-\alpha^2} \frac{1-\alpha^2}{d_-}\frac{d_-}{m_--\alpha^2\left(\kappa-1\right)}.\hspace{1cm}
	\end{align*}
Equation	\eqref{fd} simplifies to
	\begin{align*}
		-\frac{2\Big(\alpha^2\left(1-2\left(-1\right)^N \alpha^N\right)-2\alpha-1-\left(1+\alpha\right)\left(1-\alpha\right)N\Big)}{\left(1+\alpha\right)^2}.
	\end{align*}
 	Plugging in the limits from Lemma~\ref{lemma4} completes the proof.	 
\end{proof}
 
\begin{proof}[Proof of Theorem~\ref{costs asymptotics thm} {\rm (b)}]	To prove part (b) of Theorem~\ref{costs asymptotics thm}, we now proceed just as in the proof of part (a), this time using the limits obtained in \eqref{sum omi limit formula kappa=1/2},  \eqref{limits sum nui kappa=1/2 eq}, and Lemma~\ref{lemma7 kappa =1/2}.	 
\end{proof}

\subsection{Proof of Corollary~\ref{cost comparison cor}}
Let $x=y.$
	For $z > 0,$ define
	\begin{align*}
		c_0^+\left(z\right) &:= \frac{\left(2x\right)^2\left(6e^{6z}+3\right)}{2\left(2e^{6z}\left(3z+5\right)+e^{3z}+3z+7\right)},\\
		c_0^-\left(z\right) &:= \frac{\left(2x\right)^2\left(6e^{6z}-3\right)}{2\left(2e^{6z}\left(3z+5\right)-3e^{3z}-3z-7\right)},\\
		c_{1/4}\left(z\right) &:= \frac{\left(2x\right)^2\left(36e^{6z}\left(8z+13\right)-60e^{3z}-3\right)}{16\left(2e^{3z}\left(3z+5\right)-1\right)^2}.
	\end{align*}
	By Theorem~\ref{costs asymptotics thm}, these functions correspond to the limiting expected costs of trader $X$  if $\rho T = z$, trader $X$ has initial endowment  $x$,  trader $Y$ has initial endowment $y=x$, and we have $\theta = 0$ (for $c_0^\pm$) or $\theta > 0$ (for $c_{1/4}$).
	Without loss of generality, we can take $x=y=1/2$. 
	We find that
	\begin{align*}
		 &\hphantom{{}=}{} \frac{6e^{6z}+3}{2\left(2e^{6z}\left(3z+5\right)+e^{3z}+3z+7\right)}
		 - \frac{6e^{6z}-3}{2\left(2e^{6z}\left(3z+5\right)-3e^{3z}-3z-7\right)} \\
		 &= - \frac{3e^{3z} (2e^{3z}+1)^2}{(2e^{6z} (3z+5) + e^{3z} + 3z + 7) (2e^{6z} (3z+5) - 3e^{3z} - 3z - 7)}.
	\end{align*}
	Rewriting the second factor in the denominator shows that the expression above is negative:
	$$
		2e^{6z} (3z+5) - 3e^{3z} - 3z - 7 = (e^z - 1) (e^{2z} + e^z + 1) (10e^{3z}+7) + 3z (2e^{6z}-1) > 0.
	$$
	This shows that $c_0^+ < c_0^-$ for all $z > 0.$
	It remains to show:
	If $z > \log(4+\sqrt{62}/3)/3,$ then
	$$
		\frac{6e^{6z}+3}{2\left(2e^{6z}\left(3z+5\right)+e^{3z}+3z+7\right)} 
		> \frac{36e^{6z}\left(8z+13\right)-60e^{3z}-3}{16\left(2e^{3z}\left(3z+5\right)-1\right)^2}.
	$$
	Multiplying by both denominators, we find that this is true if and only if
	\begin{align*}
		0 < 16\left(2e^{3z}\left(3z+5\right)-1\right)^2 \left(6e^{6z}+3\right) - 2\left(2e^{6z}\left(3z+5\right)+e^{3z}+3z+7\right) \left(36e^{6z}\left(8z+13\right)-60e^{3z}-3\right)
	\end{align*}
	A tedious rearrangement of terms shows that the expression on the right hand side equals
	$$
		6 \left(2e^{3z} +1\right)^2 \left(3z+5\right) \left(2e^{6z} - e^{3z} \frac{48z+79}{3z+5} + \frac{3z+15}{3z+5} \right).
	$$
	Since $48z+79 < 16(3z+5)$ and $3z+15 > 3z+5$ for all $r > 0$, this expression is larger than
	$$
		6 \left(2e^{3z} + 1\right)^2 \left(3z+5\right) \left(2e^{6z} - 16e^{3z} + 1\right).
	$$
	The real-valued function $a \mapsto 2a^2 - 16a + 1$ has the two roots $a_1 \coloneqq 4 - \sqrt{62}/2$ 
	and $a_2 \coloneqq 4 + \sqrt{62}/2.$
	Since $a_1 < 1 < a_2,$ the real-valued function $b \mapsto 2e^{6b} - 16 e^{3b} + 1$ has exactly one positive root $\log(a_2)/3.$
	Hence, for every $z > \log(a_2)/3 = \log(4 + \sqrt{62}/2) / 3$ it holds that
	\begin{align*}
		0 &< 6 \left(2e^{3z} + 1\right)^2 \left(3z+5\right) \left(2e^{6z} - 16e^{3z} + 1\right)\\
		&< 16\left(2e^{3z}\left(3z+5\right)-1\right)^2 \left(6e^{6z}+3\right) - 2\left(2e^{6z}\left(3z+5\right)+e^{3z}+3z+7\right) \left(36e^{6z}\left(8z+13\right)-60e^{3z}-3\right).
	\end{align*}
\qed

\subsection{Proof of Corollary~\ref{tax corollary}}

\begin{lemma}\label{tax lemma}For $\theta>0$, as $N\uparrow\infty$,
\begin{align*}
	\theta\bm v^\top\bm v&\lra \frac{9(1+2e^{3\rho T})^2}{4(1-2e^{3\rho T}(5+3\rho T))^2},\\
	\theta\bm w^\top\bm w&\lra\frac{1}{4(\rho T+1)^2}.
\end{align*}
\end{lemma}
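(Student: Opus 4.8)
The plan is to avoid evaluating $\bm\nu^\top\bm\nu$ and $\bm\omega^\top\bm\omega$ directly from the closed-form expressions of Theorem~\ref{omega and nu closed form thm}, which would be prohibitively long, and instead to express these quadratic forms through quantities whose limits are already available. The key observation is the following pair of identities. Left-multiplying the defining relation $(\Gamma+\tilde\Gamma+2\theta\Id)\bm\nu=\mathbf 1$ by $\bm\nu^\top$ and using $\Gamma=\tilde\Gamma+\tilde\Gamma^\top$ together with $\bm\nu^\top\tilde\Gamma^\top\bm\nu=\bm\nu^\top\tilde\Gamma\bm\nu$ (both are scalars) gives $3\,\bm\nu^\top\tilde\Gamma\bm\nu+2\theta\,\bm\nu^\top\bm\nu=\mathbf 1^\top\bm\nu$, hence
$$\theta\,\bm v^\top\bm v=\frac{\theta\,\bm\nu^\top\bm\nu}{(\mathbf 1^\top\bm\nu)^2}=\frac{\mathbf 1^\top\bm\nu-3\,\bm\nu^\top\tilde\Gamma\bm\nu}{2\,(\mathbf 1^\top\bm\nu)^2}.$$
Analogously, since $\Gamma-\tilde\Gamma=\tilde\Gamma^\top$, the relation $(\Gamma-\tilde\Gamma+2\theta\Id)\bm\omega=\mathbf 1$ yields $\bm\omega^\top\tilde\Gamma\bm\omega+2\theta\,\bm\omega^\top\bm\omega=\mathbf 1^\top\bm\omega$ and therefore
$$\theta\,\bm w^\top\bm w=\frac{\mathbf 1^\top\bm\omega-\bm\omega^\top\tilde\Gamma\bm\omega}{2\,(\mathbf 1^\top\bm\omega)^2}.$$

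With these identities in hand, the proof reduces to inserting limits that have already been established for $\theta>0$, i.e. $\kappa>1/2$. From \eqref{(partial) sum nui kappa=1 eq} and \eqref{sum nui kappa>1/2, kappa neq 1 eq} we have $\mathbf 1^\top\bm\nu\to\big(2e^{3\rho T}(3\rho T+5)-1\big)/(18e^{3\rho T})$; from \eqref{omega.1 asympt for kappa>1/2 eq}, $\mathbf 1^\top\bm\omega\to\rho T+1$; and Lemma~\ref{lemma7} gives $\bm\nu^\top\tilde\Gamma\bm\nu\to\big(-e^{-6\rho T}-8e^{-3\rho T}+24\rho T+36\big)/216$ and $\bm\omega^\top\tilde\Gamma\bm\omega\to(2\rho T+1)/2$. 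A short algebraic simplification shows $\mathbf 1^\top\bm\nu-3\,\bm\nu^\top\tilde\Gamma\bm\nu\to(1+2e^{3\rho T})^2/(72e^{6\rho T})$, and dividing by $2(\mathbf 1^\top\bm\nu)^2$ produces exactly $9(1+2e^{3\rho T})^2/\big(4(1-2e^{3\rho T}(5+3\rho T))^2\big)$. Likewise $\mathbf 1^\top\bm\omega-\bm\omega^\top\tilde\Gamma\bm\omega\to\tfrac12$, which divided by $2(\rho T+1)^2$ gives $1/(4(\rho T+1)^2)$.

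The only genuine point is noticing the identities $2\theta\,\bm\nu^\top\bm\nu=\mathbf 1^\top\bm\nu-3\,\bm\nu^\top\tilde\Gamma\bm\nu$ and $2\theta\,\bm\omega^\top\bm\omega=\mathbf 1^\top\bm\omega-\bm\omega^\top\tilde\Gamma\bm\omega$; once they are available, no new asymptotic work is required and the remaining steps are purely elementary arithmetic. As a consistency check, applying the same identities at $\kappa=1$ together with the $\kappa=1$ limits from \eqref{(partial) sum nui kappa=1 eq} and Lemma~\ref{lemma7} reproduces the same values, in line with the $\theta$-independence of the limits asserted in the lemma.
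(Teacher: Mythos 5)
Your proposal is correct and follows essentially the same route as the paper: the same two identities $2\theta\,\bm\nu^\top\bm\nu=\mathbf 1^\top\bm\nu-3\,\bm\nu^\top\tilde\Gamma\bm\nu$ and $2\theta\,\bm\omega^\top\bm\omega=\mathbf 1^\top\bm\omega-\bm\omega^\top\tilde\Gamma\bm\omega$ derived from the defining equations for $\bm\nu$ and $\bm\omega$, followed by insertion of the previously established limits for $\mathbf 1^\top\bm\nu$, $\mathbf 1^\top\bm\omega$, $\bm\nu^\top\tilde\Gamma\bm\nu$, and $\bm\omega^\top\tilde\Gamma\bm\omega$. The arithmetic checks out, so nothing further is needed.
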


\begin{proof}As for the first limit, since $(\Gamma+\tilde\Gamma+2\theta\Id)\bm\nu=\bm1$ and $\Gamma=\tilde\Gamma+\tilde\Gamma^\top$, we have 
$$\bm\nu^\top\bm1=\bm\nu^\top\Gamma\bm\nu+\bm\nu^\top\tilde\Gamma\bm\nu+2\theta\bm\nu^\top \bm\nu=3\bm\nu^\top\tilde\Gamma\bm\nu+2\theta\bm\nu^\top\bm\nu.
$$
Solving for $\theta\bm\nu^\top\bm\nu$, applying the limiting formula for $\bm\nu^\top\bm1$ from \eqref{sum nui kappa>1/2, kappa neq 1 eq}
 and the one for $\bm\nu^\top\tilde\Gamma\bm\nu$ from Lemma~\ref{lemma7}, and then simplifying the result leads to 
 $$\lim_{N\uparrow\infty}\theta\bm\nu^\top\bm\nu=\frac{e^{-6\rho T}(1+2e^{3\rho T})^2}{144}.
 $$
Since $\bm v=(\bm\nu^\top\bm1)^{-1}\bm\nu$, the preceding limit, another application of \eqref{sum nui kappa>1/2, kappa neq 1 eq}, and a straightforward computation  finally yield the asserted convergence of $\theta\bm v^\top\bm v$.

To prove the second limit, we use that $(\Gamma-\tilde\Gamma+2\theta\Id)\bm\omega=\bm1$ and that $\bm\omega^\top(\Gamma-\tilde\Gamma)\bm\omega=\bm\omega^\top\tilde\Gamma^\top\bm\omega=\bm\omega^\top\tilde\Gamma\bm\omega$. Hence, using the limits from \eqref{1top omega convergence eq} and Lemma~\ref{lemma7},
$$\theta\bm\omega^\top\bm\omega=\frac12\bm\omega^\top\bm1-\frac12\bm\omega^\top\tilde\Gamma\bm\omega\lra\frac{\rho T+1}{2}-\frac{2\rho T+1}{4}=\frac14\qquad\text{as $N\uparrow\infty$.}
$$
Since $\bm w=(\bm1^\top\bm\omega)^{-1}\bm\omega$, the preceding limit and another application of  \eqref{1top omega convergence eq} conclude the proof.
\end{proof}

 \begin{proof}[Proof of Corollary~\ref{tax corollary}] We have 
$$\text{TR}_N=\theta\bm\xi^\top\bm\xi+\theta\bm\eta^\top\bm\eta=\frac12\big(\theta\bm v^\top\bm v+\theta\bm w^\top\bm w).
$$
Therefore, the formula for the limit of $\text{TR}_N$ follows from Lemma~\ref{tax lemma}. Next, we get from 
 Theorem~\ref{costs asymptotics thm} and a tedious though straightforward computation that
 \begin{align*}\liminf_{N\uparrow\infty}(\text{TR}_N-\text{TC}_N)&=\lim_{N\uparrow\infty}(\text{TR}_{2N}-\text{TC}_{2N})\\
 &=\frac{(x+y)^2\,3 \left(2 e^{3 \rho T }+1\right)^2 \left(3 (\rho T +3)+2 e^{6 \rho T } (3 \rho T +5)-e^{3
   \rho T } (12 \rho T +19)\right)}{2 \left(1-2 e^{3 \rho T } (3 \rho T +5)\right)^2 \left(3 \rho T
   +e^{3 \rho T }+2 e^{6 \rho T } (3 \rho T +5)+7\right)}.
 \end{align*}
Clearly, the latter expression vanishes for $x=-y$ and it is strictly positive if and only if $x\neq -y$ and $f(\rho T)>0$, where
$$f(z):=3 (z +3)+2 e^{6 z } (3 z +5)-e^{3
  z } (12 z +19).
$$
But one easily sees that $f(0)=0$ and $f'(z)=3(e^{3z}-1)(2e^{3z}(11+6z)-1)>0$ for all $z>0$. This proves the assertion.
 \end{proof}

 \section{Proofs of Theorem~\ref{main th continuous} and Corollary~\ref{cont costs cor}}

\subsection{Proof of Theorem~\ref{main th continuous}}

 \begin{proof}[Proof of Theorem~\ref{main th continuous} {\rm (a)}] Let $X^*$ and $Y^*$ be as in \eqref{equilibrium strategies continuous}  and $\theta=\theta^*=1/4$. A straightforward computation yields that, for all $t\in[0,T]$, 
$$
 \int_{[0,T]}  e^{-\rho|t-s|}\,d X^*_s +\int_{[0, t)}  e^{-\rho(t-s)}\,d Y^*_s+\frac1{2}\Delta Y^*_t+2\theta\Delta X^*_t=-\frac{1}{2}\Big(\frac{ x-y}{\rho T+1}+\frac{18 (x+y)}{10+6\rho T-e^{-3\rho T}}\Big)
 $$
 and
 $$
 \int_{[0,T]}  e^{-\rho|t-s|}\,d Y^*_s +\int_{[0, t)}  e^{-\rho(t-s)}\,d X^*_s+\frac1{2}\Delta X^*_t+2\theta\Delta Y^*_t =\frac{1}{2}\Big(\frac{ x-y}{\rho T+1}-\frac{18 (x+y)}{10+6\rho T-e^{-3\rho T}}\Big).
$$
Proposition~\ref{deterministic fredholm prop} therefore implies that $(X^*,Y^*)$ is a Nash equilibrium in $\mathscr{X}_{\text{det}}(x,[0,T])\times\mathscr{X}_{\text{det}}(y,[0,T])$. Thus, Lemma~\ref{det is adp lm} yields that $(X^*,Y^*)$ is a Nash equilibrium also in $\mathscr{X}(x,[0,T])\times\mathscr{X}(y,[0,T])$. The uniqueness of the latter Nash equilibrium was proved in Proposition~\ref{unique prop}. This concludes the proof of part (a).
 \end{proof}

\begin{proof}[Proof of Theorem~\ref{main th continuous} {\rm (b)}]

Now we consider the case $\theta\neq \theta^* = 1/4$.  Suppose first that $x = y = 0$.  It is obvious from Proposition~\ref{deterministic fredholm prop} that $X^* = Y^* = 0$ is a Nash equilibrium in $\mathscr{X}_{\text{det}}(0,[0,T])\times\mathscr{X}_{\text{det}}(0,[0,T])$.  Lemma~\ref{det is adp lm}  and Proposition~\ref{unique prop} thus yield that this is also the unique Nash equilibrium in  $\mathscr{X}(0,[0,T])\times\mathscr{X}(0,[0,T])$. 

Now we will prove that the existence of  a Nash equilibrium implies the condition  $x = y = 0$. To this end, let $(X^*,Y^*)$ be a Nash equilibrium in $\mathscr{X}(x,[0,T])\times\mathscr{X}(y,[0,T])$. 

{\it Step 1.} In the first step of the proof, we show that $X^*$ and $Y^*$ are necessarily continuous on $[0,T)$. To this end, we pick $\varepsilon>0$ and define  $\tau:=\inf\{t>0\,|\,|\Delta X^*_t|\ge\varepsilon\}\wedge T$. Then $\tau$ is a stopping time for the right-continuous filtration $(\mathscr{F}_t)_{t\ge0}$. Next, we take a sequence $(\tau_n)$ of stopping times satisfying the following three conditions: $\tau\le\tau_n\le T$;  on $\{\tau<T\}$ we have $\tau_n\downarrow\tau$ $\mathbb{P}$-a.s.;
and on $\{\tau_n<T\}$ we have $\Delta X^*_{\tau_n}=\Delta Y^*_{\tau_n}=0$   $\mathbb{P}$-a.s. The existence of such a sequence will be proved in Lemma~\ref{stopping time lemma} below.

Since the total variations of $X^*(\omega)$ and $Y^*(\omega)$  are uniformly bounded in $\omega$,  dominated convergence yields that
\begin{equation}\label{stime limit eq 1}
\lim_{n\uparrow\infty}\mathbb{E}\bigg[\int_{[0,T]}  e^{-\rho|\tau_n-s|}\,d X^*_s\bigg]=\mathbb{E}\bigg[\int_{[0,T]}  e^{-\rho|\tau-s|}\,d X^*_s\bigg].
\end{equation}
and that
\begin{equation}\label{stime limit eq 2}
\lim_{n\uparrow\infty}\mathbb{E}\bigg[\int_{[0, \tau_n)}  e^{-\rho(\tau_n-s)}\,d Y^*_s\bigg]=\mathbb{E}\bigg[\int_{[0, \tau]}  e^{-\rho(\tau-s)}\,d Y^*_s\bigg].
\end{equation}
Now let $\eta$ be as in Proposition~\ref{random fredholm prop}. 
Then  \eqref{stime limit eq 1} and \eqref{stime limit eq 2} yield  that
\begin{equation}\label{tau limit eq}
\begin{split}
\eta&=\lim_{n\to\infty}\mathbb{E}\bigg[\int_{[0,T]}  e^{-\rho|\tau_n-s|}\,d X^*_s +\int_{[0, \tau_n)}  e^{-\rho(\tau_n-s)}\,d Y^*_s +\frac1{2}\Delta Y^*_{\tau_n}+2\theta\Delta X^*_{\tau_n}\,\Big|\,\mathscr{F}_{\tau}\bigg]\\
&=\mathbb{E}\bigg[\int_{[0,T]}  e^{-\rho|\tau-s|}\,d X^*_s +\int_{[0, \tau]}  e^{-\rho(\tau-s)}\,d Y^*_s+\Big(\frac12\Delta Y^*_T+2\theta \Delta X^*_T\Big)\Ind{\{\tau=T\}}\,\Big|\,\mathscr{F}_{\tau}\bigg].
\end{split}
\end{equation}
On the other hand, taking  $\sigma=\tau$ in \eqref{Fredholm formula} yields that
\begin{equation}\label{eq sigma=tau}
\eta=\mathbb{E}\bigg[\int_{[0,T]}  e^{-\rho|\tau-s|}\,d X^*_s +\int_{[0, \tau)}  e^{-\rho(\tau-s)}\,d Y^*_s +\frac1{2}\Delta Y^*_{\tau}+2\theta\Delta X^*_{\tau}\,\Big|\,\mathscr{F}_{\tau}\bigg].
\end{equation}
By subtracting \eqref{eq sigma=tau} from \eqref{tau limit eq}, we obtain
\begin{equation*}
\Big(\frac1{2}\Delta Y^*_{\tau}-2\theta\Delta X^*_{\tau}\Big)\Ind{\{\tau<T\}}=0.
\end{equation*}
Keeping the definition of $\tau$ but otherwise reversing the roles of $X^*$ and $Y^*$ in the preceding argument gives
\begin{equation*}
\Big(\frac1{2}\Delta X^*_{\tau}-2\theta\Delta Y^*_{\tau}\Big)\Ind{\{\tau<T\}}=0.
\end{equation*}
In view of our assumption $\theta\neq1/4$ and the definition of $\tau$ we must conclude that $\tau=T$ $\mathbb{P}$-a.s. Sending $\varepsilon$ to zero along a countable sequence now yields  that $X^*$ must be continuous on $[0,T)$. Reversing the roles of $X^*$ and $Y^*$ in the entire argument gives the same result for $Y^*$.

{\it Step 2.}  We will show here that $\Delta X^*_T=\Delta Y^*_T=0$. To this end, let $(\tau_n)_{n\in\mathbb{N}}$ be a sequence of strictly positive stopping times such that $\tau_n\uparrow T$. For each $n\in\mathbb{N}$, let $\big(\sigma^n_m\big)_{m\in\mathbb{N}}$ be a sequence of stopping times such that $\sigma_m^n\ge\tau_n$ and  $\sigma^n_m\uparrow T$  as $m\uparrow\infty$. For each $\tau_n$, let $\eta_n$ be as in Proposition~\ref{random fredholm prop}. We know from the preceding step that $X^*$ and $Y^*$ are continuous on $[0,T)$, so that $\Delta Y^*_{\sigma^n_m}=\Delta X^*_{\sigma^n_m}=0$. We therefore have
\begin{equation}\label{contradic T eq1}
\begin{split}
\eta_n&=\lim_{m\uparrow\infty}\mathbb{E}\bigg[\int_{[0,T]}  e^{-\rho|\sigma^n_m-s|}\,d X^*_s +\int_{[0, \sigma^n_m)}  e^{-\rho(\sigma^n_m-s)}\,d Y^*_s \,\Big|\,\mathscr{F}_{\tau_n}\bigg]\\
&=\mathbb{E}\bigg[\int_{[0,T]}  e^{-\rho|T-s|}\,d X^*_s +\int_{[0, T)}  e^{-\rho(T-s)}\,d Y^*_s \,\Big|\,\mathscr{F}_{\tau_n}\bigg].
\end{split}
\end{equation}
On the other hand, we also have
\begin{equation}\label{contradic T eq2}
\eta_n=\mathbb{E}\bigg[\int_{[0,T]}  e^{-\rho|T-s|}\,d X^*_s +\int_{[0,T)}  e^{-\rho(T-s)}\,d Y^*_s +\frac1{2}\Delta Y^*_{T}+2\theta\Delta X^*_{T}\,\Big|\,\mathscr{F}_{\tau_n}\bigg].
\end{equation}
Comparing \eqref{contradic T eq1} and \eqref{contradic T eq2} yields that
\begin{align}\label{DeltaXT=0}
\mathbb{E}\Big[\frac12\Delta Y^*_T+2\theta\Delta X^*_T\,\Big|\,\mathscr{F}_{\tau_n}\Big]=0.
\end{align}
Reversing the roles of $X^*$ and $Y^*$ yields that\begin{align}\label{DeltaYT=0}
\mathbb{E}\Big[\frac12\Delta X^*_T+2\theta\Delta Y^*_T\,\Big|\,\mathscr{F}_{\tau_n}\Big]=0.
\end{align}
Sending $n$ to infinity and using martingale convergence together with the fact that $\sigma (\bigcup_{n=0}^{\infty}\mathscr{F}_{\tau_n})=\mathscr{F}_{T-}$ yields that we may replace $\mathscr{F}_{\tau_n}$ in \eqref{DeltaXT=0} and \eqref{DeltaYT=0} by $\mathscr{F}_{T-}$. But the fact that $X^*_T=Y^*_T=0$ implies that 
$\Delta X^*_T=-X^*_{T-}$ and $ \Delta Y^*_T=-Y^*_{T-}$ are $\mathscr{F}_{T-}$-measurable. Thus,  $\mathbb{P}$-a.s., $\frac12\Delta Y^*_T+2\theta\Delta X^*_T=0$ and $\frac12\Delta X^*_T+2\theta\Delta Y^*_T=0$. Our assertion $
\Delta X^*_T=\Delta Y^*_T=0$ now follows as in the first step of the proof by using the assumption $\theta\neq1/4$.

{\it Step 3.} We now prove $\Delta X^*_0=\Delta Y^*_0=0$. Proceeding analogously to Step 2, we let $\tau=0$ and take a sequence of deterministic times $s_n\downarrow 0$. As before, an application of Proposition~\ref{random fredholm prop} yields that 
\begin{align*}
\mathbb{E}\bigg[\int_{[0,T]}  e^{-\rho t}\,d X^*_t +\frac12\Delta Y^*_0+2\theta\Delta X^*_0\bigg]=\lim_{n\uparrow\infty}\mathbb{E}\bigg[\int_{[0,T]}  e^{-\rho |s_n-t|}\,d X^*_t +\int_{[0,s_n)}e^{-\rho(s_n-t)}\,dY^*_t\bigg].
\end{align*}
After exchanging limit and integration, we arrive at $\frac12\Delta Y^*_0=2\theta\Delta X^*_0$. Reversing the roles of $X^*$ and $Y^*$ gives $\frac12\Delta X^*_0=2\theta\Delta Y^*_0$, and as above our assumption $\theta\neq\frac14$ yields $\Delta X^*_0=\Delta Y^*_0=0$.

{\it Step 4.}  In this step, we will show  that $X^*_t=Y^*_t\text{ }\mathbb{P}$-a.s.~for all $t\in[0,T]$. To this end, we take an arbitrary $[0,T]$-valued stopping time $\tau$ and apply 
 Proposition~\ref{random fredholm prop} to obtain two $\mathscr{F}_{\tau}$-measurable random variables $\eta_1$ and $\eta_2$ such that  for all stopping times $\sigma$ with $\tau\le\sigma\le T$, 
\begin{equation}\label{continuous contradiction eq2}
\begin{split}
&\mathbb{E}\bigg[\int_{[0,T]}  e^{-\rho|\sigma-t|}\,dX^*_t+\int_{[0,\sigma)}  e^{-\rho(\sigma-t)}\,dY^*_t\,\Big|\,\mathscr{F}_{\tau}\bigg]=\eta_1,\\
&\mathbb{E}\bigg[\int_{[0,T]}  e^{-\rho|\sigma-t|}\,dY^*_t+\int_{[0,\sigma)}  e^{-\rho(\sigma-t)}\,dX^*_t\,\Big|\,\mathscr{F}_{\tau}\bigg]=\eta_2.
\end{split}
\end{equation}
Here we have used the already established continuity of $X^*$ and $Y^*$ on $[0,T]$. 
Subtracting  the first  equation in \eqref{continuous contradiction eq2} from the second one yields
$$
\mathbb{E}\bigg[\int_{[\sigma,T]}  e^{-\rho(t-\sigma)}\,d(X^*_t-Y^*_t)\,\Big|\,\mathscr{F}_{\tau}\bigg]=\eta_1-\eta_2.
$$
Setting $\sigma=T$ and using that $\Delta X^*_T=\Delta Y^*_T=0$  yields that $\eta_1-\eta_2=0$. Taking then $\sigma=\tau$ yields
$$
\mathbb{E}\bigg[\int_{[\tau,T]}  e^{-\rho t}\,d(X^*_t-Y^*_t)\,\Big|\,\mathscr{F}_{\tau}\bigg]e^{\rho\tau}=0\qquad\text{$\mathbb{P}$-a.s.}
$$
That is, 
$$
\mathbb{E}\bigg[\int_{[\tau,T]}e^{-\rho t}\,d(X^*_t-Y^*_t)\,\Big|\,\mathscr{F}_{\tau}\bigg]=0\qquad\text{$\mathbb{P}$-a.s.~for all $[0,T]$-valued stopping times $\tau$.}
$$
It follows that $M_t:=\int_{[0,t]}e^{-\rho s}\,d(X^*_s-Y^*_s)$ is a 
continuous martingale of bounded variation and hence constant in $t$. Taking derivatives with respect to $t$ yields our claim that $X^*_t=Y^*_t\text{ }\mathbb{P}$-a.s.~for all $t\in[0,T]$.
In particular, we must have that $x = X^*_{0-} = X^*_0 = Y^*_0 = Y^*_{0-} = y$.

{\it Step 5.} 
Since $X^*$ and $Y^*$ are optimal strategies, Proposition~\ref{random fredholm prop} applies. In particular, Equation \eqref{Fredholm formula} holds for $\tau = 0$ and every $\sigma = t \in [0,T]$.
Given that $X^* = Y^*$ has no jumps, we conclude that there exists a constant $\eta$ such that
\begin{equation}\label{equation fredholm ibp}
\begin{aligned}
	\eta 
		&= \mathbb{E}\Big[ 2\int_0^t e^{-\rho(t-s)} \dif X_s^* + \int_t^T e^{\rho(t-s)} \dif X_s^* \Big] \\
	&= \mathbb{E} \Big[ 2 \Big( X_{t}^* - e^{-\rho t} X_{0}^* - \rho \int_0^t e^{-\rho(t- s)} X_s^* \dif s \Big) 
		+ e^{\rho(t-T)} X_{T}^* - X_{t}^* - \rho \int_t^T e^{\rho(t-s)}X_s^* \dif s \Big] \\
	&= - 2 e^{-\rho t} x + \mathbb{E} [ X_t^*] 
		 - 2\rho \int_0^t e^{-\rho(t-s)} \mathbb{E}[X_s^*] \dif s+ \rho \int_t^T e^{\rho(t-s)} \mathbb{E}[X_s^*] \dif s
\end{aligned}
\end{equation}
for all $t \in [0,T].$
The second equality follows from Stieltjes integration by parts, the third from interchanging expectation and integration.
Define
$$
	f(t) \coloneqq \rho \int_0^t e^{-\rho(t-s)} \mathbb{E}[X_s^*] \dif s - \rho \int_t^T e^{\rho(t-s)} \mathbb{E}[X^*_s] \dif s,
	\qquad 0 \le t \le T.
$$
Notice that $f$ is continuously differentiable. Plugging in from \eqref{equation fredholm ibp}, we see that $f$ solves an ordinary differential equation:
\begin{align*}
	f'(t) &= - \rho^2  \int_0^t e^{-\rho(t-s)} \mathbb{E}[X_s^*] \dif s - \rho^2 \int_t^T e^{\rho(t-s)} \mathbb{E}[X^*_s] \dif s 
		+ 2 \rho \mathbb{E}[X_t^*] \\
	&= - \rho^2  \int_0^t e^{-\rho(t-s)} \mathbb{E}[X_s^*] \dif s - \rho^2 \int_t^T e^{\rho(t-s)} \mathbb{E}[X^*_s] \dif s  \\
	&\hphantom{{}=} {}+2\rho \Big( \eta + 2e^{-\rho t} x + 2 \rho \int_0^t e^{-\rho (t-s)} \mathbb{E} [X_s^*] \dif s - \rho \int_t^T e^{\rho(t-s)} \mathbb{E}[X_s^*] \dif s \Big) \\
	&= 3\rho f(t) + 2\rho \big(\eta + 2e^{-\rho t}x\big)
\end{align*}
for all $t \in [0,T]$.
Define further
$$
	g(t) \coloneqq \mathbb{E}[X_t^*] - \rho \int_t^T e^{\rho(t-s)} \mathbb{E}[X_s^*] \dif s,
	\qquad 0 \le t \le T.
$$
Then \eqref{equation fredholm ibp} is equivalent to 
\begin{equation}\label{equation fredholm f and g}
	g(t) = \eta + 2e^{-\rho t} x +2 f(t)
\end{equation}
for all $t \in [0,T],$
showing that $g$ is differentiable and solves an ordinary differential equation as well:
\begin{align*}
	g'(t) &= - 2 \rho e^{-\rho t} x + 6\rho f(t) + 4\rho (\eta + 2e^{-\rho t} x)\\
	&= -2\rho e^{-\rho t} x + 3\rho \big( g(t) - \eta - 2e^{-\rho t}x \big) + 4\rho (\eta + 2e^{-\rho t} x) \\
	&= 3\rho g(t) + \rho \eta
\end{align*}
for all $t \in [0,T].$
It follows from Step 2 and Definition \ref{def ad cont strategy} that $g(T) = \mathbb{E}[X_T^*] = 0.$
Furthermore, letting $t=T$ in \eqref{equation fredholm f and g} yields $f(T) = - ( \eta + 2 e^{-\rho T} x) / 2.$
Solving the ordinary differential equations above with these boundary conditions shows
$$
	f(t) = \frac{e^{-3\rho(T-t)}-4}{6}\,\eta - e^{-\rho t} x
	\qquad\text{ and }\qquad
	g(t) = \frac{e^{-3\rho(T-t)} - 1}{3}\,\eta,
	\qquad 0 \le t \le T.
$$
According to Step 4, $\mathbb{E}[X_0^*] =x$, hence
$$
	x = g(0) - f(0) = \frac{e^{-3\rho T} + 2}{6} \,\eta + x.
$$
We conclude $\eta = 0,$ which implies $g(t) = 0$ for all $t \in [0,T].$
Now notice that
$$
	\frac{d}{dt} \Big[ \int_t^T e^{\rho(t-s)} \mathbb{E}[X_s^*] \dif s \Big]
		= - \mathbb{E}[X_t^*] + \rho \int_t^T e^{\rho(t-s)} \mathbb{E}[X_s^*] \dif s
		= - g(t)
		= 0
$$
 for all $t \in [0,T].$
It follows that 
$$
	\int_t^T e^{\rho(t-s)} \mathbb{E}[X_s^*] \dif s = \int_T^T e^{\rho(t-s)} \mathbb{E}[X_s^*] \dif s = 0,
$$
and therefore 
$
	\mathbb{E}[X_t^*] = g(t) = 0
$
for all $t \in [0,T].$
In particular, $x = \mathbb{E}[X_0^*] = 0.$
\end{proof}

\begin{lemma}\label{stopping time lemma} 
Let $X\in\mathscr{X}(x,[0,T])$, $Y\in\mathscr{X}(y,[0,T])$, and $\tau$ be a stopping time with $\tau\le T$. Then there exists a sequence $(\tau_n)$ of stopping times satisfying the following three conditions: $\tau\le\tau_n\le T$;  on $\{\tau<T\}$ we have $\tau_n\downarrow\tau$ $\mathbb{P}$-a.s.;
and on $\{\tau_n<T\}$ we have $\Delta X^*_{\tau_n}=\Delta Y^*_{\tau_n}=0$   $\mathbb{P}$-a.s.  \end{lemma}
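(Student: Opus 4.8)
The plan is to exploit the fact that an admissible strategy, having bounded total variation, can have at most countably many jumps, and that for each fixed $\omega$ the size of the jumps after time $\tau(\omega)$ that are larger than a given threshold is finite. First I would fix, for each $k\in\mathbb{N}$, the (random) set of times in $(\tau,T)$ at which $|\Delta X_t|\ge 1/k$ or $|\Delta Y_t|\ge 1/k$; since the total variations of $X(\omega)$ and $Y(\omega)$ are uniformly bounded, this set is finite for every $\omega$, so we may enumerate its elements in increasing order and in particular extract the \emph{first} such time, call it $\rho_k$. A standard measurability argument — writing $\{\rho_k>t\}$ in terms of the total variation of $X$ and $Y$ on $(\tau,t]$, which is an adapted, right-continuous (indeed càdlàg) increasing process — shows that each $\rho_k$ is a stopping time for the right-continuous filtration $(\mathscr{F}_t)_{t\ge0}$, with $\tau<\rho_k\le T$ on $\{\tau<T\}$ and $\rho_k=T$ on $\{\tau=T\}$.

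Next I would note that, because $X(\omega)$ and $Y(\omega)$ have only countably many jumps and bounded variation, one has $\rho_k\downarrow\tau$ on $\{\tau<T\}$ as $k\uparrow\infty$: indeed, if infinitely many $\rho_k$ stayed bounded away from $\tau(\omega)$, there would be a sequence of jump times $t_k\downarrow$ some $s>\tau(\omega)$ with $|\Delta X_{t_k}|+|\Delta Y_{t_k}|\to 0$ but still, for each fixed threshold, only finitely many can exceed it, forcing $\rho_k\to\tau(\omega)$; I would make this precise using that $\sum_t(|\Delta X_t|+|\Delta Y_t|)<\infty$. Then I would define $\tau_n$ by choosing, on $\{\tau<T\}$, a point strictly between $\tau$ and $\rho_n$ at which neither $X$ nor $Y$ jumps — for instance, since the jump set of $X$ and $Y$ restricted to the interval $(\tau,\rho_n)$ is countable, the interval $(\tau,\rho_n)$ contains continuity points of both, and one can select one measurably, e.g.\ by taking $\tau_n$ to be the infimum of the jump-free dyadic rationals in $(\tau,\rho_n)$, or more simply by setting $\tau_n:=(\tau+\rho_n)/2$ on the (full-measure) event that this midpoint is a continuity point and adjusting on the remaining null-ish set; on $\{\tau=T\}$ set $\tau_n:=T$. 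This gives $\tau\le\tau_n\le T$, $\tau_n\downarrow\tau$ on $\{\tau<T\}$, and $\Delta X_{\tau_n}=\Delta Y_{\tau_n}=0$ whenever $\tau_n<T$.

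The main obstacle is the \emph{measurability} of the construction: one must be careful that the "first jump of size $\ge 1/k$ after $\tau$" and the subsequently chosen jump-free point are genuinely stopping times and not merely $\omega$-wise defined objects. The cleanest route is to work with the increasing process $t\mapsto \mathrm{Var}_{(\tau,t]}(X)+\mathrm{Var}_{(\tau,t]}(Y)$, which is adapted and càdlàg, express $\rho_k$ as a début of a progressively measurable set, invoke the début theorem under the usual conditions (right-continuity of $(\mathscr{F}_t)$ suffices together with completeness, which we may assume without loss of generality by enlarging the filtration), and then obtain $\tau_n$ from $\rho_n$ by a further début-type argument applied to the set of continuity points. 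Once measurability is secured, all the remaining assertions follow from the elementary facts about functions of bounded variation recalled above, so the lemma is proved.
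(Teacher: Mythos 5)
There is a genuine gap, and it sits exactly at the point you yourself flag as ``the main obstacle'': the measurable selection of a jump-free time strictly between $\tau$ and $\rho_n$. None of the devices you propose actually accomplishes it. The midpoint $(\tau+\rho_n)/2$ is a random time, and the event that it is a continuity point of $X$ and $Y$ is \emph{not} of full measure in general (the jump times are themselves random and can correlate with $\tau$ and $\rho_n$), so there is no ``null-ish set'' to adjust on. The ``infimum of the jump-free dyadic rationals in $(\tau,\rho_n)$'' is just $\tau$, because the complement of the jump set is dense; and more generally any d\'ebut-type argument applied to the set of continuity points after $\tau$ returns (something arbitrarily close to) $\tau$ itself, and an infimum of continuity points need not be a continuity point. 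So after introducing the $\rho_k$ you are left with precisely the problem you started with. (Two smaller issues: your claim that $\rho_k\downarrow\tau$ is false when $X$ and $Y$ happen to be continuous on an interval $(\tau,\tau+\delta)$ -- then every $\rho_k\ge\tau+\delta$ -- although this is harmless for the intended construction; and the stopping-time property of $\rho_k$ requires either the d\'ebut theorem under completeness, which the paper does not assume, or a direct argument you have not supplied.)

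The paper's proof avoids all of this by making the offset from $\tau$ \emph{deterministic}, so that no d\'ebut theorem and no measurable selection are needed: with $Z_t$ the sum of the increasing parts of $X$ and $Y$ shifted to start at $\tau$, and $\phi$ a bounded strictly increasing bijection of $[0,\infty)$ onto $[0,1)$, the function $f(t)=\mathbb{E}[\phi(Z_t)]$ is a deterministic, increasing, right-continuous function and hence has only countably many discontinuities; choosing deterministic continuity points $t_n\in[2^{-n-1},2^{-n})$ of $f$ and using dominated convergence together with the strict monotonicity of $\phi$ gives $\Delta Z_{t_n}=0$ $\mathbb{P}$-a.s., hence $\Delta X_{\tau+t_n}=\Delta Y_{\tau+t_n}=0$ $\mathbb{P}$-a.s., and $\tau_n:=(\tau+t_n)\wedge T$ is trivially a stopping time. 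If you want to salvage your route, you would need to replace your selection step by an argument of exactly this Fubini/expectation type; as written, the proof is incomplete.
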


\begin{proof} For each $\omega$, the functions $t\mapsto X_t(\omega)$ and $t\mapsto Y_t(\omega)$ are of bounded variation and hence can be written as differences of increasing functions $X_t^\pm(\omega)$, $Y_t^\pm(\omega)$. That is, $X_t(\omega)=x+X^+_t(\omega)-X^-_t(\omega)$ and $Y_t(\omega)=y+Y^+_t(\omega)-Y^-_t(\omega)$. As a matter of fact, $X^+_t(\omega)$ can be taken as the total variation of $X(\omega)$ over the interval $[0,t]$, and $X^-_t(\omega)=x+X^+_t(\omega)-X_t(\omega)$. Then both $X^+$ and $X^-$ are adapted, and they are also right-continuous. Now take a strictly increasing function $\phi$ mapping $[0,\infty)$ bijectively onto $[0,1)$. Then, for 
$Z_t:= X^+_{\tau+t}+X^-_{\tau+t}+Y^+_{\tau+t}+Y^-_{\tau+t}$, the function 
$f(t):=\mathbb{E} [\,\phi(Z_t)\,]
$ 
is right-continuous and increasing and so has only countably many discontinuity points (recall from Definition~\ref{def ad cont strategy} that $X_t$ and $Y_t$ are defined for all $t\ge0$). Hence, for each $n$, there exists $t_n\in [2^{-n},2^{-n-1})$ such that $f(t_n)-f(t_n-)=0$. Dominated convergence implies that $f(t-)=\mathbb{E} [\,\phi(Z_{t-})\,]$ and in turn that $\Delta Z_{t_n}=0$ $\mathbb{P}$-a.s. By construction, this entails that also $\Delta X_{t_n}=\Delta Y_{t_n}=0$ $\mathbb{P}$-a.s. Hence, letting $\tau_n:=(\tau+t_n)\wedge T$ yields the desired sequence of stopping times.\end{proof}

\subsection{Proof of Corollary~\ref{cont costs cor}}

	Define the two functions
	\begin{align*}
		\varphi_\pm \left(t\right) := -\rho \left(\frac{3\left(x+y\right)\left(e^{3\rho T}+2e^{3\rho t}\right)}{2e^{3\rho T}\left(3\rho T+5\right)-1}\pm \frac{x-y}{2\left(\rho T+1\right)}\right).
	\end{align*}
	Let $\varphi_\pm '$ denote the first derivative of $\varphi_\pm$.
	We see that $\,d  X^*_t = \varphi_+\left(t\right)\,d  t$ and $\,d  Y^*_t = \varphi_- \left(t\right) \,d  t$ on $(0,T)$.
	In addition,
	\begin{align*}
		\Delta X^*_0 = \Delta Y^*_0 = -\frac{3\left(x+y\right)(\left.2e^{3\rho T}+1\right.)}{2\left(2e^{3\rho T}\left(3\rho T+5\right)-1\right)},\hspace{1cm}\text{ and }
		\Delta X^*_T = - \Delta Y^*_T = -\frac{x-y}{2\left(\rho T+1\right)}.
	\end{align*}
	It holds that
	\begin{align*}
		&\hphantom{{}=}\int_{\left[0,T\right]} \int_{\left[0,T\right]} e^{-\rho \left|t-s\right|} \,d  X^*_s \,d  X^*_t\\
		&= \int_0^T \int_0^T  e^{-\rho\left|t-s\right|} \varphi_+'\left(s\right) \varphi_+' \left(t\right) \,d  s \,d  t+ \left(\Delta X^*_0\right)^2 + 2e^{-\rho T} \Delta X^*_0 \Delta X^*_T \nonumber\\
		&\hspace{.5cm}{}+ \left(\Delta X^*_T\right)^2+ 2\Delta X^*_0 \int_0^T  e^{-\rho t} \varphi_+'\left(t\right) \,d  t + 2 \Delta X^*_T \int_0^T  e^{-\rho\left(T-t\right)} \varphi_+'\left(t\right) \,d  t,\nonumber
	\end{align*}
	and (using the fact that $\Delta Y^*_0 = \Delta X^*_0$):
	\begin{align*}
		\int_{\left[0,T\right]} \int_{\left[0,t\right)} e^{-\rho \left(t-s\right)} \,d  Y^*_s \,d  X^*_t
		&= \int_0^T  \int_0^t e^{-\rho\left(t-s\right)} \varphi_-'\left(s\right) \varphi_+'\left(t\right)\,d  s \,d  t + e^{-\rho T} \Delta X^*_0 \Delta X^*_T \\
		&\hspace{.5cm}{}+ \Delta X^*_0 \int_0^T  e^{-\rho t} \varphi_+'\left(t\right)\,d  t + \Delta X^*_T \int_0^T  e^{-\rho\left(T-t\right)} \varphi_-'\left(t\right) \,d  t.\nonumber
	\end{align*}
The computation of the integrals is straightforward:
	\begin{align*}
		&\hphantom{{}=}\int_0^T  e^{-\rho \left|t-s\right|} \varphi_+'\left(s\right) \,d  s\\ 
		&= \frac{3\left(x+y\right)\left(-4e^{3\rho T}+e^{3\rho t}+e^{-\rho t}\left(2e^{3\rho T}+1\right)\right)}{2\left(2e^{3\rho T}\left(3\rho T+5\right)-1\right)} + \frac{\left(x-y\right)\left(e^{-\rho\left(T-t\right)}+e^{-\rho t}-2\right)}{2\left(\rho T+1\right)},\nonumber
	\end{align*}
	and then:
	\begin{align*}
		&\hphantom{{}=}\frac{1}{2} \int_0^T \int_0^T  e^{-\rho \left|t-s\right|} \varphi_+'\left(s\right)\varphi_+'\left(t\right)\,d  s \,d  t\\
		&= \frac{3\left(x+y\right)^2\left(3\rho T e^{6\rho T}-e^{3\rho T}+1\right)}{\left(2e^{3\rho T}\left(3\rho T+5\right)-1\right)^2}+ \frac{\left(x^2-y^2\right)\left(e^{3\rho T}\left(12\rho T-7\right)+6e^{2\rho T}+3e^{-\rho T}-2\right)}{4\left(2e^{3\rho T}\left(3\rho T+5\right)-1\right)\left(\rho T+1\right)}\nonumber\\
		&\hspace{.5cm}{}+ \frac{\left(x-y\right)^2\left(e^{-\rho T}+\rho T-1\right)}{4\left(\rho T+1\right)^2}.\nonumber
	\end{align*}
	Similarly,
	\begin{align*}
		&\hphantom{{}=}\int_0^t e^{-\rho\left(t-s\right)} \varphi_-'\left(s\right) \,d  s\\
		&= \frac{3\left(x+y\right)\left(-2e^{3\rho T}-e^{3\rho t}+e^{-\rho t}\left(2e^{3\rho T}+1\right)\right)}{2\left(2e^{3\rho T}\left(3\rho T+5\right)-1\right)} + \frac{\left(x-y\right)\left(1-e^{-\rho t}\right)}{2\left(\rho T+1\right)},\nonumber
	\end{align*}
	and:	
	\begin{align*}
		&\hphantom{{}=}\int_0^T  \int_0^t  e^{-\rho \left(t-s\right)} \varphi_-'\left(s\right) \varphi_+'\left(t\right) \,d  s \,d  t\\
		&= \frac{3\left(x+y\right)^2\left(3\rho T e^{6\rho T}-e^{3\rho T}+1\right)}{\left(2e^{3\rho T}\left(3\rho T+5\right)-1\right)^2} + \frac{\left(x^2-y^2\right)\left(-3e^{3\rho T}+6e^{2\rho T}+3e^{-\rho T}-6\right)}{4\left(2e^{3\rho T}\left(3\rho T+5\right)-1\right)\left(\rho T+1\right)}\nonumber\\
		&\hspace{.5cm}{}-\frac{\left(x-y\right)^2\left(e^{-\rho T}+\rho T-1\right)}{4\left(\rho T+1\right)^2}.\nonumber
	\end{align*}
	Finally, we see that
	\begin{align*}
	\int_0^T  e^{-\rho t}\varphi_+'\left(t\right) \,d  t &= \Delta X^*_0 \frac{2e^{3\rho T}-2}{2e^{3\rho T}+1} + \Delta X^*_T \left(1-e^{-\rho T}\right),
	\end{align*}
	and
	\begin{align*}
	\int_0^T  e^{-\rho\left(T-t\right)} \left(\varphi_+'\left(t\right) + \varphi_-'\left(t\right)\right) \,d  t &= 2 \Delta X^*_0 \frac{3e^{3\rho T}-2e^{2\rho T}-e^{-\rho T}}{2e^{3\rho T}+1}.
	\end{align*}
	Adding all components and simplifying yields the assertion.	 \qed

	 \parskip-0.5em\renewcommand{\baselinestretch}{0.9}\normalsize
\bibliography{MarketImpact}{}
\bibliographystyle{abbrv}

\end{document}